\newcommand\reallywidehat[1]{%
\savestack{\tmpbox}{\stretchto{%
  \scaleto{%
      \scalerel*[\widthof{\ensuremath{#1}}]{\kern-.6pt\bigwedge\kern-.6pt}%
          {\rule[-\textheight/2]{1ex}{\textheight}}
            }{\textheight}%
            }{0.5ex}}%
            \stackon[1pt]{#1}{\tmpbox}%
            }
\newcommand{\supp}{\text{supp}}
\newcommand{\wt}[1]{\widetilde{#1}}
\newcommand{\Ot}[1]{\widetilde{O}}
\newcommand{\wh}[1]{\widehat{#1}}
\newcommand{\abs}[1]{|#1|}
\newcommand{\eps}{\varepsilon}
\newcommand{\R}{\mathbb{R}}
\newcommand{\E}[2][]{\operatorname*{\mathbb{E}}_{#1 }\left\lbrack #2 \right\rbrack}
\newcommand{\Pb}[2][]{\operatorname*{\mathbb{P}}_{#1 }\left\lbrack #2 \right\rbrack}
\newcommand{\e}{{\varepsilon}}
\DeclareMathOperator*{\argmax}{arg\,max}
\DeclareMathOperator{\promise}{\mathtt{PromiseCounting}}
\DeclareMathOperator{\nopromise}{\mathtt{Counting}}
\DeclareMathOperator{\var}{\text{Var}}
\newtheorem{theorem}{Theorem}
\newtheorem*{theorem*}{Theorem}
\newtheorem{lemma}[theorem]{Lemma}
\newtheorem{definition}[theorem]{Definition}
\newtheorem*{definition*}{Definition}
\newtheorem*{lemma*}{Lemma}
\newtheorem{corollary}[theorem]{Corollary}
\newtheorem*{corollary*}{Corollary}
\newtheorem{claim}[theorem]{Claim}
\newtheorem*{claim*}{Claim}
\title{The Sketching Complexity of Graph and Hypergraph Counting}
\author{John Kallaugher\thanks{This work was done in part while the authors were visiting the Simons Institute for the Theory of Computing.}\\jmgk@cs.utexas.edu\\UT Austin \and Michael Kapralov\thanks{Supported in part by ERC Starting Grant 759471-Sublinear.}\\michael.kapralov@epfl.ch\\EPFL \and Eric Price\footnotemark[1]\\ecprice@cs.utexas.edu\\UT Austin}
	\gdef\xxxmark{%
		\expandafter\ifx\csname @mpargs\endcsname\relax 
		\expandafter\ifx\csname @captype\endcsname\relax 
		\marginpar{xxx}
		\else
		xxx 
		\fi
		\else
		xxx 
		\fi}
	\gdef\xxx{\@ifnextchar[\xxx@lab\xxx@nolab}
	\long\gdef\xxx@lab[#1]#2{{\bf [\xxxmark #2 ---{\sc #1}]}}
	\long\gdef\xxx@nolab#1{{\bf [\xxxmark #1]}}
\begin{document}

\begin{titlepage}
  
\maketitle

\begin{abstract}
Subgraph counting is a fundamental primitive in graph processing, with applications in social network analysis (e.g.,\ estimating the clustering coefficient of a graph), database processing and other areas. The space complexity of subgraph counting has been studied extensively in the literature, but many natural settings are still not well understood. In this paper we revisit the subgraph (and hypergraph) counting problem in the sketching model, where the algorithm's state as it processes a stream of updates to the graph is a linear function of the stream. This model has recently received a lot of attention in the literature, and has become a standard model for solving dynamic graph streaming problems. 

In this paper we give a tight bound on the sketching complexity of counting the number of occurrences of a small subgraph $H$ in a bounded degree graph $G$ presented as a stream of edge updates. Specifically, we show that the space complexity of the problem is governed by the fractional vertex cover number of the graph $H$. Our subgraph counting algorithm implements a natural vertex sampling approach, with sampling probabilities governed by the vertex cover of $H$. Our main technical contribution lies in a new set of Fourier analytic tools that we develop to analyze multiplayer communication protocols in the simultaneous communication model, allowing us to prove a tight lower bound. We believe that our techniques are likely to find applications in other settings. Besides giving tight bounds for all graphs $H$, both our algorithm and lower bounds extend to the hypergraph setting, albeit with some loss in space complexity.
\end{abstract}

\thispagestyle{empty}
\end{titlepage}

\section{Introduction}

Triangle counting is one of the most well-studied problems in
streaming graph algorithms.  In the standard ``turnstile'' version of
this problem, one maintains a small-space ``sketch'' of a
graph under a stream of insertions and deletions of edges, and at the
end of the stream outputs a $(1 \pm \eps)$ multiplicative approximation
to the number of triangles $T$ in the graph; unless otherwise
specified, we assume $\eps$ to be a small constant.

Turnstile streaming algorithms are almost invariably constructed as
\emph{linear} sketches, where the sketch maintained is a linear
function of the indicator vector of edges; this makes it easy to
process insertions and deletions.  Linear sketches are also useful in
other settings such as distributed computation, since sketches can be
merged.  There is evidence that any turnstile streaming algorithm can
be efficiently implemented using linear
sketches~\cite{li2014turnstile,ai2016new}, although these results do
not quite apply to graph streams.

For worst-case graphs, counting triangles is impossible in sublinear
space: $\Omega(m)$ space is required to distinguish between a
graph with $0$ triangles and one with $T = \Omega(m)$
triangles~\cite{BOV13}.  However, the hard case is degenerate in that
all the triangles share a common edge.  If at most $\Delta_E$
triangles share any single edge, then this bound becomes $\Omega(m\Delta_E/T)$.
In~\cite{PT12} an algorithm was given that
counts triangles with
\[
  O\left(m \left(\frac{1}{\sqrt{T}} + \frac{\Delta_E}{T}\right)\right)
\]
space, where the $m/\sqrt{T}$ term improves upon a $m/T^{1/3}$ term
in~\cite{TKMF09,TKM11}. In~\cite{KP17} this was shown to be tight for
worst-case graphs, but the hard case is again degenerate: all the
triangles share a common vertex.  If $\Delta_V$ bounds the maximum number
of triangles to share a vertex, this bound becomes $m\sqrt{\Delta_V}/T$.
The algorithm in~\cite{KP17} 
requires
\[
  \Ot{}\left(m \left(\frac{1}{T^{2/3}} + \frac{\sqrt{\Delta_V}}{T} + \frac{\Delta_E}{T}\right)\right)
\]
space.  A natural question is whether this $\frac{m}{T^{2/3}}$ is
necessary.

\subsection{Linear Sketching}
Suppose we have a problem of the following form: we receive a vector $v \in \mathbb{Z}^n$ as a series of updates $(v_i)_{i = 1}^t$, so $v = \sum_{i = 1}^t v_i$, and we want to approximate some function $f(v)$. A \emph{linear sketch} for this problem is a linear transformation $A \in \mathbb{Z}^{n \times d}$ and a post-processing function $g$, so that $g\left(Av\right)$ approximates $f(v)$ (with the exact definition of ``approximates'' depending on the problem). The space complexity of such a sketch is the space needed to store the sketch vector, which is $\Theta(d \log n)$ bits if the maximum size of entries of $v$ is bounded by some $M = poly(n)$ (this holds even if intermediate stages of the stream exceed $M$, as the sketch vector may be stored $\bmod\; M$).

In \cite{li2014turnstile}, it was shown that any turnstile streaming algorithm (an algorithm that can solve a problem of the above form when the updates $v_i$ are allowed to be negative, but that is allowed to maintain arbitrary state) can be converted into a linear sketch with only logarithmic loss in space complexity. In \cite{ai2016new}, this result was extended to \emph{strict} turnstile streaming algorithms, that is algorithms which require that $\sum_{i = 1}^{s}v_i \ge 0$ for each $s \le t$.

These results come with two important caveats. Firstly, they do not necessarily give an $O(d \log n)$-space streaming algorithm, as neither the linear transformation $A$ nor the post-processing function $g$ is known to be calculable in $O(d \log n)$ space. However, our sketching lower bounds will be based on communication complexity arguments that bound the size of the sketch vector, circumventing this issue.

Secondly, \cite{li2014turnstile} assumes that the turnstile algorithm in question works regardless of the value of the partial sums $\sum_{i = 1}^{s} v_i$, while \cite{ai2016new} only requires that these sums be non-negative. Therefore, our lower bounds do not rule out the possibility of a turnstile algorithm that requires every partial sum to form a valid graph (i.e.\ edges can neither be deleted before they arrive, nor can they arrive multiple times before being deleted). However, existing turnstile algorithms do not typically require this property---in particular, any sampling-based algorithm, whether adaptive or non-adaptive, can handle it by the addition of a counter to each stored edge.

\subsection{Our Results}

\paragraph{Lower bound.} We show that any linear
sketching algorithm for triangle counting requires
$\Omega(\frac{m}{T^{2/3}})$ space, even for constant degree graphs.
Such a result is not true for the insertion-only model, where
triangles can be counted in graphs with max degree $d$ in $O(m d^2/T)$ space by subsampling
the edges at rate $d/T$ and storing all subsequent edges that touch
the sampled edges~\cite{JG05}.

Our result generalizes to counting the number of copies of any
constant-size connected subgraph $H$.  Such problems appear, for
example, in estimating the size of database joins when planning
queries~\cite{atserias2008size}.  We show that the linear sketching complexity of
distinguishing between $0$ and $T$ copies of $H$ in constant-degree
graphs is at least
\[
  \Omega\left(\frac{m}{T^{1/\tau}}\right)
\]
where $\tau$ is the fractional vertex cover number of $H$, the minimum value such that 
there exists $f : V(H) \rightarrow \lbrack 0, 1\rbrack$ with $\sum_{v \in V(H)} f(v) \le \tau$
and $f(u) + f(v) \ge 1$ for all $uv \in E(H)$.

\paragraph{Upper bound.}  We also give a matching upper bound: by
subsampling the vertices with probabilities dependent on their weight
in the fractional vertex cover, we give an algorithm that estimates
$T$ using $O(m/T^{1/\tau})$ words of space, as long as the graph has
constant degree.  Additionally, the constant-degree restriction can be
lifted for many graphs: if an optimal fractional vertex cover of $H$
can place nonzero weight on every vertex (as occurs, for example, if
$H$ is a cycle) then the algorithm works for degree up to $T^{1/(2\tau)}$
graphs.

\paragraph{Hypergraph counting.}  Both our upper and lower bounds
extend to counting hypergraphs $H$, but the exponent on $T$ no longer
matches for all hypergraphs.  The upper bound remains
$O(m/T^{1/\tau})$, while the lower bound becomes $O(m/T^{1/\mu})$ for
an exponent $\mu$ that equals the fractional vertex cover number $\tau$ on
many hypergraphs but not all.

All of our results extend to $\eps \ll 1$; the full statements of these results
are given in Theorem \ref{thm:hypergraphs-ub} for the upper bound, Theorem \ref{thm:graphsketching}
for the general lower bound, and Corollary \ref{cor:hypographlb} for the
tight lower bound specific to non-hypergraphs.

\subsection{Sampling and Sketching}
Our upper bounds will take the form of \emph{non-adapative} sampling algorithms. By ``sampling algorithm'' we mean that the only state maintained between updates is a subset of the input edges, and by ``non-adaptive'' we mean that the probability of keeping an edge does not depend on which other edges have been seen so far in the stream\footnote{Note that this does not mean that the edges are sampled \emph{independently} of one another---for instance, if we choose a vertex at random and keep all edges incident on that vertex, the event that we keep the edge $uv$ is independent of whether the edge $vw$ is present in the stream, but it is not independent of the event that that we keep the edge $vw$.}.

Non-adaptive sampling algorithms may be modifed into linear sketching algorithms by the use of $L_0$-sampler sketches. An $L_0$-sampler sketch is a linear sketch which, if $v$ is the frequency vector of the input stream, returns a non-zero co-ordinate of $v$, chosen uniformly at random (more generally, an $L_p$-sampler samples $v_i$ with probability proportional to $|v_i|^p$). A linear sketching algorithm for this problem was first presented in \cite{CMR05}, while \cite{MW10} defined $L_p$ sampling and gave algorithms for all $p \in \lbrack 0, 2\rbrack$. 

In \cite{JST11}, it was shown that, if the set to be sampled from has size $n$ and the sample is required to be within $\delta$ of uniform for some constant $\delta$, the space required is exactly $\Theta(\log^2 n)$. In \cite{KNP+17}, the optimal bound in terms of $n$ and $\delta$ was shown to be $\Theta\left(\min\left(n, \log\left(1/\delta\right)\log^2\left(\frac{n}{\log(1/\delta)}\right)\right)\right)$.

Therefore, as our sketching lower bound and sampling upper bounds match up to polylog factors, it follows that both are themselves tight up to polylog factors.

\subsection{Our Techniques}

The core of our lower bound proof is a new set of Fourier analytic techniques for analyzing multiplayer simultaneous communication protocols. Our approach is inspired by the Fourier analytic analysis of the Boolean Hidden Matching problem, but develops several new ideas that we think are likely to find applications beyond subgraph counting lower bounds. We now proceed to describe the Boolean Hidden Matching problem, the main ideas behind its analysis, and then describe our techniques. \if 0 Our linear sketching lower bound relies on showing a communication
complexity lower bound in the simultaneous message model.  Our proof
of the communication complexity lower bound blends Fourier analysis
with a combinatorial argument in a way that we believe is of
independent interest.
\fi 

The Boolean Hidden Matching problem of Gavinsky et al~\cite{GKKRd07} is a two player one way communication problem where Alice holds a binary string $x\in \{0, 1\}^n$ that she compresses to a message $m$ of $s$ bits and sends to Bob. Bob, besides the message from Alice, gets two pieces of input: a uniformly random matching of size $n/10$ on the set $\{1, 2, \ldots, n\}$, along with a vector of binary labels $w_e\in \{0 ,1\}, e\in M$. In the YES case of the problem the vector $w$ satisfies $w=Mx$, and in the NO case of the problem the vector $w$ satisfies $w=Mx\oplus 1^{|M|}$, where we abuse notation somewhat by letting $M$ denote the edge incidence matrix of the matching $M$ (where each column corresponds to a vertex, and each edge to a row, with ones in the two co-ordinates corresponding to the vertices the edge is incident to). 

The Boolean Hidden Matching problem and the related Boolean Hidden Hypermatching problem of Verbin and Yu~\cite{VY11} have been very influential in streaming lower bounds: streaming problems that have recently been shown to admit reductions from Boolean Hidden (Hyper)Matching include approximating maximum matching size~\cite{EsfandiariHLMO15,AssadiKLY15, AssadiKL17}, approximating MAX-CUT value~\cite{KKS15,kogan-krauthgamer14, KapralovKSV17}, subgraph counting~\cite{VY11, KP17}, and approximating Schatten $p$-norms~\cite{LiW16}, among others. Most recent streaming lower bounds (with the exception of~\cite{KapralovKSV17}) use reductions from Boolean Hidden (Hyper)Matching, without modifying the Fourier analytic techniques involved in the proof. 

In this paper we develop several new Fourier analytic ideas that go beyond the Boolean Hidden Matching problem in several directions:

\paragraph{Analyzing simultaneous multiplayer communication.} In the Boolean Hidden Matching problem Alice is the only player transmitting a message, but our communication problem features simultaneous communication from multiple players to a referee. We show how to use the convolution theorem from Fourier analysis to analyze the effect of combining information sent by multiple players in the one way simultaneous communication model. While the technique of combining information from two players using the convolution theorem was recently used by~\cite{KapralovKSV17} to analyze a three-player game, to the best of our knowledge our work is the first to analyze games with an arbitrary number of players in this manner.

\paragraph{Analyzing a promise version of a communication problem via Fourier analysis.} While in the Boolean Hidden Matching problem Alice's string is sampled from the uniform distribution, in our problem multiple players receive correlated binary strings (conditioned on a linear constraint over the binary field). It turns out that this specific form of conditioning lends itself naturally to a Fourier analytic approach due to the linearity of the constraints imposed on the strings, and analysing such correlated settings gives us tight bounds on the subgraph counting version of our problem.

\paragraph{Sharing $M$ among the players.} In the Boolean Hidden Matching problem, only Bob has the linear function $M$, while Alice must send her message based only on $x$. In our communication problem each (hyper)edge in $H$ corresponds to a player, and every player receives a linear sized set of edges, together with parities of a hidden string $x$ over these edges. Similarly to the Boolean Hidden Matching problem, these parities are either correct (YES case) or flipped simultaneously. A crucial new component, however, is the fact that instead of $M$ being held by the recipient alone, each player holds part of it, and the parts the players hold are correlated. 

Analyzing such correlations is in fact necessary even if one only wants to prove a simple lower bound on the space complexity of `sampling-type' algorithms for triangle counting. We show how to analyze such correlations when $H$ is an arbitrary hypergraph through a purely combinatorial lemma. The weights lemma (Section~\ref{sec:weights}) is primarily concerned with the ability of the players to co-ordinate ``weight'' functions. This can be used to lower bound the space complexity of sampling-based protocols---we apply it to bound the Fourier coefficients of the referee's posterior distribution on the players' inputs when the players send \emph{arbitrary} messages.

\subsection{Related Work}

The past decade has seen a large amount of work on the space complexity of graph problems in the streaming model of computation (see, e.g. the recent survey by McGregor~\cite{McGregor17}). The semi-streaming model of computation, which allows $\tilde O(n)$ space to process a graph on $n$ vertices, has been extensively studied, with space efficient algorithms known for many fundamental graph problems such as spanning trees~\cite{agm}, sparsifiers~\cite{anh-guha, kl11, agm_pods, KapralovLMMS14}, 
matchings~\cite{guha-ahn-1, guha-ahn-3, gkk:streaming-soda12, kap13, GO12, 
HRVZ13, Konrad15, AssadiKLY15}, spanners~\cite{agm_pods, KW14}. Beyond the semi-streaming model, 
it has recently been shown that it is sometimes possible to approximate the {\em cost} of 
the solution to a graph problem in the streaming model even when the amount of space available does not suffice to store the vertex set of the graph 
(e.g.~\cite{KKS14, EsfandiariHLMO15, McGregor15, CormodeJMM17, 0001V18, PengS18}).  The problem of designing lower bounds for graph sketches has received a lot of attention recently due to the success of graph sketching as an approach to solving dynamic graph streaming problems (e.g.,~\cite{LiNW14, AssadiKLY15, AssadiKL17}). Similarly to our approach, such lower bounds normally make use of the simultaneous communication model.

\paragraph{Subgraph counting.}
The streaming subgraph counting problem was introduced in~\cite{BKS02} for the case where $H$ is a triangle. This was followed by alternative algorithms in~\cite{BFLMS06,JG05}. 
The lower bounds in~\cite{BOV13} and~\cite{KP17} were achieved by reductions to one-way communication complexity problems, the indexing problem and the Boolean Hidden Matching problem, respectively. Triangle detection has also been studied as a pure communication problem, for instance \cite{FGO17}, as well as in the adjacency-list~\cite{KMPT10,BFLMS06, MVV16}, multi-pass~\cite{BOV13, CJ14}, and query models~\cite{ELRS15}.

Work on counting non-triangle subgraphs includes \cite{BFLS07}, which presented an algorithm for counting copies of $K_{3,3}$, \cite{BDGL08}, which studied subgraphs of size $3$ and $4$, \cite{MMPS11}, which studied cycles of arbitrary size, and \cite{KMSS12}, which studied arbitrary subgraphs. The problem has also been studied in the query~\cite{JSP15, ANRD15, PSV17} and distributed~\cite{ESBD15,ESBD16} models.

\paragraph{Join size estimation.}  The size of a database join can be
viewed as a ``labeled'' version of hypergraph counting, where each
vertex of $G$ can only match a particular vertex (``attribute'') of
$H$, and each hyperedge of $G$ can only match a particular hyperedge
(``relation'') of $H$.  (Both our upper and lower bounds apply in
this labeled setting.)  In~\cite{atserias2008size} it was shown for a database $G$
with $m$ hyperedges, the size of the join given by a query $H$ can be up to $\Theta(m^\rho)$,
where $\rho$ is the fractional \emph{edge} cover number of $H$.

This result is from a very different regime from ours because it
involves very dense graphs and ours involves sparse ones.  But one
intriguing connection is through the $\Omega(m/T^{2/3})$ lower bound
given in~\cite{KP17} for the restricted class of ``triangle sampling''
algorithms.  Generalizing that proof for arbitrary $H$ would
use~\cite{atserias2008size} to get a lower bound of
$\Omega(m (\frac{1}{T n^{2\rho - \abs{V}}})^{1/\rho})$, as
opposed to our $\Omega(m/T^{1/\tau})$ bound.  These are the same for
some graphs, such as odd cycles, where $\rho = \tau = \abs{V}/2$, and
$\Omega(m/T^{1/\tau})$ is stronger for sparse graphs, but the two
bounds are incomparable in general.  It seems possible that the sample
complexity for dense graphs will depend on $\rho$ in some fashion.

\section{Proof Overview}

\subsection{Lower Bound}

For a fixed graph $H$ with fractional vertex cover $\tau$, we prove an
$\Omega(n/T^{1/\tau})$ lower bound for determining whether a
constant-degree graph on $\Theta(n)$ vertices has $0$ or $\Theta(T)$
copies of $H$.  For illustration, in this section we focus on the case
where $H$ is a triangle.

We consider the three-party simultaneous-message communication problem
illustrated in Figure~\ref{fig:instance}, where each player is
associated with an edge of $H$.  First, we construct a set of
$N = \Theta(n)$ vertices $V_u$ for each vertex $u \in H$.  The player
associated with edge $e = (u, v)$ receives an input consisting of $n$
disjoint edges on $V_u \times V_v$, along with binary labels
associated with each edge.  We are guaranteed that the three players'
inputs collectively contain $T$ triangles, with all the other edges
disjoint. Each set of vertices are randomly permuted so that the players do not
know which of their edges participate in triangles.

We also guarantee that the XOR of the labels
associated with a triangle is the same for every triangle---either
every triangle has an even number of 1s, or an odd number.  The goal
of the players is to send messages to a referee who knows the edges
but not their labels, and for the referee to figure out if every
triangle has an odd number of 1 labels.

\begin{figure}
  \centering
    \begin{subfigure}[t]{0.47\textwidth}
      \centering
      \includegraphics[width=\textwidth]{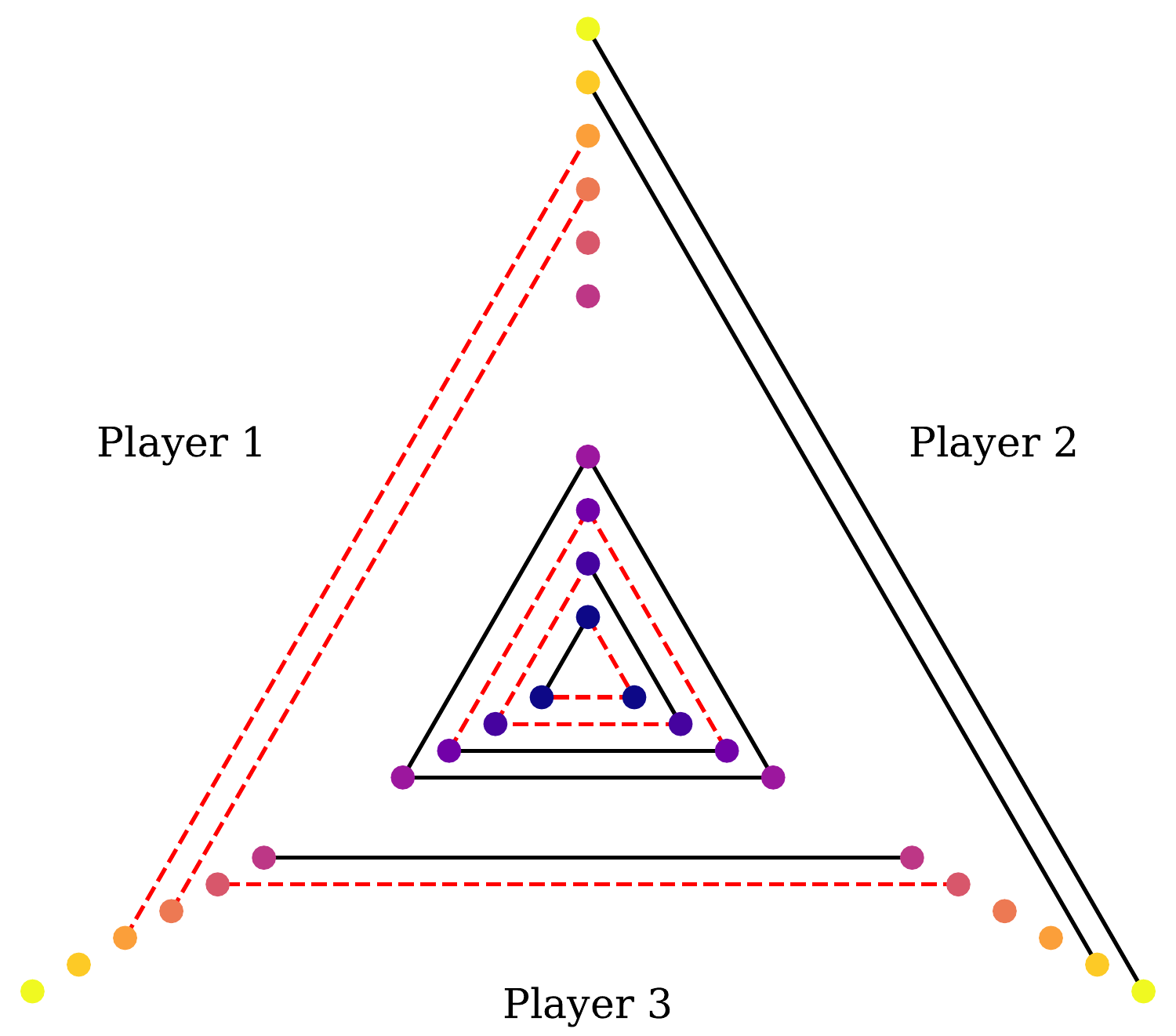}
      \caption{Input for triangle-counting before permutation.  Each
        player $e$ sees $n$ edges with associated binary labels
        (pictured as solid/dashed).  The edges match up into $T$
        triangles (center) and $n-T$ isolated edges (outside).  The
        goal is to determine whether every triangle has an even number
        of solid edges, or an odd number.}
      \label{fig:fig1a}
    \end{subfigure}\hfill
    \begin{subfigure}[t]{0.47\textwidth}
      \centering
      \includegraphics[width=\textwidth]{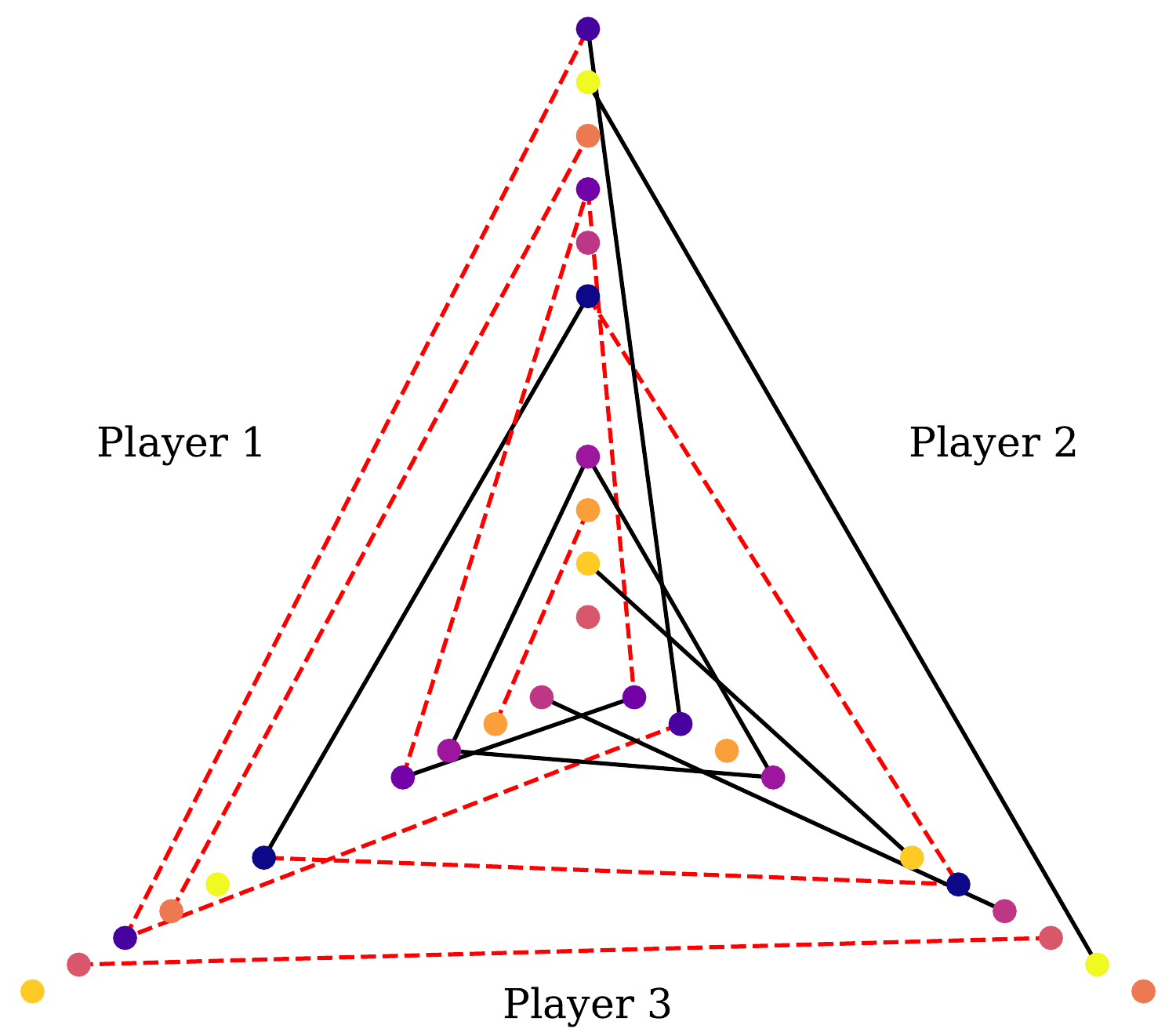}
      \caption{In the hard distribution, we randomly permute the
        vertices on top, on left, and on right.  Each player sees
        their edges, with associated labels, in a random order; they
        do not see the pre-permutation vertex identities (represented by color).}
      \label{fig:fig1b}
    \end{subfigure}
    \caption{Lower bound instance for triangle counting}\label{fig:instance}
\end{figure}

We will show that a uniformly random instance of this problem requires
$\Omega(n/T^{2/3})$ communication for the referee to succeed $2/3$ of
the time.  At the same time, it directly reduces to triangle counting:
each player sketches their edges labeled $0$ and sends the sketch to
the referee.  The referee adds the linear sketches up to get a sketch of all
$0$-labeled edges in the graph.  This subgraph either contains zero
triangles (if every triangle has an odd number of 1s) or very close
to $T/4$ (otherwise), so successfully counting triangles will
distinguish the two cases.

Our lower bound for the communication problem consists of two main
pieces.  First, we give a combinatorial lemma that bounds the players'
ability to co-ordinate any assignment of ``weights'' to edges or subsets of
edges, based on the structure of the graph.  Then we use Fourier-analytic
techniques to extend this to a lower bound of the communication required
by any protocol for the problem.

\paragraph{Combinatorial lemma.}  One approach the players could take
to solve the problem would be for each player to look at their $n$
edges and pick a $p$ fraction to send to the referee.  If the referee
receives a complete triangle, he can solve the problem.  What is the
expected number of triangles the referee receives, if the players
coordinate optimally?

The naive solution where players pick independently at random would
yield $p^3T$ triangles.  Vertex sampling---picking a $\sqrt{p}$
fraction of vertices, for example those of smallest index, and only
sampling edges between picked vertices---increases this to $p^{3/2}T$.
In~\cite{KP17}, a simple counting argument showed that ``oblivious''
strategies, which decide whether to sample an edge based only on the
edge and not the rest of the player's input, cannot do better than this.

The combinatorial lemma we need for the Fourier-analytic proof is a
stronger, generalized version of this sampling lemma.  It considers
players that receive some private randomness $\psi_e$ and
partially-shared randomness $\phi_u$ for each $u \in e$, and output an
arbitrary deterministic function
\[
  g_e = g_e(\psi_e, (\phi_u)_{u \in e}) \in [-1, 1]
\]
of their inputs.  If the $\phi_u$ are fully independent and the
$\psi_e$ are $(\abs{E}-1)$-wise independent, and
\[
  \max_e \E[\psi,\phi]{g_e^2} \leq p
\]
for some $p$, then we show:
\begin{align}
  \E[\psi,\phi]{\prod_e g_e} \lesssim p^{3/2}.\label{eq:lemgoal}
\end{align}
To relate this to sampling, we note that the communication
problem in Figure~\ref{fig:instance} can be constructed with randomness 
in the form above: $\phi_u$ contains the permutation of the vertices
$V_u$ associated with $u$, and $\psi_e$ contains player $e$'s edge
labels $x_e$ (which are $2$-wise independent) and the random order
$\pi_e$ in which they see their edges.  Consider picking a random
triangle edge $s \in [T]$ and adding to $\psi_e$ the index of $s$ in
player $e$'s list. (One can show that $\psi_e$ remains pairwise
independent, despite the shared dependence on $s$.)  If we only allow
$g_e \in \{0, 1\}$, then we can think of $g_e$ as the event that
player $e$ samples their edge in the $s$th triangle.  The condition on
$\E{g_e^2}$ says that each player can pick at most a $p$ fraction of
their edges, on average over their inputs.  The conclusion is that the
expected fraction of triangles completely sampled is at most
$p^{3/2}$.

This combinatorial lemma is different from the simple sampling lemma
of~\cite{KP17} in several ways.  First, it allows players to look at
their entire inputs before deciding which edges to keep.  Second, while
the lemma of~\cite{KP17} was based on defining a fixed subset of edges to
keep (so the number kept depended only on which edges were seen), in our lemma the
players only need to keep a $p$ fraction of their inputs \emph{on
  average}, but the players do not have shared randomness.  If they
had shared randomness, there would be a trivial counterexample: with
probability $p$ every player samples every edge, giving the referee
$p T$ triangles in expectation.  Without shared randomness, they can
still use the correlation of their input for nontrivial algorithms:
for example, for $p = 2^{1-n}$ they can send their entire input if
every edge has the same label; because of the promise, if two players
sample their inputs then the third is much more likely to.  But this
coordination is less effective than vertex sampling.

The combinatorial lemma is also more general, in ways that are
important for the Fourier-analytic component of the proof.  It allows
for ``fractional'' choices of edges $g_e \in [-1, 1]$, with an
$\ell_2$ constraint.  This allows for alternative competitive
strategies---for example, placing $\sqrt{p}$ weight on every edge also
yields $p^{3/2}$---but no strictly better ones.  Additionally, the
lemma will extend to cases where instead of placing weight on individual
edges, the players place weight on \emph{sets} of $k$ triangle edges for some $k \geq 1$.
These will correspond to weight $k$ Fourier coefficients.

\paragraph{Fourier-analytic argument.} Our approach for lower
bounding the communication problem is inspired by \cite{GKKRd07}.  Let
$x_e \in \{0, 1\}^n$ be the player $e$'s labels before permutation
(i.e., from Figure~\ref{fig:fig1a}).  We consider the referee's
posterior distribution $p$ on the triangle parities,
$x_1^{1:T} \oplus x_2^{1:T} \oplus x_3^{1:T}$.  $p$ is supported on $\{0^T, 1^T\}$, and our
goal is to show that it is nearly uniform.

First, we express the referee's posterior distribution on the labels
$x = (x_e)_{e \in E}$.  Let $f_e(y) = 1$ if player $e$'s message to the referee is
consistent with $x_e = y$, and 0 otherwise, and let
$f: \{0, 1\}^{\abs{E}n} \to \{0, 1\}$ be given by $f(x) = \prod_e f_e(x_e)$.
The referee has two constraints on $x$: the message consistency
constraint $f(x)$, and a parity constraint $q(x)$.  His posterior
distribution is uniform on $\supp(fq)$.

The first observation we make relates the referee's total variation
distance to the Fourier spectrum of $fq$.  For indicator functions
$g: \{0, 1\}^m \to \{0, 1\}$ it makes sense to consider the renormalized
Fourier transform
\[
  \wt{g}(s) := \frac{2^{m}}{\abs{\supp(g)}} \wh{g}(s) = \E[x \in \supp(g)]{(-1)^{s \cdot x}}.
\]
With this normalization, we observe that
\[
  \Delta := \|p - \mathcal{U}(\{0^T, 1^T\})\|_{TV} = \frac{1}{2}\wt{fq}(e_1, e_1, e_1)
\]
where $e_1 = (1, 0, \dotsc, 0) \in \{0, 1\}^n$.  Using the structure
of $q$'s spectrum and the Fourier convolution theorem, we turn this into
\[
\Delta = C \sum_{\substack{t \in \{0, 1\}^T\\\abs{t} \equiv 1 \mod 2}} \prod_e \wt{f_e}(t 0^{n-T})
\]
where $C$ is a normalising factor that is constant in expectation over $x_1, x_2, x_3$.  The combinatorial
lemma applied to $\wt{f_e}$ lets us bound the sum for a fixed
$\abs{t} = k$ (in expectation over the input).  The bound is, for some constant $D > 0$,
\[
 \sum_{\substack{t \in \{0, 1\}^T\\\abs{t} = k}} \prod_e \wt{f_e}(t 0^{n-T}) \le D\binom{T}{k}\left(\max_e \frac{1}{\binom{n}{k}}\E{\sum_{\substack{s \in \{0, 1\}^n\\\abs{s} = k}} \wt{f_e}(s)^2}\right)^{3/2}
\]
which can be bounded in terms of the players' $c$ bits of
communication by the KKL lemma (for small $k$) and Parseval's
identity (for high $k$). See Section \ref{sec:fourierfacts} for statements of the bounds used.  The dominant term when summing over $k$ is
$k=1$, whence we get that the referee's total variation distance has
\[
  \E{\Delta} \lesssim T(c/n)^{3/2}.
\]
This implies the players must send at least $n/T^{2/3}$ bits to
distinguish the two cases with significant probability.

\paragraph{Changes for non-triangle graphs.}
For counting general (hyper)graphs, the combinatorial lemma as
described gives a bound of $p^{MVC_{1/2}(H)}$, where $MVC_{1/2}(H)$ is
a ``modified'' fractional vertex cover in which weight can be placed
directly on edges for half price.  For odd cycles such as triangles,
$MVC_{1/2}(H)$ equals the non-modified fractional vertex cover $\tau$,
giving the desired $\Omega(n/T^{1/\tau})$ bound.

For other graphs, such as the length-3 path, $MVC_{1/2}(H)$ can be
less than $\tau$ leading to a suboptimal result.  For these graphs we
use a somewhat different proof, in which the referee is identified with
a particular edge $e^*$ in the graph.  The other players' inputs are
then completely independent of one another, with no promise on the XOR of their
labels. We follow a slightly different Fourier-analytic approach that requires
bounding
\[
  \E{\prod_{e \neq e^*} \wt{f_e}^2}.
\]
rather than $\E{\prod_{e} \wt{f_e}}$.  We apply the combinatorial lemma to
the $\wt{f_e}^2$, on which we have an $\ell_1$ constraint,
giving us the bound
\[
  \E{\prod_{e \neq e^*} \wt{f_e}^2} \leq p^{MVC_1(H \setminus e^*)}
\]
where the exponent is the non-modified fractional vertex cover of
$H \setminus e^*$. This gives a lower bound of
$\Omega(n/T^{1/MVC_1(H \setminus e^*)})$.  For every connected (non-hyper-)graph
$H$ that is neither an odd cycle nor a single edge, this equals
$\Omega(n/T^{1/\tau})$ for at least one $e^*$.

For graphs that \emph{are} single edges or odd cycles, $MVC_{1/2}(H) = \tau$, and so
the combination of these bounds gives
$\Omega(n / T^{1/\tau})$ for every connected graph with more than one
edge.  For hypergraphs, the individual lower bounds still hold, but
their maximum is not necessarily $\Omega(n / T^{1/\tau})$.

\paragraph{Dependence on $\eps$.}  The above approach is a lower bound
for distinguishing $T$ triangles from $0$ triangles.  Distinguishing
$T$ triangles from $(1-\eps)T$ triangles should require more space for small $\eps$. 
In the non-promise version of the proof used for graphs that
are not odd cycles, we use the noise operator, an operator that takes 
a binary function and ``noises'' it by randomly flipping input bits, 
to get an $\eps^{-2/\tau}$ dependence.  In the promise version, the bound we get
is only $\eps^{-1/\tau}$.

\subsection{Upper Bound}
For purposes of this overview, we describe a sampling algorithm for
the ``labeled'' version of the problem used in join size estimation,
where edges and vertices in $G$ correspond to edges and vertices in
$H$, and we only want to count subgraphs with matching labels.  Since
$H$ has constant size, we can solve the non-labeled version by trying
many random labelings.

Consider a hypergraph $H$ with minimal fractional vertex cover $f$, so
$f(u) \in [0, 1]$ for each vertex $u \in V_H$ and
$\sum_u f(u) = \tau$.  Let $\chi: V_G \to V_H$ be the labels.  For a
parameter $p \in (0, 1)$ to be determined later, we sample each vertex
$v \in V_G$ with probability $p^{f(\chi(v))}$, and we keep a hyperedge
$e \in E_G$ if and only if we sample all $v \in e$.

The chance we keep any given copy of $H$ is
$\prod_{u \in H} p^{f(u)} = p^{\tau}$.  Therefore, if we set
$p = 100/T^{1/\tau}$, the expected number of copies of $H$ we see will
be $p^{\tau}T \geq 100$.  On the other hand, the chance we keep any
single edge $e$ is $\prod_{v \in e} p^{f(\chi(v))} \leq p$, because
$f$ covers the edge associated with $e$ and so $\sum_{v \in e}f(\chi(v)) \ge 1$.  This gives a algorithm with
$O(mp) = O(m/T^{1/\tau})$ space that sees $p^\tau T \geq 100$ copies of
$H$ in expectation; from this $T$ can be estimated.

The only tricky bit is to show that the variance of the number of
sampled copies of $H$ is small.  We bound this in terms of the maximum
degree of $G$ and the maximum correlation between sampling two copies
of $H$ in $G$.  If this correlation is $1$ as can happen in general,
the sampling algorithm only works for constant-degree graphs.
However, if the vertex cover places at least $0.5$ weight on each
vertex of $H$, then the correlation is at most
$\sqrt{p} = \Theta(1/T^{1/(2\tau)})$.  This lets the algorithm work
for degree $O(T^{1/(2\tau)})$ graphs.  In the case of triangles, this
$O(T^{1/3})$ degree bound is the correct regime for $O(m/T^{2/3})$
samples to be possible---above this threshold, the maximum number of triangles
sharing a single vertex can be larger than $T^{2/3}$ and so the
$\Omega(m\sqrt{\Delta_V}/T)$ lower bound of~\cite{KP17} precludes it.


\section{Preliminaries}\label{sec:prelims}
\subsection{Roadmap}
This section will cover notation and certain basic facts about Boolean Fourier analysis. Section \ref{sec:weights} introduces a combinatorial lemma that will be needed for our lower bounds. Sections \ref{sec:promisegame} and \ref{sec:nopromisegame} will show lower bounds for two similar communication games, one where the players are given a promise on their inputs and one where they are not. Section \ref{sec:reduction} contains reductions from both of these problems to hypergraph counting, giving two non-comparable lower bounds, and a proof that these bounds combine for a tight bound in the case of non-hypergraphs. Finally, Section \ref{sec:upperbound} gives a counting algorithm with a matching upper bound.

\subsection{Notation}
We will write $e_i$ for the $n$-bit string $w$ such that $w_j = 1$ when $j = i$ and $0$ otherwise. When $x$ is an $n$ bit string and $A$ is a set, $(x)_{a \in A}$ will denote the $|A|$-tuple of strings given by repeating $x$ $|A|$ times. When there is a natural correspondence between elements of $A$ and subsets of $\lbrack |A|n \rbrack$, we will use tuples $(x_a)_{a \in A}$ interchangeably with $|A|n$ bit strings. If $x$ is a string, $x^{a:b}$ is the $(1 + b - a)$-bit substring consisting of the $a^\text{th}$ to the $b^\text{th}$ bit of $x$.

Let $H$ be a multi-hypergraph where empty edges are allowed, i.e.\ $H = (V, E)$, where $E$ is a multi-set of subsets of $V$. We define a modification to the standard fractional vertex cover wherein mass can be placed directly on edges, for some price:
\begin{definition}\label{def:MVC}
  For a weighted hypergraph $H = (V, E)$ with weights $w: E \to [0, \infty]$,
  we define the \emph{$\lambda$-modified fractional vertex cover number}
  $MVC_\lambda(H, w)$ to be:
  \[
    MVC_\lambda(H, w) = \min_f \left(\sum_{v \in V} f(v) + \lambda\sum_{e \in E} f(e)\right)
  \]
  over all $f: V \cup E \to [0, \infty]$ satisfying
  \begin{align*}
    \sum_{v \in e} f(v) + f(e) \ge w(e) \qquad \forall e \in E.
  \end{align*}
  When $w$ is omitted, it is assumed that $w(e) = 1$ for all $e$.  We
  note that $MVC_\lambda(H)$ equals the standard fractional vertex
  cover number of $H$ whenever $\lambda \geq 1$ and $H$ has no empty
  hyperedges.
\end{definition}

\subsection{Basic Facts About Boolean Fourier Analysis}
\label{sec:fourierfacts}
\begin{definition}
Let $f : \lbrace 0, 1\rbrace \rightarrow \mathbb{R}$. The \emph{Fourier transform} $\wh{f} : \lbrace 0, 1\rbrace \rightarrow \mathbb{R}$ of $f$ is given by: \[
\wh{f}(s) = \frac{1}{2^n}\sum_{z \in \lbrace 0, 1\rbrace^n} f(z) \chi_s(z)
\]
Where $\chi_s(z) = (-1)^{s\cdot z}$.
\end{definition}

\begin{lemma}
\label{lem:convolution}
Let $f,g : \lbrace 0, 1\rbrace^n \rightarrow \mathbb{R}$ be functions. Then: \[
\wh{fg}(s) = \sum_{t \in \lbrace 0, 1\rbrace^n}\wh{f}(s)\wh{g}(s \oplus t)
\]
\end{lemma}
\begin{proof}
See section 2.3 of \cite{W08}.
\end{proof}

\begin{lemma}
\label{lem:fdecomp}
Let $f: \lbrace 0, 1\rbrace^{kn} \rightarrow \mathbb{R}$ be given by: $f((z_i)_{i = 1}^k) = \prod_{i=1}^kf_i(z_i)$
for functions $f_i: \lbrace 0, 1\rbrace^{n} \rightarrow \mathbb{R}$.  Then, for any $(s_i)_{i = 1}^k \in \lbrace 0, 1\rbrace^{kn}$: \[
\widehat{f}((s_i)_{i = 1}^{k}) = \prod_{i = 1}^k\widehat{f_i}(s_i)
\]
\end{lemma}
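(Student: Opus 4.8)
The plan is a direct computation: expand the definition of the Fourier transform on $\{0,1\}^{kn}$, use the multiplicativity of both the function $f$ and the characters with respect to the product decomposition, and then factor the resulting sum over a product set into a product of sums. Concretely, I would first write
\[
  \widehat{f}\big((s_i)_{i=1}^k\big) = \frac{1}{2^{kn}} \sum_{(z_i)_{i=1}^k \in \{0,1\}^{kn}} f\big((z_i)_{i=1}^k\big)\, \chi_{(s_i)_{i=1}^k}\big((z_i)_{i=1}^k\big),
\]
invoking the natural correspondence between $\{0,1\}^{kn}$ and $k$-tuples of elements of $\{0,1\}^n$ set up in the notation section.

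The second step is to observe that the character splits: since $(s_i)_{i=1}^k \cdot (z_i)_{i=1}^k = \sum_{i=1}^k s_i \cdot z_i$, we get
\[
  \chi_{(s_i)_{i=1}^k}\big((z_i)_{i=1}^k\big) = (-1)^{\sum_{i=1}^k s_i \cdot z_i} = \prod_{i=1}^k (-1)^{s_i \cdot z_i} = \prod_{i=1}^k \chi_{s_i}(z_i).
\]
Combining this with the hypothesis $f\big((z_i)_{i=1}^k\big) = \prod_{i=1}^k f_i(z_i)$ and writing $2^{kn} = \prod_{i=1}^k 2^n$, the summand becomes $\prod_{i=1}^k \frac{1}{2^n} f_i(z_i)\chi_{s_i}(z_i)$, a product of terms each depending on a single $z_i$.

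The third step is to factor the sum over the product set $\{0,1\}^{kn} = (\{0,1\}^n)^k$ into a product of $k$ independent sums, i.e.\ the elementary identity $\sum_{z_1,\dots,z_k} \prod_i h_i(z_i) = \prod_i \sum_{z_i} h_i(z_i)$ for finite sums. Applying it with $h_i(z_i) = \frac{1}{2^n} f_i(z_i)\chi_{s_i}(z_i)$ yields $\prod_{i=1}^k \widehat{f_i}(s_i)$ by the definition of $\widehat{f_i}$, which is exactly the claim. (Alternatively one could run an induction on $k$, peeling off one coordinate block at a time and applying the $k=2$ case, but the direct factoring is cleaner.) There is no real obstacle here; the only point requiring any care is bookkeeping of the normalization constants and keeping the identification between $\{0,1\}^{kn}$ and tuples consistent, both of which are routine.
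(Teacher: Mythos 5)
Your proof is correct, but it takes a genuinely different route from the paper. You factor directly: expand $\widehat{f}$ on $\{0,1\}^{kn}$, split the character as $\chi_{(s_i)_i}((z_i)_i)=\prod_i\chi_{s_i}(z_i)$, use the hypothesis that $f$ factors, and then interchange the product and the sum over the product set $(\{0,1\}^n)^k$ --- each factor is exactly $\widehat{f_i}(s_i)$. The paper instead lifts each $f_i$ to a function $f_i'$ on the full $kn$-bit domain, first proves that $\widehat{f_i'}((s_j)_j)$ vanishes unless $s_j=0$ for all $j\neq i$ (and equals $\widehat{f_i}(s_i)$ otherwise), and then applies the convolution theorem (Lemma~\ref{lem:convolution}) to the pointwise product $f=\prod_i f_i'$, observing that only a single term of the convolution sum survives. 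Your direct factoring is shorter and entirely self-contained, needing only the elementary identity $\sum_{z_1,\dots,z_k}\prod_i h_i(z_i)=\prod_i\sum_{z_i}h_i(z_i)$; the paper's argument is longer but reuses the convolution machinery it has already set up and will use again (e.g.\ in Lemma~\ref{lem:qmult}), and the intermediate fact about the spectrum of the lifted $f_i'$ is the kind of observation that recurs in the $\widehat{q}$ computation. Both establish the lemma in full generality, and your bookkeeping of the normalization $2^{kn}=\prod_i 2^n$ is handled correctly.
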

\begin{proof}
See Appendix~\ref{app:a}.
\end{proof}

\noindent
One of our two main tools for bounding sums of Fourier coefficients will be Parseval's identity:
\begin{lemma}[Parseval]
\label{lem:parseval}

For every function $f : \lbrace 0,1\rbrace^n \rightarrow \mathbb{R}$ we have: \[
\sum_{z \in \lbrace 0, 1 \rbrace}f(z)^2 = 2^n\sum_{s \in \lbrace 0, 1\rbrace^n}\widehat{f}(s)^2
\]
\end{lemma}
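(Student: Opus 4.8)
The plan is to prove the identity by a direct expansion of the right-hand side, using only the definition of the Fourier transform together with the orthogonality of the characters $\chi_s$; this keeps the argument self-contained, though one could equally cite Section~2.3 of \cite{W08} as is done for Lemma~\ref{lem:convolution}.

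First I would substitute $\widehat{f}(s) = 2^{-n}\sum_{z} f(z)\chi_s(z)$ into $\sum_s \widehat{f}(s)^2$, expand the square, and interchange the order of summation to obtain
\[
  2^n \sum_{s \in \{0,1\}^n} \widehat{f}(s)^2 = \frac{1}{2^n} \sum_{z, z' \in \{0,1\}^n} f(z) f(z') \sum_{s \in \{0,1\}^n} \chi_s(z)\chi_s(z').
\]
Since $\chi_s(z)\chi_s(z') = (-1)^{s \cdot (z \oplus z')}$, the inner sum over $s \in \{0,1\}^n$ factors as a product over the $n$ coordinates of $\sum_{b \in \{0,1\}} (-1)^{b (z \oplus z')_i}$, and each such factor equals $2$ when $(z \oplus z')_i = 0$ and $0$ otherwise. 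Hence $\sum_s \chi_s(z)\chi_s(z')$ equals $2^n$ when $z = z'$ and vanishes otherwise.

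Plugging this back collapses the double sum over $z, z'$ to its diagonal, leaving $\frac{1}{2^n} \cdot 2^n \sum_z f(z)^2 = \sum_z f(z)^2$, which is the claimed identity. There is no genuine obstacle in this proof: the only step needing a moment's care is the character-orthogonality fact $\sum_{s} (-1)^{s \cdot w} = 2^n$ if $w = 0^n$ and $0$ otherwise, which is immediate from the coordinatewise factorization above. (An equally short alternative is to note that $\{\chi_s\}_{s}$ is an orthonormal basis of $\mathbb{R}^{\{0,1\}^n}$ under the inner product $\langle f, g\rangle = 2^{-n}\sum_z f(z)g(z)$, so that $f = \sum_s \widehat{f}(s)\chi_s$ and the identity is simply $\langle f, f\rangle$ expanded in this basis.)
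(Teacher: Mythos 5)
Your proof is correct: the expansion of $\sum_s \widehat{f}(s)^2$, the character-orthogonality computation $\sum_s (-1)^{s\cdot(z\oplus z')} = 2^n\,\mathbbm{1}[z=z']$ via the coordinatewise factorization, and the collapse to the diagonal are all accurate, and the normalization matches the paper's convention $\widehat{f}(s) = 2^{-n}\sum_z f(z)\chi_s(z)$. The paper states Parseval's identity as a standard fact without giving any proof, so there is nothing to diverge from; your argument is the standard orthonormal-basis proof and is a perfectly adequate self-contained justification.
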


\noindent
The other will be the KKL lemma:

\begin{lemma}[\cite{KKL88}]
\label{lem:kkl}
Let $f$ be a function $f : \lbrace 0, 1 \rbrace^n \rightarrow \lbrace -1, 0, 1 \rbrace$. Let $A = \lbrace x | f(x) \not = 0 \rbrace$, and let $s$ denote the Hamming weight of $s \in \lbrace 0,1\rbrace^n$. Then for every $\delta \in \lbrack 0, 1\rbrack$ we have \[\
\sum_{s \in \lbrace 0, 1\rbrace^n} \delta^{|s|}\widehat{f}(s)^2 \le \left(\frac{|A|}{2^n}\right)^\frac{2}{1 + \delta}
\]
\end{lemma}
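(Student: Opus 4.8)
The plan is to derive this from the hypercontractive (Bonami--Beckner) inequality, which is the standard route to KKL-type bounds. Write $\alpha = \abs{A}/2^n$ and, for $\rho \in [0,1]$, let $T_\rho$ be the noise operator acting on the Fourier side by $\wh{T_\rho g}(s) = \rho^{\abs{s}}\wh{g}(s)$. Taking $\rho = \sqrt{\delta}$ and applying Parseval's identity (Lemma~\ref{lem:parseval}) in the form $\mathbb{E}_z[g(z)^2] = \sum_s \wh{g}(s)^2$, we get
\[
  \sum_{s \in \{0,1\}^n} \delta^{\abs{s}}\wh{f}(s)^2 \;=\; \sum_{s} \bigl((\sqrt{\delta})^{\abs{s}}\wh{f}(s)\bigr)^2 \;=\; \sum_s \wh{T_{\sqrt{\delta}}f}(s)^2 \;=\; \lVert T_{\sqrt{\delta}} f\rVert_2^2,
\]
so it suffices to show $\lVert T_{\sqrt{\delta}} f\rVert_2^2 \le \alpha^{2/(1+\delta)}$.

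Next I would invoke hypercontractivity in the form: for $1 \le p \le q$ one has $\lVert T_\rho g\rVert_q \le \lVert g\rVert_p$ whenever $\rho \le \sqrt{(p-1)/(q-1)}$. Taking $q = 2$ and $p = 1+\delta$, the hypothesis becomes $\rho \le \sqrt{\delta}$, which holds with equality for our choice $\rho = \sqrt{\delta}$, giving $\lVert T_{\sqrt{\delta}} f\rVert_2 \le \lVert f\rVert_{1+\delta}$. It then remains to evaluate the right-hand side, and here is where the hypothesis that $f$ is $\{-1,0,1\}$-valued is used: pointwise $\abs{f(z)}^{1+\delta} = \abs{f(z)}$, which is $1$ on $A$ and $0$ elsewhere, so $\lVert f\rVert_{1+\delta}^{1+\delta} = \mathbb{E}_z[\abs{f(z)}^{1+\delta}] = \abs{A}/2^n = \alpha$, i.e. $\lVert f\rVert_{1+\delta}^2 = \alpha^{2/(1+\delta)}$. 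Combining the two displays yields the claim. The endpoint cases are automatically consistent: $\delta = 1$ is exactly Parseval's identity (both sides equal $\alpha$), and $\delta = 0$ reduces to $\wh{f}(0)^2 = (\mathbb{E}_z f(z))^2 \le (\mathbb{E}_z\abs{f(z)})^2 = \alpha^2$.

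The only substantive ingredient is the hypercontractive inequality itself, which I would either cite or, if a self-contained treatment is wanted, prove in the usual two steps. First, the two-point inequality: for real $a,b$ and $\rho = \sqrt{\delta}$, $\lVert a + \rho b\,\omega\rVert_{L^2} \le \lVert a + b\,\omega\rVert_{L^{1+\delta}}$, where $\omega$ is the single $\pm 1$ character on $\{0,1\}$ — an elementary inequality in two real variables, verified e.g.\ by reducing to $a = 1$ and checking the resulting one-variable inequality. Second, tensorization: since $T_\rho = T_\rho^{(1)}\otimes\cdots\otimes T_\rho^{(n)}$ acts coordinatewise, one boosts the two-point inequality to $\{0,1\}^n$ by induction on $n$, using Minkowski's integral inequality (equivalently, the generalized triangle inequality for $L^p$-valued integrands) to pass the outer $L^2$ norm through the inner $L^{1+\delta}$ norm at each step. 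I expect this tensorization/two-point argument to be the only place requiring real care; the rest of the proof is bookkeeping with the chosen Fourier normalization and the structural fact that $f$ is supported-indicator-like.
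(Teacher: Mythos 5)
Your proof is correct: the paper states Lemma~\ref{lem:kkl} without proof, citing \cite{KKL88}, and your argument---rewriting the left side as $\lVert T_{\sqrt{\delta}}f\rVert_2^2$ via the noise operator and Parseval (consistent with the paper's Fourier normalization), applying the Bonami--Beckner inequality with $q=2$, $p=1+\delta$, $\rho=\sqrt{\delta}$, and then using $\abs{f}\in\{0,1\}$ to get $\lVert f\rVert_{1+\delta}^2=(\abs{A}/2^n)^{2/(1+\delta)}$---is precisely the standard proof from the cited literature. No gaps; the two-point-plus-tensorization route you sketch for hypercontractivity is the usual one and needs no further care beyond what you indicate.
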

\noindent
We will make use of the following corollary of this lemma, similar to a corollary from \cite{GKKRd07}:
\begin{lemma}
\label{lem:kklcor}
For any set $A \subseteq \lbrace 0, 1\rbrace^{n}$ and $\lambda \in (1, \infty)$, let $f : \lbrace 0, 1\rbrace^{n} \rightarrow \lbrace 0, 1\rbrace$ be the characteristic function of $A$, and suppose that $|A| \ge 2^{n - c}$ for some $c \in \mathbb{N}$. Then, for each $k \in \lbrack \lfloor \lambda c\rfloor\rbrack$ one has $\frac{2^{2n}}{|A|^2}\sum_{s \in \lbrace 0,1\rbrace^{n}: |s| = k} \widehat{f}(s)^2 \le \left(\frac{2 \lambda c  }{k}\right)^k$.
\end{lemma}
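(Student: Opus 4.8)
The plan is to apply the KKL lemma (Lemma~\ref{lem:kkl}) with a carefully chosen value of $\delta$, namely $\delta = k/(\lambda c)$, and then throw away all but the weight-$k$ terms on the left-hand side. First I would note that since $|A| \ge 2^{n-c}$, the right-hand side of Lemma~\ref{lem:kkl} satisfies $\left(|A|/2^n\right)^{2/(1+\delta)} \le 2^{-2c/(1+\delta)} \cdot (\text{something} \le 1)$; more precisely, $\left(|A|/2^n\right)^{2/(1+\delta)} \le 2^{-2c/(1+\delta)}$ since $|A|/2^n \le 1$ and the exponent is positive. Then, keeping only the terms with $|s| = k$ in the sum on the left (all terms are nonnegative since they are squares times $\delta^{|s|} \ge 0$), I get
\[
  \delta^{k}\sum_{s : |s| = k}\widehat{f}(s)^2 \le 2^{-2c/(1+\delta)},
\]
so that
\[
  \frac{2^{2n}}{|A|^2}\sum_{s:|s|=k}\widehat{f}(s)^2 \le \delta^{-k}\cdot \frac{2^{2n}}{|A|^2}\cdot 2^{-2c/(1+\delta)} \le \delta^{-k}\cdot 2^{2c}\cdot 2^{-2c/(1+\delta)} = \delta^{-k}\cdot 2^{2c\delta/(1+\delta)},
\]
again using $|A| \ge 2^{n-c}$ for the middle inequality.

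Now I would substitute $\delta = k/(\lambda c)$. Since $k \le \lfloor \lambda c\rfloor \le \lambda c$ and $\lambda > 1$, we have $\delta \in (0, 1]$, so Lemma~\ref{lem:kkl} applies. The first factor becomes $\delta^{-k} = (\lambda c / k)^k$. For the second factor, $2c\delta/(1+\delta) \le 2c\delta = 2k/\lambda \le 2k$, so $2^{2c\delta/(1+\delta)} \le 2^{2k} = 4^k$. Multiplying, I get
\[
  \frac{2^{2n}}{|A|^2}\sum_{s:|s|=k}\widehat{f}(s)^2 \le \left(\frac{\lambda c}{k}\right)^k \cdot 4^k = \left(\frac{4\lambda c}{k}\right)^k,
\]
which is slightly weaker than the claimed $\left(2\lambda c / k\right)^k$; to recover the claimed constant $2$ I would instead bound the exponent more tightly, using $2c\delta/(1+\delta) = 2k/(\lambda(1+\delta)) \le 2k/\lambda$ and then observing that for $\lambda$ bounded away from $1$ this gives a better constant, or simply sharpen via $2^{2c\delta/(1+\delta)} = \left(2^{2/\lambda}\right)^{k/(1+\delta)} \le 2^{k}$ when $\lambda \ge 2$; for general $\lambda \in (1,\infty)$ one checks that $\delta^{-k}2^{2c\delta/(1+\delta)} \le (2\lambda c/k)^k$ directly by comparing logarithms.

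The main obstacle I anticipate is getting the constant exactly right: the crude bound $2^{2c\delta/(1+\delta)} \le 4^k$ loses a factor, and matching the stated $(2\lambda c/k)^k$ requires being more careful about how the $1/(1+\delta)$ in the exponent of the KKL right-hand side interacts with the $\delta^{-k}$ factor — essentially one wants to exploit that $2/(1+\delta)$ is close to $2$ rather than discarding it. A clean way around this is to pick $\delta$ so that the two competing factors balance, e.g.\ optimizing $\delta \mapsto -k\ln\delta + \tfrac{2c\delta}{1+\delta}\ln 2$, which yields the correct constant; alternatively, one simply verifies the final inequality $(\lambda c / k)^k 2^{2c\delta/(1+\delta)} \le (2\lambda c/k)^k$, i.e.\ $2^{2c\delta/(1+\delta)} \le 2^k$, which holds since $2c\delta/(1+\delta) < 2c\delta = 2k/\lambda \le 2k/1$... — so in fact one needs $\delta \le \tfrac12$, i.e.\ $k \le \lambda c/2$, or a slightly different choice of $\delta$ to cover the full range $k \le \lfloor \lambda c\rfloor$. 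Resolving this edge case (large $k$ close to $\lambda c$) by a separate direct argument using Parseval — which gives $\frac{2^{2n}}{|A|^2}\sum_{s:|s|=k}\widehat f(s)^2 \le \frac{2^{2n}}{|A|^2}\sum_s \widehat f(s)^2 = \frac{2^n}{|A|} \cdot \frac{1}{|A|}\sum_z f(z)^2 = \frac{2^n}{|A|} \le 2^c$, which is at most $(2\lambda c/k)^k$ in that regime — completes the proof.
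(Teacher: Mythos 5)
Your approach is the same as the paper's: apply Lemma~\ref{lem:kkl} with $\delta = k/(\lambda c)$, keep only the weight-$k$ terms, and use $|A|\ge 2^{n-c}$; your computation is correct up to the bound $(4\lambda c/k)^k$. The gap is in the step that tries to recover the stated constant $2$. With this choice of $\delta$ the inequality you need, $2c\delta/(1+\delta)\le k$, is equivalent to $\frac{2kc}{\lambda c+k}\le k$, i.e.\ to $k\ge(2-\lambda)c$. So for $\lambda\ge 2$ there is nothing to fix (your observation $2^{2c\delta/(1+\delta)}=(2^{2/\lambda})^{k/(1+\delta)}\le 2^k$ is fine), but for $1<\lambda<2$ the failing regime is \emph{small} $k$, namely $1\le k<(2-\lambda)c$, not ``large $k$ close to $\lambda c$'' as you say; the chain ``$2c\delta/(1+\delta)<2c\delta=2k/\lambda\le 2k$'' does not give the needed $\le k$, and ``$\delta\le 1/2$'' is not the right dichotomy. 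The Parseval fallback does not close the actual gap: it yields $2^c$, and $2^c\le(2\lambda c/k)^k$ fails badly in the small-$k$ regime (take $k=1$, $\lambda=1.01$, $c$ large). As written, your argument proves the lemma for $\lambda\ge 2$ and proves $(4\lambda c/k)^k$ in general, but not the literal statement for all $\lambda\in(1,\infty)$.

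For what it is worth, the paper's own proof is the identical computation and has the same soft spot: its line $2^{2\delta c}/\delta^k=\bigl(2^{1/\lambda}\lambda c/k\bigr)^k$ should read $\bigl(2^{2/\lambda}\lambda c/k\bigr)^k$, after which the final inequality also needs $\lambda\ge 2$. In fact no choice of $\delta$ rescues the constant $2$ for $\lambda$ near $1$ by this route: minimizing $\delta^{-k}2^{2c\delta/(1+\delta)}$ over $\delta\in[0,1]$ gives roughly $\bigl(2e\ln 2\cdot c/k\bigr)^k\approx(3.77c/k)^k$ when $k\ll c$, which exceeds $(2\lambda c/k)^k$ for $\lambda$ close to $1$, so the stated constant in that regime cannot be extracted from Lemma~\ref{lem:kkl} by isolating a single weight level. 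Since every invocation of Lemma~\ref{lem:kklcor} in the paper tolerates an arbitrary absolute constant in the base (it is absorbed into $\gamma$), the $(4\lambda c/k)^k$ bound you do establish is fully adequate for the lemma's role; but as a proof of the statement as literally written, your proposal (like the paper's own last step) has a genuine gap for $1<\lambda<2$ and small $k$.
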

\noindent
The proof closely follows Lemma 6\ in \cite{GKKRd07} and is given in Appendix~\ref{app:a} for completeness.

\begin{lemma}
\label{lem:setsizes}
Let $m : \lbrace 0, 1\rbrace^n \rightarrow \lbrace 0, 1\rbrace^l$ be a function, and let $l \le c - \alpha$ for some $\alpha > 0$, and let $X$ be uniformly distributed over $\lbrace 0, 1\rbrace^n$. Define $F = \lbrace z \in \lbrace 0, 1\rbrace^n : m(z) = m(X) \rbrace$. Then, with probability at least $1 - 2^{-\alpha}$ over $X$, \[
|F| \ge 2^{n - c}.
\]
\end{lemma}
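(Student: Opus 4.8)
The plan is a straightforward counting argument: a union bound over the fibers of $m$. The key observation is that $F = m^{-1}(m(X))$ is exactly the fiber of $m$ containing $X$, so the bad event $|F| < 2^{n-c}$ occurs precisely when $X$ lands in one of the ``small'' fibers $m^{-1}(y)$ with $|m^{-1}(y)| < 2^{n-c}$.

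Concretely, I would let $B = \{y \in \{0,1\}^l : |m^{-1}(y)| < 2^{n-c}\}$ denote the set of values with small fibers, and note $|B| \le 2^l$ since the codomain $\{0,1\}^l$ has only $2^l$ elements. The set of inputs lying in a small fiber is $\bigcup_{y \in B} m^{-1}(y)$, whose cardinality is at most $|B|\cdot 2^{n-c} \le 2^{l+n-c}$ by the defining size bound on each such fiber. Since $X$ is uniform over $\{0,1\}^n$,
\[
  \Pb[X]{|F| < 2^{n-c}} \;=\; \frac{\bigl|\bigcup_{y \in B} m^{-1}(y)\bigr|}{2^n} \;\le\; \frac{2^{l+n-c}}{2^n} \;=\; 2^{l-c} \;\le\; 2^{-\alpha},
\]
where the last inequality uses the hypothesis $l \le c-\alpha$. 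Taking complements yields $|F| \ge 2^{n-c}$ with probability at least $1-2^{-\alpha}$, as claimed.

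There is essentially no obstacle here; the only minor point to keep in mind is the degenerate regime (e.g. $c > n$, so that $2^{n-c} < 1$ and ``small'' fibers may be empty), but the bound $|m^{-1}(y)| < 2^{n-c}$ and hence the counting above remain valid verbatim, so the one-line union bound over the at most $2^l$ fibers goes through unchanged.
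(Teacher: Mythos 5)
Your proof is correct and is essentially the same argument as the paper's: both count that there are at most $2^l \le 2^{c-\alpha}$ fibers of $m$, so the strings lying in fibers of size below $2^{n-c}$ number at most $2^{c-\alpha}\cdot 2^{n-c} = 2^{n-\alpha}$, i.e.\ a $2^{-\alpha}$ fraction of $\{0,1\}^n$. No meaningful difference in approach.
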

\begin{proof}
As $m$ takes only $2^l$ different values, there are at most $\frac{1}{2^\alpha}2^{c}$ distinct possible values for $F$, which partition $\lbrace 0, 1\rbrace^{n}$, and so no more than a $\frac{1}{2^\alpha}$ fraction of strings in $\lbrace 0, 1\rbrace^{n}$ are in sets of size $\le 2^{n-c}$.  Therefore the probability of a random string being in such a set is $\le \frac{1}{2^\alpha}$.
\end{proof}

\noindent
Following \cite{W08}, we define the noise operator $\mathcal{T}_\varepsilon$ on functions of $n$-bit strings as follows: \[
\mathcal{T}_\varepsilon(f)(x) = \E[y]{f(y)}
\]
Where $y$ is the random variable obtained by, for each bit of $x$, independently flipping it with probability $1/2 - \varepsilon/2$. We will use the fact (also in \cite{W08}) that, for all $s \in \lbrace 0, 1\rbrace^n$: \[
\widehat{\mathcal{T}_\varepsilon(f)}(s) = \varepsilon^{|s|}\widehat{f}(s).
\]

 \section{The Weights Lemma}\label{sec:weights}
 \begin{definition}[Totally disconnected hypergraph]\label{def:disconnected} For a hypergraph $H=(V_H, E_H)$ we say that $H$ is totally
    disconnected if edges of $H$ are pairwise disjoint, i.e. for every $a, b \in E$ one has
    $a \cap b = \emptyset$.
 \end{definition}
  \begin{lemma}
    \label{lem:weights}
    Consider any hypergraph $H = (V, E)$ and weight function
    $w: E \to [0, \infty]$.  Suppose that $H$ is not totally
    disconnected as per Definition~\ref{def:disconnected}, i.e., there exist $a, b \in E$ such that
    $a \cap b \neq \emptyset$.

    Consider any collection of random variables $g_e$ (for $e \in E$)
    that can be expressed as deterministic functions of some random
    variables $\phi_u$ (for $u \in V$) that are independent, and
    $\psi_e$ (for $e \in E$) that are independent of the $\phi_u$ and
    $(\abs{E}-1)$-wise independent themselves, i.e.,
    \[
      g_e = g_e(\psi_e, (\phi_u)_{u \in e}).
    \]

    Let $q \in \{1, 2\}$ and $0 < p < 1$, and suppose for all $e \in E$ that
    $\abs{g_e} \leq 1$ always and that
    \[
      \E{\abs{g_e}^q} \leq p^{w(e)}.
    \]
    Then
    \[
      \E{\prod_{e \in E} g_e} \leq d^{|V|}p^{MVC_{1/q}(H, w)}
    \]
    where $MVC_{1/q}(H, w)$ is the modified fractional vertex cover
    number per Definition~\ref{def:MVC}, and $d$ is the maximum of $1$
    and the greatest degree of a vertex $v \in V$.
  \end{lemma}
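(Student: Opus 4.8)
The plan is to separate the two sources of randomness --- the vertex randomness $(\phi_u)_{u\in V}$, which is genuinely independent, and the edge randomness $(\psi_e)_{e\in E}$, which is only $(|E|-1)$-wise independent --- and handle them with a Loomis--Whitney/Finner-type inequality and an ``edge peeling'' argument respectively, tied together by linear-programming duality for the modified vertex cover. By strong duality,
\[
  MVC_{1/q}(H,w)=\max\Big\{\textstyle\sum_{e} w(e)\theta_e:\ \theta\ge 0,\ \sum_{e\ni u}\theta_e\le 1\ \forall u,\ \theta_e\le\tfrac1q\ \forall e\Big\},
\]
and I would fix an optimal dual vector $\theta^\ast$; note that $\theta^\ast$ is automatically a legal exponent system for Finner's inequality, since $\theta^\ast_e\le\tfrac1q\le 1$ and the fractional-matching constraints are exactly Finner's support conditions.

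Conditioning on $\Psi=(\psi_e)_{e}$, each $g_e$ becomes a function of $(\phi_u)_{u\in e}$ alone, so Finner's inequality with exponents $\theta^\ast$ gives $\E[\phi]{\prod_e g_e}\le\prod_e \E[\phi]{|g_e|^{1/\theta^\ast_e}\mid\psi_e}^{\theta^\ast_e}$ (reading $\theta^\ast_e=0$ as the trivial bound $1$); since $1/\theta^\ast_e\ge q$ and $|g_e|\le 1$, the $e$-th conditional moment is at most $R_e(\psi_e):=\E[\phi]{|g_e|^q\mid\psi_e}$, with $R_e\in[0,1]$ and $\E[\psi_e]{R_e}\le p^{w(e)}$. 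Taking expectations over $\Psi$, the lemma reduces to
\[
  \E{\textstyle\prod_{e:\theta^\ast_e>0}R_e(\psi_e)^{\theta^\ast_e}}\ \le\ d^{|V|}\,p^{\sum_e w(e)\theta^\ast_e}=d^{|V|}\,p^{MVC_{1/q}(H,w)},
\]
a statement about products of single-variable functions of a $(|E|-1)$-wise independent family.

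This last inequality is the heart of the matter, and I would resolve it by induction on $|E|$, making one of three moves. (a) If some $\theta^\ast_e=0$ --- equivalently, by complementary slackness, if some edge $e^\ast$ is \emph{redundant} for the cover ($MVC_{1/q}(H\setminus e^\ast,w)=MVC_{1/q}(H,w)$) --- drop it, using $\E{\prod_e g_e}\le\E{\prod_{e\neq e^\ast}|g_e|}$ and recursing: the residual hypergraph is either still not totally disconnected (so induction applies) or a disjoint union of edges, in which case the $|E|-1$ surviving $\psi_e$ are fully independent and everything factors. (b) If $H$ has a single-edge connected component $e^\ast$, peel it by an $L_1$--$L_\infty$ H\"older step: its $\phi_u$'s are unshared, so they can be integrated out to give $\bar g_{e^\ast}(\psi_{e^\ast})$ with $\E{|\bar g_{e^\ast}|}\le p^{w(e^\ast)}$, and conditioning on $\psi_{e^\ast}$ leaves a genuine sub-instance on $H\setminus e^\ast$ (whose $\psi$'s are $(|E|-2)$-wise independent, with marginals and moment bounds preserved); additivity of $MVC_{1/q}$ over components closes the recursion. (c) Otherwise --- $H$ essential and with no single-edge component --- I claim the optimal matching $\theta^\ast$ is \emph{peelable}: there is an edge $e^\ast$ with $\theta^\ast_{e^\ast}+\theta^\ast_e\le 1$ for all $e\neq e^\ast$ (automatic when $q=2$, since then $\theta^\ast_e\le\tfrac12$). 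Then H\"older with conjugate exponents $1/\theta^\ast_{e^\ast}$ and $1/(1-\theta^\ast_{e^\ast})$ turns the $e^\ast$-factor into $\E{R_{e^\ast}}^{\theta^\ast_{e^\ast}}\le p^{w(e^\ast)\theta^\ast_{e^\ast}}$ and factors the remaining product over the $|E|-1$ now-independent variables, giving $\le p^{\sum_e w(e)\theta^\ast_e}=p^{MVC_{1/q}(H,w)}$.

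The step I expect to be the main obstacle is verifying that moves (a)--(c) are exhaustive for every hypergraph that is not totally disconnected --- in particular the claim in (c) that an essential hypergraph with no single-edge component always admits a peelable optimal matching when $q=1$. This is exactly the point at which the combinatorial \emph{structure} of $H$, rather than merely its cover number, is needed (a graph like the bowtie is the delicate case); if the peelability claim fails one has to fall back on a cruder, more robust peeling that pays a union bound of size $d$ over the edges through each vertex, which is presumably why the factor $d^{|V|}$ appears in the statement even though the sharp form of the argument yields the $d$-free bound $p^{MVC_{1/q}(H,w)}$. Finally, the degenerate ingredients --- empty hyperedges, and removing degree-$\le 1$ vertices by folding their $\phi_u$ into the incident $\psi_e$ (which leaves $MVC_{1/q}$ unchanged for a suitable optimal cover) --- are routine and can be cleared away at the start.
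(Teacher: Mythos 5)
Your route is genuinely different from the paper's and, once one flagged step is supplied, it works --- in fact it gives the cleaner bound $p^{MVC_{1/q}(H,w)}$ with no $d^{|V|}$ factor, so your closing speculation that the $d^{|V|}$ must come from a cruder fallback is unnecessary (it is simply slack in the paper's statement). The paper instead inducts on the number of vertices lying in at least two edges: it conditions on $\phi_u$ for such a vertex, re-weights via $w_\phi(e)=\log_p \E{|g_e^\phi|^q}$, handles the base case (a star plus disjoint edges) by H\"older using the $(|E|-1)$-wise independence, and pays the $d^{|V|}$ through a union bound over the at most $d$ edges at the conditioned vertex. You trade that vertex-by-vertex conditioning for a one-shot generalized H\"older (Finner) over the independent $\phi$'s with LP-dual exponents, leaving only the limited independence of the $\psi$'s to handle; the reduction itself (duality, Finner with $\sum_{e\ni u}\theta^*_e\le 1$, $|g_e|^{1/\theta^*_e}\le |g_e|^q$, conditional moments depending only on $\psi_e$) is sound.

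The step you flagged --- exhaustiveness of (a)--(c), i.e.\ peelability when $q=1$ --- is true and has a two-line proof: let $e_0$ minimize $\theta^*_e$; if some $e_1\neq e_0$ has $\theta^*_{e_0}+\theta^*_{e_1}>1$, then every edge $e$ sharing a vertex with $e_1$ satisfies $\theta^*_e\le 1-\theta^*_{e_1}<\theta^*_{e_0}$, contradicting minimality, so $e_1$ must be a single-edge component. Hence after peeling single-edge components, the minimum-$\theta^*$ edge is always peelable, and your H\"older/Jensen step closes the induction, terminating at two intersecting edges where $\theta^*_1+\theta^*_2\le 1$ makes H\"older alone suffice. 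Two points of care in the peeling of a single-edge component $e_1$ (which you can equally do after the Finner reduction, by conditioning on $\psi_{e_1}$): the surviving conditional instance is a legal instance because any $|E|-2$ of the remaining $\psi$'s together with $\psi_{e_1}$ are mutually independent, so conditioning preserves both $(|E|-2)$-wise independence and the marginal bounds $\E{|g_e|^q\mid\psi_{e_1}}\le p^{w(e)}$ --- the latter needs pairwise independence, which holds since "isolated edge plus not totally disconnected" forces $|E|\ge 3$; and the peeled factor should be bounded by $p^{w(e_1)/q}$ (not $p^{w(e_1)}$ when $q=2$), which is exactly the $\frac1q w(e_1)$ that an isolated edge contributes to $MVC_{1/q}$. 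Finally, in move (a) no recursion is needed: once any one factor is discarded, the remaining $|E|-1$ variables are already mutually independent and the product factors directly.
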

  \begin{proof}
    We will induct on the number of vertices that lie in at least two
    edges of $E$.  By assumption, this is at least 1; let $u$ be one
    such vertex.

    Let $H^u = (V^u, E^u)$ denote the hypergraph obtained from $H$ by
    removing the vertex $u$ from $V$ and every edge in $E$.  Let
    $\kappa: E \to E^u$ be the mapping associated with this
    transformation, and define $\psi_{\kappa(e)} = \psi_e$.  For any
    given $\phi$, we define $g_{\kappa(e)}^\phi$ to be $g_e$
    conditioned on $\phi_u = \phi$:
  \[
    g_{\kappa(e)}^{\phi} = (g_e \mid \phi_u = \phi) = g_e(\psi_e, (\phi_v)_{v \in e}|_{\phi_u = \phi}).
  \]
  Let $\gamma$ denote $\E{\prod_{e \in E} g_e}$, the expectation we want to bound, and let
  $\Phi_{V}, \Psi_{E}$ denote the collection of $\phi_v$ and $\psi_e$,
  respectively.  We can then rewrite our desired quantity as
  \begin{align}
  \gamma = \E[\phi_u]{\E[\Phi_{V^u}, \Psi_{E^u}]{\prod_{e \in E^u}g_{e}^{\phi_u}}}.\label{eq:gammastep2}
  \end{align}
  For any fixed $\phi$ we can define the modified weight function
  $w_\phi: E^u \to [0, \infty]$ that results from conditioning on
  $\phi_u = \phi$:
  \[
    w_\phi(e) := \log_p \E[\Phi_{V^u},\psi_e]{\abs{g_e^\phi}^q}.
  \]
  We know for all $e \in E$ that
  \begin{align}
    \E[\phi_u]{p^{w_{\phi_u}(\kappa(e))}} = \E[\Phi_V,\psi_e]{\abs{g_e}^q} \leq p^{w(e)}.\label{eq:pwavg}
  \end{align}
  Additionally, when $u \notin e$, $g_e$ is independent of $\phi_u$ so
  $w_\phi(\kappa(e))$ is independent of $\phi$, and hence
  $w_\phi(\kappa(e)) \geq w(e)$.

  We now proceed to show for every $\phi$ that the inner term
  in~\eqref{eq:gammastep2} satisfies
  \begin{align}
    \label{eq:reducedgoal}
    \E[\Phi_{V^u}]{\E[\Psi_E]{\prod_{e \in
          E^u}{g_{e}^{\phi}}}} \le d^{|V| -
        1}p^{MVC_{1/q}(H, w) - \max_{e \ni u} (w(e) - w_\phi(\kappa(e)))}.
  \end{align}
  For each $\phi$, we
  consider two cases:
  \begin{description}
  \item[Inductive step: $H^{u}$ is not totally disconnected.] In this case $u$
    was not the only vertex in $H$ that appears in at least two edges,
    then $H^u$ satisfies the constraints of our lemma and has fewer
    vertices that appear in at least two edges.  Furthermore, the
    random variables $g_{\kappa(e)}^\phi$ satisfy the constraints for
    the lemma with weight function $w_\phi$.  Therefore by the
    inductive hypothesis:
    \[
      \E[]{\prod_{e \in E^u}g_{\kappa(e)}^{\phi}} \le d^{|V| -
        1}p^{MVC_{1/q}(H^u, w_\phi)},
    \]
    and it suffices to estimate $MVC_{1/q}(H^{u}, w_\phi)$. As previously
    noted, every edge $e$ such that $w_{\phi}(\kappa(e)) < w(e)$
    contains $u$, so we can cover $(H, w)$ by taking any cover of
    $(H^u, w_\phi)$ and placing
    $\max_{e \ni u} (w(e) - w_{\phi}(\kappa(e)))$ weight on $u$.
    Hence
    \[
      MVC_{1/q}(H, w) \leq MVC_{1/q}(H^u, w_\phi) + \max_{e \ni u} (w(e) - w_{\phi}(\kappa(e)))
    \]
    which, with the previous equation, gives~\eqref{eq:reducedgoal}.
  \item[Base case: $H^{u}$ is totally disconnected.]  In this case,
    $u$ is the only vertex that appears in at least two edges of $E$.
    Let $E_1 = \{e \in E \mid u \in e\}$ and $E_2 = E \setminus E_1$.
    Let $e' = \argmax_{e \in E_1} w_\phi(e)$ and $e'' = \argmax_{e \in E_1 \setminus \lbrace e' \rbrace} w_\phi(e)$.  We note that
    \[
      MVC_{1/q}(H, w_\phi) = w_\phi(e'') + \frac{1}{q} (w_\phi(e') - w_\phi(e'')) + \frac{1}{q} \sum_{e \in E_2} w_\phi(e)
    \]
    because $q \in \{1, 2\}$.

    Let $h_e = g_{\kappa(e)}^\phi = (g_e \mid \phi_u = \phi)$, a
    function of $\psi_e$ and $(\phi_v)_{v \in e}$.  The $h_e$ for
    all $e \in E_2 \cup \{e'\}$ are independent of each other, because
    these are $\abs{E_2} + 1 \leq \abs{E} - 1$ variables, so the
    $\psi_{e}$ are fully independent, and no $\phi_v$ variable appears
    in more than one such $h_e$.  The LHS of~\eqref{eq:reducedgoal}
    which we want to bound is equal to
    \begin{align*}
      \E[\Phi_{V^u},\Psi_E]{\prod_{e \in E} h_e}
      &\leq \E[(h_e)_{e \in E}]{\prod_{e \in E_2 \cup \{e', e''\}} \abs{h_e}}\\
      &= \E[(h_e)_{e \in E_2}]{\prod_{e \in E_2} \abs{h_e} \E[h_{e'}, h_{e''}]{\abs{h_{e'}h_{e''}} \mid (h_e)_{e \in E_2}}}
    \end{align*}
    On the other hand, the dependency structure implies
    \[
      \E[]{\abs{h_{e'}}^q \mid (h_e)_{e \in E_2}} = \E{\abs{h_{e'}}^q} = p^{w_\phi(\kappa(e'))}
    \]
    regardless of the values of $h_e$ being conditioned upon.  By the
    same logic,
    \[
      \E[]{\abs{h_{e''}}^q \mid (h_e)_{e \in E_2}} = p^{w_\phi(\kappa(e''))}.
    \]
    Splitting into cases for $q \in \{1, 2\}$, we have by H\"older's
    inequality that
    \[
      \E{\abs{h_{e'} h_{e''}} \mid (h_e)_{e \in E_2}} \leq \left\{
        \begin{array}{rll}
          \E{h_{e'}^2}^{1/2}\E{h_{e''}^2}^{1/2} &= p^{\frac{1}{2} w_\phi(\kappa(e')) + \frac{1}{2} w_\phi(\kappa(e''))}& \text{for } q = 2\\
          \E{\abs{h_{e'}}} \cdot 1 &= p^{w_\phi(\kappa(e'))}& \text{for } q = 1
        \end{array}
      \right.
    \]
    In either case,
    \[
      \E{\abs{h_{e'} h_{e''}} \mid (h_e)_{e \in E_2}} \leq p^{(1 - \frac{1}{q})w_\phi(\kappa(e'')) + \frac{1}{q}w_\phi(\kappa(e'))}
    \]
    so the quantity we want to bound is
    \begin{align*}
      \E[]{\prod_{e \in E} h_e}
      &\leq p^{(1 - \frac{1}{q})w_\phi(\kappa(e'')) + \frac{1}{q}w_\phi(\kappa(e'))} \prod_{e \in E_2} \E{\abs{h_e}}\\
      &\leq p^{(1 - \frac{1}{q})w_\phi(\kappa(e'')) + \frac{1}{q}w_\phi(\kappa(e'))} \prod_{e \in E_2} \E{\abs{h_e}^q}^{1/q}\\
      &\leq p^{(1 - \frac{1}{q})w_\phi(\kappa(e'')) + \frac{1}{q}w_\phi(\kappa(e')) + \frac{1}{q}\sum_{e \in E_2} w(e)}\\
      &= p^{MVC_{1/q}(H, w)} \cdot p^{(1 - \frac{1}{q})(w_\phi(\kappa(e'')) - w(e'')) + \frac{1}{q}(w_\phi(\kappa(e')) - w(e'))}\\
      &\leq p^{MVC_{1/q}(H, w)} \cdot p^{\min_{e \in \{e', e''\}} (w_\phi(\kappa(e)) - w(e))}\\
      &\leq p^{MVC_{1/q}(H, w)} \cdot p^{-\max_{e \ni u} (w(e) - w_\phi(\kappa(e)))}
    \end{align*}
    giving~\eqref{eq:reducedgoal}.
  \end{description}
  Combining the two cases gives~\eqref{eq:reducedgoal} for all $\phi$ unconditionally.
  Plugging into~\eqref{eq:gammastep2} gives
  \begin{align*}
  \gamma &\le  d^{|V| - 1}p^{MVC_{1/q}(H, w)}\E[\phi_u]{p^{- \max_{e \ni u} (w(e) - w_{\phi_u}(\kappa(e)))}}\\
  &=  d^{|V| - 1}p^{MVC_{1/q}(H, w)}\E[\phi_u]{\max_{e \ni u}p^{ w_{\phi_u}(\kappa(e)) - w(e)}}
  \end{align*}
  For any given $e$, from~\eqref{eq:pwavg} we have
  \begin{align*}
  \E[\phi_u]{p^{ w_{\phi_u}(\kappa(e)) - w(e)}} \leq p^{w(e)} \times p^{-w(e)} = 1.
  \end{align*}
  Since no more than $d$ edges include $u$, this means
  \[
  \E[\phi_u]{\max_{e \ni u}p^{ w_{\phi_u}(\kappa(e)) - w(e)}} \le \E[\phi_u]{\sum_{e \ni u}p^{ w_{\phi_u}(\kappa(e)) - w(e)}} \leq d 
  \]
  which gives
  \[
    \gamma \le d^{|V|}p^{MVC_{1/q}(H, w)}
  \]
  as desired.
\end{proof}

\section{Hypergraph Counting with a Promise}\label{sec:promisegame}
In both of the games that follow, we will assume that the players are deterministic. This is without loss of generality by Yao's minimax principle since the inputs are sampled from a fixed distribution.
\subsection{Game}
We will define a $|E| + 1$-player game $\mathtt{PromiseCounting}(H,n,T,\varepsilon)$ (with $H$ a hypergraph, $n, T \in \mathbb{N}, \varepsilon \in \lbrace 1/T, 2/T, \dots, 1\rbrace)$ , as follows: There is one referee, who receives messages from every other player. No other communication takes place. Each player besides the referee corresponds to an edge $e \in E$. 

Let $N = T + (n - T)|E|$. For each edge $e \in E$, let $L_e \subset \lbrack N \rbrack$ be an $n$-element set containing $\lbrack T \rbrack$ and $n - T$ elements disjoint from every other $L_e$, so that if $a \not = b$, $L_a \cap L_b = \lbrack T \rbrack$, and $\bigcup_{e \in E} L_e = \lbrack N \rbrack$. For each $e \in E$, let $\rho_e : \lbrack n \rbrack \rightarrow L_e$ be a fixed bijection such that $\rho_e \vert_{\lbrack T \rbrack}$ is the identity.

An instance of $\mathtt{PromiseCounting}(H,n,T,\varepsilon)$ is as follows:

\begin{itemize}
\item For each edge $e \in E$:
\begin{itemize}
\item A string $x_e \in \lbrace 0, 1\rbrace^n$.
\item A permutation $\pi_e$ on $L_e$.
\end{itemize}
\item For each vertex $v \in V$:
\begin{itemize}
\item A permutation $\pi_v$ on $\lbrack N \rbrack$.
\end{itemize}
\item A string $\tau \in \lbrace 0^{\varepsilon T}, 1^{\varepsilon T}\rbrace$
\end{itemize}

\noindent
The players have the following promise: \[
\bigoplus_{e \in E}x_e^{1:\varepsilon T} = \tau
\]
We will write $X$ for the strings $(x_e)_{e \in E}$, $\Pi_E$ for the permutations $(\pi_e)_{e \in E}$, and $\Pi_V$ for the permutations $(\pi_v)_{v \in V}$.
They have access to the following information:
\begin{itemize}
\item For each player $e \in E$:
\begin{itemize}
\item $x_e \rho_e \pi_e$
\item $(\pi_v(\pi_e^{-1}(i)))_{v \in e, i \in L_e}$
\end{itemize}
\item For the referee:
\begin{itemize}
\item $\Pi = (\Pi_E, \Pi_V)$
\end{itemize}
\end{itemize}

\noindent
Given the messages received from the players, the referee's task will be to determine whether $\tau = 0^{\varepsilon T}$ or $\tau = 1^{\varepsilon T}$. 

\subsection{Hard Instance}
We will lower bound the complexity of this problem under the following hard input distribution: $\tau$ is chosen uniformly from $\lbrace 0^{ \varepsilon T}, 1^{\varepsilon T}\rbrace$, and then the strings $(x_e)_{e \in E}$ are chosen uniformly from: \[
\left\lbrace (x_e)_{e \in E} \in \lbrace 0, 1 \rbrace^{|E|}: \bigoplus_{e \in E} x_e^{1: \varepsilon T} = \tau \right\rbrace
\]
Every permutation $\pi_u, \pi_e$ is chosen uniformly at random and independently of each other and the strings.

\begin{figure}
  \centering
    \begin{subfigure}[t]{0.47\textwidth}
      \centering
      \includegraphics[width=\textwidth]{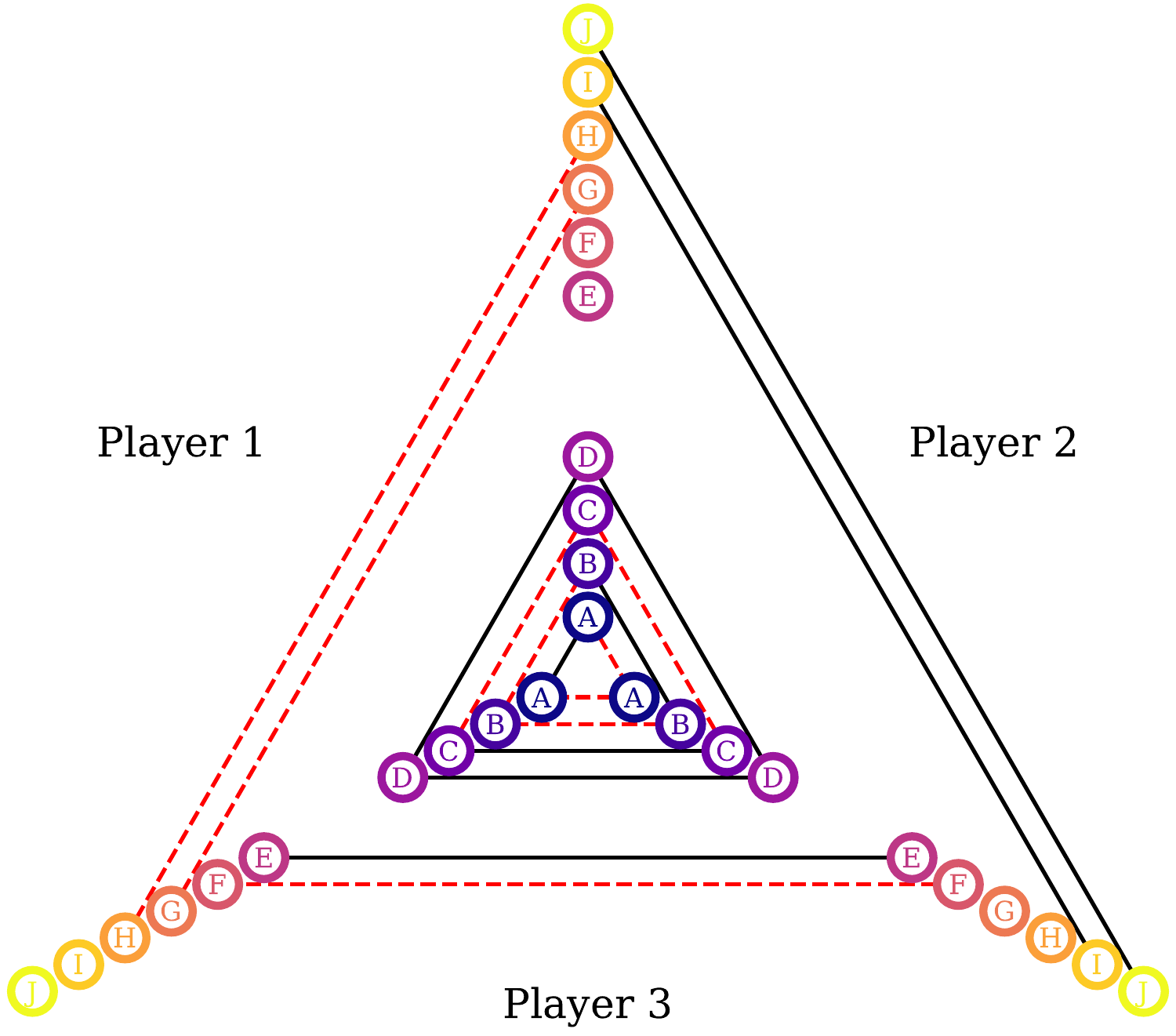}
      \caption{The player's instance ignoring the permutations.  The
        $x_e$ are the indices of red edges, read from inside out:
        $x_1 = [0,1,1,0,1,1]$, $x_2 = [1,0,1,0,0,0]$,
        $x_3 = [1,1,0,0,0,1]$}
      \label{fig:fig2a}
    \end{subfigure}\hfill
    \begin{subfigure}[t]{0.47\textwidth}
      \centering
      \includegraphics[width=\textwidth]{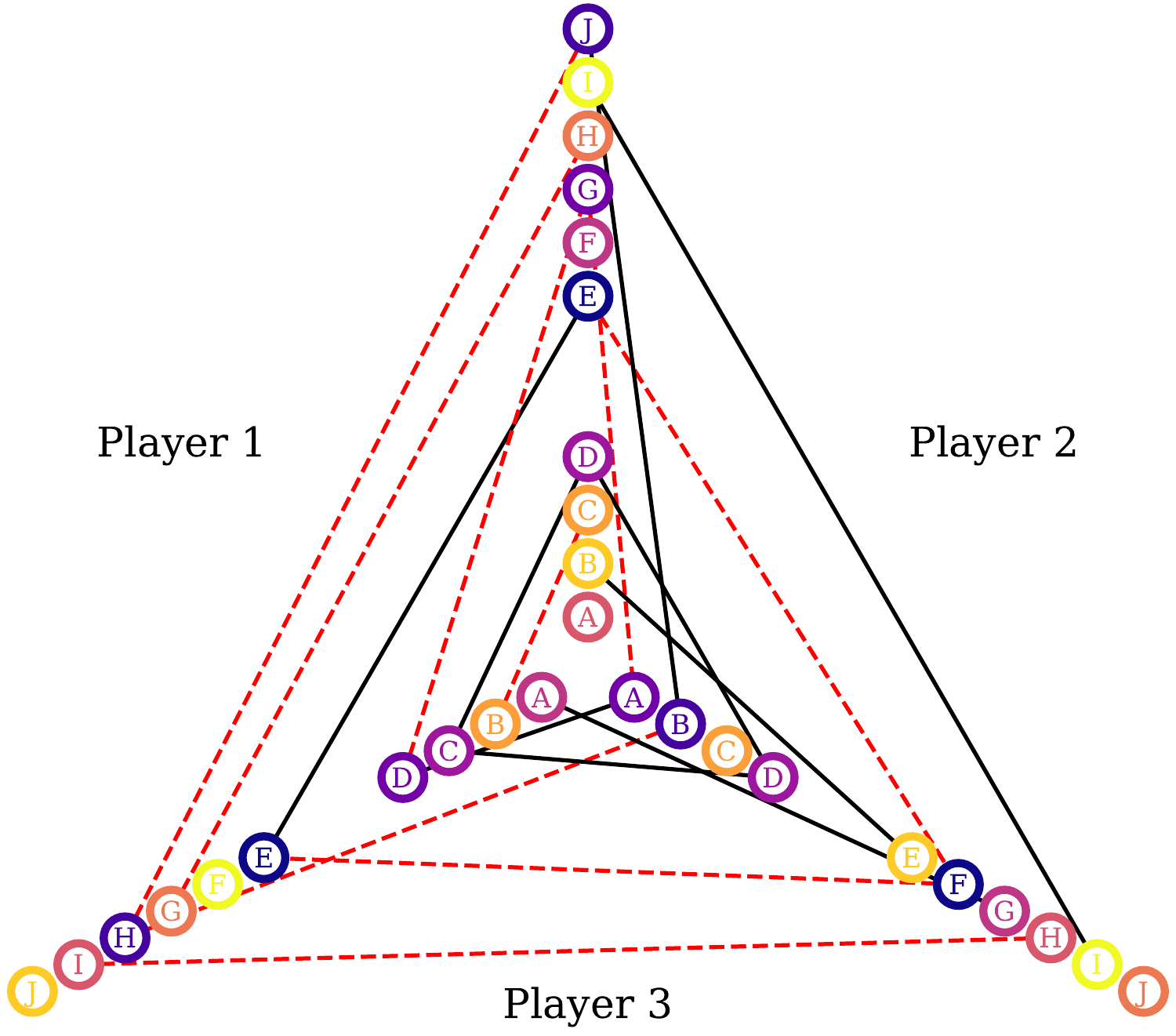}
      \caption{The hard distribution permutes each set of vertices.
        The players see their edges and associated labels, but not the
        vertex colors (which represent the pre-permutation
        identities).}
      \label{fig:fig2b}
    \end{subfigure}

    \vspace{1em}
    \begin{subfigure}[t]{0.47\textwidth}
      \centering
\begin{tabular}{|l|l|l||l|l|l||l|l|l|}
\hline
\multicolumn{3}{|c||}{Player 1} & \multicolumn{3}{|c||}{Player 2} & \multicolumn{3}{c|}{Player 3} \\
\hline
x&u&v&x&u&v&x&u&v\\
\hline\hline
0&E&E&0&B&E&0&D&C\\
\hline
1&H&J&0&D&D&0&G&A\\
\hline
1&G&H&1&E&F&1&H&I\\
\hline
1&B&C&0&I&I&0&A&D\\
\hline
1&D&G&1&G&A&1&F&E\\
\hline
0&C&D&0&J&B&1&B&H\\
\hline
\end{tabular}
      \caption{Each player's input consists of their edges
        in~(\subref{fig:fig2b}) in a random order.  $u$ represents the
        vertex counterclockwise of the player, and $v$ represents the vertex clockwise.}
      \label{fig:fig2c}
    \end{subfigure}
    \caption{Encoding of lower bound instance for triangle counting}\label{fig:instance2}
\end{figure}

\subsection{Lower Bound}
For each player $e$, write $m_e(x_e\rho_e\pi_e, (\pi_v(\pi_e^{-1}(i)))_{v \in e, i \in L_e})$ for the message the player sends to the referee on seeing $x_e\rho_e\pi_e$ and $(\pi_v(\pi_e^{-1}(i)))_{v \in e, i \in L_e}$.
 
\begin{theorem}\label{thm:lb-promise}
  Let $H$ be a connected hypergraph with more than one edge. Let
  $c \in \lbrack n \rbrack$. Suppose that, for all inputs $(X, \Pi)$
  to the game, no player sends a message of more than $c$ bits, and suppose that $\varepsilon T \le n/10$.

Let $p : \lbrace 0^{\varepsilon T}, 1^{\varepsilon T} \rbrace \rightarrow \lbrack 0, 1\rbrack$ be the referee's posterior distribution on $\tau$. Let $\nu$ be the distribution of $\mathcal{U}(\lbrace 0^{\varepsilon T}, 1^{\varepsilon T} \rbrace)$, the uniform distribution on the two-element set $\lbrace 0^{\varepsilon T}, 1^{\varepsilon T}\rbrace$. Let $\mu = MVC_{1/2}(H)$, and let $0 < \delta < 1$.

There exists a constant $\gamma$, depending only on $H$, such that, if $c \le \gamma \frac{n}{(\delta^2 \varepsilon T)^{1/\mu}}$:
\[
\E[X, \Pi]{||p - \nu||_{TV}} \le \delta
\]
\end{theorem}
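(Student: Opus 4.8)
The plan is to reduce the referee's bias to a single renormalised Fourier coefficient, expand it with the convolution theorem, and then control the resulting sum of products of the players' Fourier coefficients with the Weights Lemma (Lemma~\ref{lem:weights}) and KKL (Lemma~\ref{lem:kkl}). First I would fix the permutations $\Pi$, so that each player's message becomes a deterministic function of their label string; let $f_e:\{0,1\}^n\to\{0,1\}$ be the indicator that a candidate string would reproduce the message the referee actually received from player $e$, write $f(x)=\prod_{e\in E}f_e(x_e)$ on $\{0,1\}^{|E|n}$, and let $q$ be the indicator of the promise $\bigoplus_{e}x_e^{1:\varepsilon T}\in\{0^{\varepsilon T},1^{\varepsilon T}\}$. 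Under the hard distribution the referee's posterior on $(x_e)_e$ is uniform on $\supp(fq)$, $\tau$ is a deterministic function of this vector, and a short computation with the renormalised transform $\wt g(s)=\frac{2^m}{|\supp g|}\wh g(s)=\E[z\in\supp g]{(-1)^{s\cdot z}}$ gives $\|p-\nu\|_{TV}=\tfrac12\,\bigl|\wt{fq}((e_1)_{e\in E})\bigr|$.

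Next I would expand this coefficient. Since $q$ depends only on the $\varepsilon T$ promise coordinates of each $x_e$, writing $q=\prod_{i=2}^{\varepsilon T}\mathbbm 1[z_i=z_1]$ with $z_i=\bigoplus_e (x_e)_i$ and expanding shows that $\wh q$ is supported on the ``diagonal'' characters $(t0^{n-\varepsilon T})_{e\in E}$ with $|t|$ even, all of equal value. Combining the convolution theorem (Lemma~\ref{lem:convolution}) with the product decomposition (Lemma~\ref{lem:fdecomp}) turns $\wt{fq}((e_1)_e)$ into $\frac{|\supp f|}{|\supp(fq)|\,2^{\varepsilon T-1}}\sum_{t:\,|t|\ \mathrm{odd}}\prod_{e}\wt{f_e}(t0^{n-\varepsilon T})$. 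The key point, specific to the promise setting, is that under the hard distribution a fixed message tuple occurs with probability exactly $|\supp(fq)|/2^{|E|n-\varepsilon T+1}$, so the normalising factor cancels; and since $|\supp f|/2^{|E|n}$ is the probability of the same tuple when the label strings are \emph{uniform}, averaging over the input yields
\[
  \E[X,\Pi]{\|p-\nu\|_{TV}}=\tfrac12\,\E[X',\Pi]{\,\Bigl|\sum_{\substack{t\in\{0,1\}^{\varepsilon T}\\ |t|\ \mathrm{odd}}}\ \prod_{e\in E}\wt{f_e}(t0^{n-\varepsilon T})\Bigr|\,},
\]
with $X'$ uniform on $\{0,1\}^{|E|n}$ and $f_e$ the consistency indicator for $X'_e$.

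I would then bound the right-hand side by $\sum_{k\ \mathrm{odd}}\sum_{|t|=k}\E{\prod_e|\wt{f_e}(t0^{n-\varepsilon T})|}$ and attack each weight-$k$ layer with the Weights Lemma: for a fixed $t$ with $|t|=k$, take $\phi_u=\pi_u$ (independent over $u\in V$), $\psi_e=(X'_e,\pi_e)$ (independent over $e\in E$), and $g_e=|\wt{f_e}(t0^{n-\varepsilon T})|\in[0,1]$, which is a deterministic function of $(\psi_e,(\phi_u)_{u\in e})$. Using that, conditioned on everything the referee can infer about player $e$'s message cell, the order $\pi_e$ remains uniform, one gets $\E{g_e^2}=\E{\wt{f_e}(t0^{n-\varepsilon T})^2}=\frac1{\binom nk}\E{\sum_{|s|=k}\wt{f_e}(s)^2}=:p_k$, independent of which weight-$k$ set $t$ is. Since $H$ is connected with more than one edge it is not totally disconnected, so the Weights Lemma (with $q=2$, $w\equiv1$, hence exponent $MVC_{1/2}(H)=\mu$) gives $\E{\prod_e|\wt{f_e}(t0^{n-\varepsilon T})|}\le d^{|V|}p_k^{\mu}$, and the layer contributes at most $\binom{\varepsilon T}{k}d^{|V|}p_k^{\mu}$. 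Writing $\sum_{|s|=k}\wt{f_e}(s)^2=\frac{2^{2n}}{|\supp f_e|^2}\sum_{|s|=k}\wh{f_e}(s)^2$ and applying Lemma~\ref{lem:kkl} with an optimised noise parameter (together with $\E{2^n/|\supp f_e|}\le 2^{c+1}$ and Jensen) yields $p_k\lesssim\binom nk^{-1}\min((O(c)/k)^k,2^{c+1})$; plugging in $\binom{\varepsilon T}{k}\le(e\varepsilon T/k)^k$, $\binom nk\ge(n/k)^k$ and $\binom{\varepsilon T}{k}/\binom nk\le(\varepsilon T/n)^k\le 10^{-k}$ (using $\varepsilon T\le n/10$), the $k$-th layer is geometrically dominated by the $k=1$ layer $\lesssim_H\varepsilon T\,(c/n)^{\mu}$, so $\E[X,\Pi]{\|p-\nu\|_{TV}}\lesssim_H \varepsilon T\,(c/n)^{\mu}$. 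Imposing that this be at most $\delta$ and solving for $c$ produces the stated threshold on $c$ for an appropriate constant $\gamma=\gamma(H)$ (the precise $\delta$-dependence and the factor $2$ in $\varepsilon T\le n/10$ are only used to pin down the constants in the geometric summation and the negligible large-$k$ tail).

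The main obstacle is the step where the Weights Lemma is invoked: one must verify that $|\wt{f_e}(t0^{n-\varepsilon T})|$ genuinely fits its randomness model — that the $\phi_u$ may be taken fully independent and the $\psi_e$ at least $(|E|-1)$-wise independent — and, most delicately, that the per-coordinate second moment $\E{\wt{f_e}(t0^{n-\varepsilon T})^2}$ is controlled by $\frac1{\binom nk}\E{\sum_{|s|=k}\wt{f_e}(s)^2}$ \emph{uniformly} over weight-$k$ sets $t$; this rests on the fact that player $e$'s random order $\pi_e$ is still uniform given everything the referee learns about that player's message cell, which has to be checked from the precise way $\pi_e$ and the $\pi_v$ enter the game. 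Once that is in place, the Fourier bookkeeping of the first two steps and the KKL/geometric-sum estimates at the end are routine.
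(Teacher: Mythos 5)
Your proposal is correct and follows the paper's overall architecture: the same reduction of $\|p-\nu\|_{TV}$ to the odd-weight ``diagonal'' Fourier sum via the convolution theorem (Lemmas~\ref{lem:convolution}, \ref{lem:qmult}) and the product decomposition (Lemma~\ref{lem:fdecomp}), the same application of the Weights Lemma with $q=2$ and exponent $\mu=MVC_{1/2}(H)$, and the same KKL/Parseval split over Hamming weights. Where you genuinely depart from the paper is in handling the normalizing factor $\prod_e|F_e|/(2^{\eps T}|J|)$ and the small-message-cell issue. The paper shows this factor has expectation at most $1$, conditions on an event $\mathcal{E}$ (Lemma~\ref{lem:setsizes} plus Markov) on which it is at most $3/\delta$ and $|F_e|\ge 2^{n-c}$, masks $\wt{f_e}$ accordingly, and consequently must drive the Fourier sum down to $O(\delta^2)$. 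You instead cancel the factor exactly by a change of measure: since a message tuple $M$ has probability $|J_M|/2^{|E|n-\eps T+1}$ under the hard distribution and $|F_M|/2^{|E|n}$ under uniform labels, one gets the identity $\E[X,\Pi]{\|p-\nu\|_{TV}}=\tfrac12\,\E[X',\Pi]{\bigl|\sum_{|t|\text{ odd}}\prod_{e\in E}\wt{f_e}(t0^{n-\eps T})\bigr|}$ with $X'$ uniform, and you replace the masking by $\E{2^n/|F_e|}\le 2^{c}$ (the number of distinct messages) combined with Jensen (exponent $2\delta\le 1$) inside the KKL bound, which indeed reproduces $\beta_k\lesssim\binom{n}{k}^{-1}(O(c)/k)^k$. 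Both steps check out; as a bonus, the uniform measure makes the $(\abs{E}-1)$-wise-independence hypothesis of Lemma~\ref{lem:weights} trivial (the paper needs it precisely because of the promise conditioning), and requiring only $O(\delta)$ rather than $O(\delta^2)$ on the Fourier sum would even improve the $\delta$-dependence of the threshold on $c$. The one loose end---which the paper itself patches only by proving the statement for messages of length $c-2\log(3\abs{E}/\delta)-C'$---is the high-weight tail: your bound of the form $C\abs{E}(2e\eps T/n)^c$ is only $O(\delta)$ once $c\ge\Omega(\log(\abs{E}/\delta))$, so an analogous adjustment (or an explicit case distinction for very small $c$) should be spelled out rather than absorbed into ``constants''; and, as you flag yourself, the exchangeability of $\wt{f_e}(s)$ over weight-$k$ sets $s$ (so that the per-$s$ second moment equals its average) must be verified from the way $\pi_e$ and the $\pi_v$ enter the players' inputs, exactly as the paper does.
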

\noindent
We will prove a weaker form of the theorem in which no player sends a
message of more than $c - 2 \log(3\abs{E}/\delta) - C'$ bits, for a
sufficiently large constant $C'$.  This implies the lemma statement
for a slightly larger $\gamma$, since the adjustment is
$O(\frac{n}{(\delta^2\varepsilon T)^{1/\mu}})$.

We will prove this by relating the distance of $p$ from uniform to the Fourier coefficients of the indicator functions associated with the messages sent by the players.

For each $e \in E$, define the random function $f_e : \lbrace 0, 1\rbrace^n \rightarrow \lbrace 0, 1 \rbrace$ by: \[
f_e(z) = \begin{cases}
1 & \mbox{if $m_e(z\rho_e\pi_e, (\pi_v(\pi_e^{-1}(i)))_{v \in e, i \in L_e}) = m_e(x_e\rho_e\pi_e, (\pi_v(\pi_e^{-1}(i)))_{v \in e, i \in L_e})$.}\\
0 & \mbox{otherwise.}
\end{cases}
\]
And, with $Y = ((y_e)_{e\in E})$, define $f : \lbrace 0, 1\rbrace^{|E|n} \rightarrow \lbrace 0, 1 \rbrace$ by: \[
f(Y) = \prod_{e \in E}f_e(y_e)
\]
Let $q : \lbrace 0, 1 \rbrace^{|E|n} \rightarrow \lbrace 0, 1 \rbrace$ be given by: \[
q(Y) = \begin{cases}
1 & \mbox{if $\bigoplus_{e \in E} y_e^{1:\varepsilon T} \in \lbrace 0^{\varepsilon T}, 1^{\varepsilon T}\rbrace$.}\\
0 & \mbox{otherwise.}
\end{cases}
\]
Let $F_e = f_e^{-1}(\lbrace 1\rbrace) \subseteq \{0,1\}^n$, and let: \[
F = \prod_{e \in E} F_e \subseteq \{0, 1\}^{\abs{E}n}.
\]  
Let \[
Q = q^{-1}(\lbrace 1\rbrace) \subseteq \{0, 1\}^{\abs{E}n}
\]
and \[
J = F \cap Q.
\]
So then, as the game's inputs are uniformly distributed among those such that $\bigoplus_{e \in E}x_e^{1:\varepsilon T} \in \lbrace 0^{\varepsilon T}, 1^{\varepsilon T}\rbrace$, the referee's posterior distribution on $X$ is uniform on $J$. We are now ready to calculate the referee's posterior distribution on $\tau$, writing $Z = (z_e)_{e \in E}$. 
\[
p(y) = \frac{\left|\left\lbrace Z \in J : \bigoplus_{e \in E} z_e^{1:\varepsilon T} = y\right\rbrace\right|}{|J|}
\]
We can now address the total variation distance:
\begin{align*}
||p - \nu||_{TV} & = \frac{1}{2}|p(0^{\varepsilon T}) - p(1^{\varepsilon T})|\\
&= \frac{1}{2|J|} \left|\left(\left|\left\lbrace Z \in J : \bigoplus_{e \in E} z_e^{1:{\varepsilon T}} = 0^{\varepsilon T}\right\rbrace\right| - \left|\left\lbrace Z \in J : \bigoplus_{e \in E} z_e^{1:{\varepsilon T}} = 1^{\varepsilon T}\right\rbrace\right|\right)\right|\\
&= \frac{1}{2|J|}\sum_{Z \in \lbrace 0, 1\rbrace^{|E|n}}f(z)q(z)(-1)^{\sum_{e \in E}(z_e)_1}\\
&= \frac{2^{|E|n-1}}{|J|}\widehat{fq}((e_1)_{e \in E})
\end{align*}

\noindent
We now introduce a couple of lemmas characterizing the Fourier coefficients of $fq$.

\begin{lemma}
\[
\widehat{q}(S) = \begin{cases}
2^{1-{\varepsilon T}} & \mbox{if $S = (s0^{n-{\varepsilon T}})_{e \in E}$ with $|s|$ even.}\\
0 & \mbox{otherwise.}
\end{cases}
\]
\end{lemma}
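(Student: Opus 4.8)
The statement to prove is a clean Fourier-coefficient computation: $q$ is the indicator of the set $Q$ of strings $Y=(y_e)_{e\in E}$ whose XOR $\bigoplus_e y_e^{1:\varepsilon T}$ lies in $\{0^{\varepsilon T},1^{\varepsilon T}\}$, and we want its Fourier transform. My plan is to directly expand $\widehat q(S)$ from the definition, $\widehat q(S)=2^{-|E|n}\sum_{Y}q(Y)(-1)^{S\cdot Y}$, and observe that $q$ factors as a product of two independent conditions: the condition only constrains the first $\varepsilon T$ coordinates of each $y_e$, and imposes no constraint at all on the remaining $n-\varepsilon T$ coordinates of each block. So the sum over the unconstrained coordinates is $2^{(n-\varepsilon T)|E|}$ if the corresponding bits of $S$ are all zero, and $0$ otherwise; this is exactly why a nonzero coefficient forces $S=(s_e 0^{n-\varepsilon T})_{e\in E}$ for some strings $s_e\in\{0,1\}^{\varepsilon T}$. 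I would isolate this observation as the first step.

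The second step handles the constrained block. Writing $z_e = y_e^{1:\varepsilon T}\in\{0,1\}^{\varepsilon T}$, the remaining sum is over all tuples $(z_e)$ with $\bigoplus_e z_e\in\{0^{\varepsilon T},1^{\varepsilon T}\}$, weighted by $(-1)^{\sum_e s_e\cdot z_e}$. The key sub-step is to note that $q$ restricted to this block is itself a product: it equals $q_0 + q_1$ where $q_b$ is the indicator of $\bigoplus_e z_e = b^{\varepsilon T}$, and each $q_b$ is a coset indicator of the linear code $\{(z_e): \bigoplus_e z_e=b^{\varepsilon T}\}$. Standard Fourier analysis of (affine) linear codes gives that $\widehat{q_b}$ is supported exactly on characters $S$ that are constant on the dual — i.e.\ all $s_e$ must be equal to a common $s\in\{0,1\}^{\varepsilon T}$ — with value $\pm 2^{-(|E|-1)\varepsilon T}\cdot 2^{-?}$ appropriately normalized, the sign on the $b=1$ coset being $(-1)^{s\cdot 1^{\varepsilon T}} = (-1)^{|s|}$. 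Adding $\widehat{q_0}+\widehat{q_1}$ then gives $1+(-1)^{|s|}$, which is $2$ when $|s|$ is even and $0$ when $|s|$ is odd; carrying the normalization constants through yields the claimed value $2^{1-\varepsilon T}$. Alternatively, and perhaps more cleanly for write-up, I would just compute $\sum_{(z_e): \oplus z_e = \tau}(-1)^{\sum_e s_e\cdot z_e}$ directly: fix $z_{e_0} = \tau\oplus\bigoplus_{e\ne e_0} z_e$ and sum freely over the other $|E|-1$ blocks, getting a product of geometric-type sums $\prod_{e\ne e_0}\sum_{z_e}(-1)^{(s_e\oplus s_{e_0})\cdot z_e}$ times $(-1)^{s_{e_0}\cdot\tau}$, which vanishes unless all $s_e$ are equal, and then sum over $\tau\in\{0^{\varepsilon T},1^{\varepsilon T}\}$.

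I do not anticipate a genuine obstacle here — this is a routine, if slightly fiddly, character-sum computation, and the only thing to be careful about is bookkeeping the normalization: tracking the $2^{-|E|n}$ out front, the $2^{(n-\varepsilon T)|E|}$ from the free block, the $2^{(|E|-1)\varepsilon T}$ from the $|E|-1$ free constrained blocks, and the factor $2$ from summing over the two values of $\tau$, so that everything collapses to $2^{1-\varepsilon T}$. The one conceptual point worth stating explicitly is that the "all $s_e$ equal" condition together with the "$0^{n-\varepsilon T}$ tail" condition is exactly the assertion $S=(s0^{n-\varepsilon T})_{e\in E}$ in the lemma, and that the surviving characters are precisely those with $|s|$ even.
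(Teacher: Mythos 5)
Your proposal is correct, and the bookkeeping you outline does collapse to $2^{1-\varepsilon T}$ (indeed $2^{-|E|n}\cdot 2^{(n-\varepsilon T)|E|}\cdot 2^{(|E|-1)\varepsilon T}\cdot\bigl(1+(-1)^{|s|}\bigr)$ gives exactly the claimed value), but your route differs from the paper's in organization. The paper handles each vanishing case by an explicit involution: for $S$ not of the form $(s)_{e\in E}$ it pairs each $z$ with the string obtained by flipping one bit in two different blocks, for a support bit beyond position $\varepsilon T$ it flips that single bit in one block, and for $|s|$ odd it flips the entire first $\varepsilon T$ bits of one block (using that this swaps the two allowed values of the XOR); in each case $q$ is preserved while the character flips sign, so the coefficient is zero, and for the surviving case it simply observes $\chi_S\equiv 1$ on $Q$, so $\widehat q(S)=|Q|/2^{|E|n}=2^{1-\varepsilon T}$. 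You instead do one unified character-sum computation: sum out the unconstrained tail coordinates (forcing the $0^{n-\varepsilon T}$ part of $S$), then eliminate one constrained block via $z_{e_0}=\tau\oplus\bigoplus_{e\neq e_0}z_e$ and factor the remainder into sums that vanish unless all $s_e$ coincide, with the sum over $\tau$ contributing $1+(-1)^{|s|}$. The two arguments rest on the same underlying fact (a nontrivial character sums to zero over a subgroup — the paper's pairings are just the standard proof of this), but yours computes all coefficients at once and makes the normalization transparent, while the paper's case analysis avoids any algebraic manipulation of the exponent at the cost of three separate symmetry arguments. The only soft spot in your write-up is the first, coset-indicator formulation, where you leave the normalization unresolved ("$2^{-?}$"); since your alternative direct computation is complete and correct, that is a matter of presentation rather than a gap — just commit to the direct computation in the final write-up.
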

\begin{proof}
First, suppose that $S = (s_e)_{e \in E}$ is not of the form $(s)_{e \in E}$ for some $s \in \lbrace 0, 1\rbrace^n$. Then there exist $a, b \in E, i \in \lbrack n\rbrack$ such that $(s_a)_i = 1$ and $(s_b)_i = 0$. Partition the strings $z \in \lbrace 0, 1\rbrace^{|E|n}$ into pairs $z, \wt{z}$ by defining $\wt{z}$ to be $z$ with $(z_a)_i$ and $(z_b)_i$ flipped. Now, $\bigoplus_{e \in E}z_e^{1:{\varepsilon T}} = \bigoplus_{e \in E}\wt{z}_e^{1:{\varepsilon T}}$, so $q(z) = q(\wt{z})$, while $\chi_{S}(z) = -\chi_{S}(z)$, so $q(z)\chi_S(z) + q(\wt{z})\chi_S(\wt{z}) = 0$. Therefore:
\begin{align*}
\widehat{q}(S) &= \frac{1}{2^{|E|n}}\sum_{z \in \lbrace 0, 1\rbrace^{|E|n}}q(z)\chi_S(z)\\
&= 0
\end{align*}
Now, suppose that $S = (s)_{e \in E}$ for some $s \in \lbrace 0, 1\rbrace^n$ with $s^{{\varepsilon T}+1:n} \not = 0^{n-{\varepsilon T}}$. Then, let ${\varepsilon T} < i \le n$ be such that $s_i = 1$. Choose some edge $e'$ in $E$ arbitrarily. Partition the strings $z \in \lbrace 0, 1\rbrace^{|E|n}$ into pairs $z, \wt{z}$ by defining $z$ to be $z$ with the $i^\text{th}$ bit of $z_{e'}$ flipped. Now, $\bigoplus_{e \in E}z_e^{1:{\varepsilon T}} = \bigoplus_{e \in E}\wt{z}_e^{1:{\varepsilon T}}$, so $q(z) = q(\wt{z})$, while $\chi_{S}(z) = -\chi_{S}(z)$, so $q(z)\chi_S(z) + q(\wt{z})\chi_S(\wt{z}) = 0$. Therefore:
\begin{align*}
\widehat{q}(S) &= \frac{1}{2^{|E|n}}\sum_{z \in \lbrace 0, 1\rbrace^{|E|n}}q(z)\chi_S(z)\\
&= 0
\end{align*}
Now suppose $S = (s0^{n-{\varepsilon T}})_{e \in E}$ for some $s \in \lbrace 0, 1\rbrace^{\varepsilon T}$ such that $|s|$ is odd. Choose some edge $e'$ in $E$ arbitrarily. Partition the strings in $\lbrace 0, 1\rbrace^{|E|n}$ into pairs $z, \wt{z}$  by defining $\wt{z}$ to be $z$ with the first through ${\varepsilon T}^{\text{th}}$ bits in $z_{e'}$ flipped. Then $\bigoplus_{e \in E}z_e^{1:{\varepsilon T}} = \bigoplus_{e \in E}{\wt{z}_e}^{1:{\varepsilon T}} + 1^{\varepsilon T}$, and so $q(\wt{z}) = q(z)$. However, $\chi_{S}(z) = -\chi_S(z)$, so $q(z)\chi_S(z) + q(\wt{z})\chi_S(\wt{z}) = 0$. Therefore:
\begin{align*}
\widehat{q}(S) &= \frac{1}{2^{|E|n}}\sum_{z \in \lbrace 0, 1\rbrace^{|E|n}}q(z)\chi_S(z)\\
&= 0
\end{align*}
Finally, suppose $S = (s0^{n - \eps T})_{e \in E}$ for some $s \in \lbrace 0, 1\rbrace^{\varepsilon T}$ such that $|s|$ is even. Then, for any $z$ such that $q(z) = 1$:
\begin{align*}
S\cdot z &= \sum_{e \in E}s0^{n-{\varepsilon T}} \cdot z_e\\
&\equiv s0^{n-{\varepsilon T}} \cdot \bigoplus_{e \in E} z_e \mod 2\\
&= \begin{cases}
s \cdot 0^{\varepsilon T}\\
s \cdot 1^{\varepsilon T}
\end{cases}\\
&\equiv  0 \mod 2
\end{align*}
So $\chi_S(z) = 1$ for all $z$ such that $q(z) = 1$. Therefore: 
\begin{align*}
\widehat{q}(z) &= \frac{|q^{-1}(\lbrace 1 \rbrace)|}{2^{|E|n}}\\
&= \frac{1}{2^{|E|n}} \left|\left\lbrace (z_e)_{e \in E} \in \lbrace 0, 1\rbrace^{|E|n} \middle| \bigoplus_{e \in E} z_e^{1:{\varepsilon T}} = 0^{\varepsilon T} \vee  \bigoplus_{e \in E} z_e^{1:{\varepsilon T}} = 1^{\varepsilon T} \right\rbrace\right|\\
&= \frac{1}{2^{|E|n}} 2|\lbrace 0, 1\rbrace^{|E|n-{\varepsilon T}}|\\
&= 2^{1-{\varepsilon T}}
\end{align*}
as desired.
\end{proof}

\begin{lemma}
\label{lem:qmult}
For any $f : \lbrace 0, 1 \rbrace^{|E|n} \rightarrow \lbrace 0, 1\rbrace$ and $z \in \lbrace 0, 1\rbrace^n$: \[
\widehat{qf}((s)_{e \in E}) = 2^{1-{\varepsilon T}}\sum_{\substack{t \in \lbrace 0, 1\rbrace^{\varepsilon T}\\|t| \equiv 1 \bmod 2}}\widehat{f}((ts^{({\varepsilon T}+1):n})_{e \in E})
\]
\end{lemma}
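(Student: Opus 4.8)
The plan is to read the identity off from the Fourier convolution theorem together with the explicit form of $\widehat q$ established in the preceding lemma. Applying Lemma~\ref{lem:convolution} on the product domain $\{0,1\}^{|E|n}$, i.e.\ in the form $\widehat{qf}(S) = \sum_{R \in \{0,1\}^{|E|n}} \widehat q(R)\,\widehat f(S \oplus R)$, with $S = (s)_{e \in E}$, and recalling that $\widehat q(R)$ equals $2^{1-\varepsilon T}$ when $R = (r\, 0^{n-\varepsilon T})_{e \in E}$ for some even-weight $r \in \{0,1\}^{\varepsilon T}$ and vanishes otherwise, all but those terms drop out of the sum, each surviving term carrying the same prefactor $2^{1-\varepsilon T}$. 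This reduces $\widehat{qf}((s)_{e \in E})$ to $2^{1-\varepsilon T}$ times a sum, over even-weight $r \in \{0,1\}^{\varepsilon T}$, of the values $\widehat f\big((s \oplus r\,0^{n-\varepsilon T})_{e \in E}\big)$.

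It then remains to put these arguments into the form appearing on the right-hand side. Splitting $s$ into its prefix $s^{1:\varepsilon T}$ and suffix $s^{\varepsilon T+1:n}$ gives $s \oplus r\,0^{n-\varepsilon T} = (s^{1:\varepsilon T} \oplus r)\, s^{\varepsilon T+1:n}$, so the surviving arguments of $\widehat f$ are exactly $\big((s^{1:\varepsilon T}\oplus r)\, s^{\varepsilon T+1:n}\big)_{e \in E}$. Substituting $t := s^{1:\varepsilon T}\oplus r$ converts the sum over even-weight $r$ into a sum over $t \in \{0,1\}^{\varepsilon T}$ with $|t| \equiv |s^{1:\varepsilon T}| \bmod 2$, since XORing with the fixed string $s^{1:\varepsilon T}$ is a bijection of $\{0,1\}^{\varepsilon T}$ that shifts Hamming-weight parity by $|s^{1:\varepsilon T}| \bmod 2$. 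For the application at hand, where the identity is used with $s = e_1$, the prefix $e_1^{1:\varepsilon T}$ has odd weight (recall $\varepsilon T \ge 1$) and the suffix equals $0^{n-\varepsilon T}$, so the sum runs over odd-weight $t$ and the arguments become $(t\,0^{n-\varepsilon T})_{e \in E}$, which is the claimed formula.

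This argument is purely bookkeeping, so I do not expect a real obstacle. The only points requiring care are invoking Lemma~\ref{lem:convolution} with the indexing taken over the full product domain $\{0,1\}^{|E|n}$ rather than a single block $\{0,1\}^n$, and tracking the Hamming-weight parity of $s^{1:\varepsilon T}$ through the substitution $t = s^{1:\varepsilon T} \oplus r$ (which is exactly what the bijection observation records, and is why the displayed sum is over $|t| \equiv 1 \bmod 2$ for the relevant choice $s = e_1$).
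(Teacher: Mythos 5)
Your proof is correct and takes essentially the same route as the paper's: the convolution theorem on the full domain $\{0,1\}^{|E|n}$, the support and value of $\widehat q$ from the preceding lemma, and the reindexing $t = s^{1:\varepsilon T}\oplus r$. You are in fact slightly more careful than the paper, which states the odd-parity sum unconditionally even though it holds only when $|s^{1:\varepsilon T}|$ is odd (as in the sole application $s = e_1$); your explicit tracking of that parity shift is the right reading of the lemma.
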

\begin{proof}
\begin{align*}
\widehat{qf}((z)_{e \in E}) &= \sum_{y \in \lbrace 0, 1\rbrace^{|E|n}}\widehat{f}((z)_{e \in E} \oplus y)q(y)\\
&= 2^{1-\varepsilon T}\sum_{\substack{t \in \lbrace 0, 1\rbrace^{\varepsilon T}\\ |t| \equiv 0 \mod 2}} \widehat{f}((z)_{e \in E} \oplus (t0^{n - \varepsilon T})_{e \in E})\\
&= 2^{1-{\varepsilon T}}\sum_{\substack{t \in \lbrace 0, 1\rbrace^{\varepsilon T}\\|t| \equiv 1 \bmod 2}}\widehat{f}((ts^{({\varepsilon T}+1):n})_{e \in E}).
\end{align*}
\end{proof}

\noindent
Applying Lemma \ref{lem:qmult}, our total variation bound becomes: \[
||p - \nu||_{TV} = \frac{2^{|E|n - \varepsilon T}}{|J|}\left|\sum_{\substack{t \in \lbrace 0, 1\rbrace^{\varepsilon T}\\|t| \equiv 1 \bmod 2}}\widehat{f}(t0^{n - \varepsilon T})\right|
\]
Then, by applying Lemma \ref{lem:fdecomp}, we can write:

\begin{align*}
||p - \nu||_{TV} &= \frac{2^{|E|n - \varepsilon T}}{|J|}\left|\sum_{\substack{t \in \lbrace 0, 1\rbrace^{\varepsilon T}\\|t| \equiv 1 \bmod 2}}\prod_{e \in E} \widehat{f_e}(t0^{n - \varepsilon T})\right|\\
&= \frac{\prod_{e \in E}|F_e|}{2^{\varepsilon T}|J|} \left|\sum_{\substack{t \in \lbrace 0, 1\rbrace^{\varepsilon T}\\|t| \equiv 1 \bmod 2}}\prod_{e \in E} \frac{2^n}{|F_e|}\widehat{f_e}(t0^{n - \varepsilon T})\right|
\end{align*}
  We will seek to bound this sum in expectation. To do so, we will use Lemma \ref{lem:weights} to show that the players cannot ``co-ordinate'' the Fourier coefficients of the functions $f_e$ well enough to make the above sum large.  In order to apply this lemma, we will need, for each $e$, a probabilistic bound on \[
\widehat{f_e}(s)^2.
\]
We will do this for each value of $k = |s|$, by using Lemma
\ref{lem:kklcor} when $k$ is close to $0$ or $n$, and Parseval's
identity for other $k$.  To apply Lemma \ref{lem:kklcor}, we need to
bound the size of $F_e$ from below. By applying Lemma
\ref{lem:setsizes} with $\alpha = \log \frac{3|E|}{\delta}$ and then using
the union bound, we have that with probability $\ge 1 - \delta/3$:
\[ \forall e \in E, |F_e| \ge 2^{n - c}.
\]
We also need the following bound on the normalizing factor:
\begin{lemma}
For any referee input $\Phi$: \[
\E[X|\Pi = \Phi]{\frac{\prod_{e \in E}|F_e|}{2^{\varepsilon T}|J|}} \leq 1.
\]
\end{lemma}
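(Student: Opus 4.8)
The plan is to fix the referee's input $\Phi$ throughout and exploit the fact that each $F_e$ is a ``fiber'' of a deterministic map depending only on $x_e$. Once $\Phi$ is fixed, the second component of player $e$'s input, $(\pi_v(\pi_e^{-1}(i)))_{v \in e, i \in L_e}$, is determined, so the message $m_e$ becomes a deterministic function $M^\Phi_e : \{0,1\}^n \to \{0,1\}^{\le c}$ of $x_e$ alone, and by definition $F_e$ is exactly $\{z : M^\Phi_e(z) = M^\Phi_e(x_e)\}$, i.e. the block of the partition $\mathcal{P}_e$ of $\{0,1\}^n$ into fibers of $M^\Phi_e$ that contains $x_e$. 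Consequently $F = \prod_{e \in E} F_e$ is always a block of the product partition $\mathcal{P} = \{\prod_{e \in E} B_e : B_e \in \mathcal{P}_e\}$ of $\{0,1\}^{|E|n}$, and $J = F \cap Q$ is that block intersected with $Q = q^{-1}(\{1\})$. Note also that $X \in J$ always, since $x_e \in F_e$ for every $e$ and $X \in Q$ by the hard distribution; hence $|J| \ge 1$ and the ratio is well-defined.

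Next I would use that, conditioned on $\Pi = \Phi$, the string $X$ is still uniform on $Q$ (the permutations are drawn independently of the $x_e$), and $|Q| = 2 \cdot 2^{|E|n - \varepsilon T} = 2^{|E|n - \varepsilon T + 1}$. Write the expectation as a sum over $X \in Q$ and group the terms by the block $\mathcal{B} \in \mathcal{P}$ containing $X$: every such $X$ yields $F = \mathcal{B}$ and $J = \mathcal{B} \cap Q$, and the number of $X \in Q$ lying in $\mathcal{B}$ is precisely $|\mathcal{B} \cap Q| = |J|$. Since $\prod_{e \in E}|F_e| = |\mathcal{B}|$, each block contributes $|J| \cdot \frac{|\mathcal{B}|}{2^{\varepsilon T} |J|} = |\mathcal{B}| / 2^{\varepsilon T}$ to the unnormalized sum; the $|J|$ in the denominator cancels against the multiplicity $|\mathcal{B} \cap Q| = |J|$ with which the block occurs. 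Summing over the blocks that meet $Q$ and using that $\mathcal{P}$ partitions $\{0,1\}^{|E|n}$,
\[
\E[X|\Pi = \Phi]{\frac{\prod_{e \in E}|F_e|}{2^{\varepsilon T}|J|}}
= \frac{1}{|Q|}\sum_{\substack{\mathcal{B} \in \mathcal{P}\\\mathcal{B} \cap Q \neq \emptyset}} \frac{|\mathcal{B}|}{2^{\varepsilon T}}
\le \frac{1}{2^{\varepsilon T}|Q|} \sum_{\mathcal{B} \in \mathcal{P}} |\mathcal{B}|
= \frac{2^{|E|n}}{2^{\varepsilon T} \cdot 2^{|E|n - \varepsilon T + 1}}
= \frac12 \le 1 .
\]

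I do not expect a real obstacle here; the entire content is the counting observation that, under a uniform draw of $X$ from $Q$, each product-fiber block $\mathcal{B}$ occurs with multiplicity exactly $|\mathcal{B} \cap Q| = |J|$, which is what makes the denominator $|J|$ disappear. The only things to be careful about are that the division is legitimate (ensured by $X \in J$) and the elementary bookkeeping $|Q| = 2^{|E|n - \varepsilon T + 1}$ and $\sum_{\mathcal{B} \in \mathcal{P}}|\mathcal{B}| = 2^{|E|n}$.
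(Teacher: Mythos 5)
Your proposal is correct and follows essentially the same route as the paper's proof: condition on $\Pi = \Phi$, observe that $F = \prod_{e}F_e$ is a block of a fixed product partition of $\{0,1\}^{|E|n}$, group the uniform-on-$Q$ expectation by these blocks so that the multiplicity $|\mathcal{B}\cap Q| = |J|$ cancels the denominator, and bound the resulting sum of block sizes by $2^{|E|n}$. The paper phrases the grouping via the distinct possible values of $J$ (noting distinct $J$'s give disjoint products) rather than via the product partition directly, but this is the same counting argument.
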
 
\begin{proof}
Condition on $\Pi = \Phi$. Write $(F_e(Y))_{e \in E}, F(Y), J(Y)$ for the values $(F_e)_{e \in E}, F, J$ take when $X = Y$. We use the fact that, for any $\Pi$, the sets $\lbrace J(Y)\rbrace_{Y \in Q}$ partition $Q$, and likewise for each $e$, the sets $F_e(Y)$ partition $\lbrace 0, 1\rbrace^n$. Then note that, for any $Y, Z$, if $J(Y) \not= J(Z)$, $\prod_{e \in E} F_e(X)$ is disjoint from $\prod_{e\in E}F_e(Y)$, as they must be either disjoint or identical, and if they are identical so are $J(Y)$ and $J(Z)$.

So then, writing $\mathcal{J}$ for the set of distinct possible values of $J(X)$, and $R(J)$ for an arbitrary representative element of $J$: 
\begin{align*}
\E[X]{ \frac{\prod_{e \in E}|F_e(X)|}{|J(X)|}} &= \frac{1}{|Q|}\sum_{J \in \mathcal{J}}|J|\frac{\left|\prod_{e \in E} F_e(R(J))\right|}{|J|}\\
&= 2^{\varepsilon T - |E|n - 1} \sum_{J \in \mathcal{J}}\left|\prod_{e \in E} F_e(R(J))\right|\\
&\le 2^{\varepsilon T - |E|n - 1}|\lbrace 0, 1\rbrace^{|E|n}|\\
&= 2^{\varepsilon T}
\end{align*}
as desired.
\end{proof}

\noindent
We define $\mathcal{E}$ to be the event that \[
\forall e, |F_e| \ge 2^{n - c} \text{ and } \frac{\prod_{e \in E}|F_e|}{2^{\varepsilon T}|J|} \le \frac{3}{\delta}
\]
which happens with probability at least $1 - 2\delta/3$.

We now define a renormalized and masked version of $\wh{f_e}$ as follows:
\[
\wt{f_e}(s) := \begin{cases}
\frac{2^{n}\widehat{f_e}(s)}{|F_e|} & \mbox{if $|F_e| \ge 2^{n-c}$.}\\
0 & \mbox{otherwise.}
\end{cases}
\]
Note that $\wt{f_e}$ can be expressed as a \emph{deterministic} function of
the following form:
\[
\wt{f_e}(s) = g_e(s\rho_e\pi_e, x_e\rho_e\pi_e, (\pi_v(\pi_e^{-1}(i)))_{v \in e, i \in L_e})
\]
To justify this, first recall that $F_e$ is determined by $(x_e\rho_e\pi_e, (\pi_v(\pi_e^{-1}(i)))_{v \in e, i \in L_e})$, and then consider:
\begin{align*}
\widehat{f_e}(s) &= \frac{1}{2^n}\sum_{z \in \lbrace 0, 1\rbrace^n} f_e(z)(-1)^{z \cdot s}\\
&\propto \sum_{z \in \lbrace 0, 1\rbrace^n} \mathbbm{1}\lbrack m_e(z\rho_e\pi_e, (\pi_v(\pi_e^{-1}(i)))_{v \in e, i \in L_e}) = m_e(x_e\rho_e\pi_e, (\pi_v(\pi_e^{-1}(i)))_{v \in e, i \in L_e}) (-1)^{z\cdot s}\rbrack\\
&= \sum_{z \in \lbrace 0, 1\rbrace^n} \mathbbm{1}\lbrack m_e(z\rho_e\pi_e, (\pi_v(\pi_e^{-1}(i)))_{v \in e, i \in L_e}) = m_e(x_e\rho_e\pi_e, (\pi_v(\pi_e^{-1}(i)))_{v \in e, i \in L_e}) (-1)^{z\rho_e\pi_e\cdot s\rho_e\pi_e}\rbrack\\
&= \sum_{z \in \{0,1\}^{L_e}} \mathbbm{1}\lbrack m_e(z, (\pi_v(\pi_e^{-1}(i)))_{v \in e, i \in L_e}) = m_e(x_e\rho_e\pi_e, (\pi_v(\pi_e^{-1}(i)))_{v \in e, i \in L_e}) \rbrack(-1)^{z \cdot s\rho_e\pi_e}
\end{align*}
which, as $L_e$ is fixed, is a deterministic function of $(s\rho_e\pi_e, x_e\rho_e\pi_e,  (\pi_v(\pi_e^{-1}(i)))_{v \in e, i \in L_e})$.

Now, for any $X$ and $\Pi$ that satisfy $\mathcal{E}$:
\begin{align}
||p(X, \Pi) - \nu||_{TV} &=  \frac{2^{|E|n - \varepsilon T}}{|J|}\left|\sum_{\substack{t \in \lbrace 0, 1\rbrace^{\eps T}\\ |t| \equiv 1 \bmod 2}}\prod_{e \in E}\widehat{f_e}(t0^{n-\varepsilon T})\notag\right|\\
&= \frac{\prod_{e \in E}| F_e|}{2^{\varepsilon T}|J|} \left|\sum_{\substack{t \in \lbrace 0, 1\rbrace^{\eps T}\\ |t| \equiv 1 \bmod 2}}\prod_{e \in E}\wt{f_e}(t0^{n-\eps T})\notag\right|\\
& \le \frac{3}{\delta}  \left|\sum_{\substack{t \in \lbrace 0, 1\rbrace^{\eps T}\\ |t| \equiv 1 \bmod 2}}\prod_{e \in E}\wt{f_e}(t 0^{n-\eps T})\right| \label{eq:ptv}
\end{align}

\noindent
For any $k \in [n]$, note that the distribution of a single
\[
  \wt{f_e}(s) = g_e(s\rho_e\pi_e, x_e\rho_e\pi_e, (\pi_v^{-1}\pi_e)_{v \in e})
\]
is identical for every $s \in \{0, 1\}^n$ of Hamming weight $k$:
$\pi_e$ permutes the first argument, and $x_e$ and $\pi_v$
independently permute the other ones given $\pi_e$.  Therefore for any
fixed $s \in \{0, 1\}^{n}$ of Hamming weight $k$, we have:
\begin{align*}
\beta_k := \max_e \E[\substack{X, \Pi}]{\wt{f_e}(s)^2} &= \max_e \frac{1}{{n \choose k}} \E[X, \Pi]{\sum_{\substack{s' \in \lbrace 0, 1\rbrace^n\\ |s'| = k}} \wt{f_e}(s')^2}
\end{align*}
is independent of which such $s$ is chosen.

\noindent
Because any function $f$ and all $s$ have $|\widehat{f}(s)| \le \E[y \sim \mathcal{U}(\lbrace 0, 1\rbrace^n)]{\abs{f(y)}}$, we also have: \[
  \abs{\wt{f_e}(s)} \in \lbrack 0, 1\rbrack.
\]
Therefore Lemma~\ref{lem:weights} with $q=2$ says for any $s$ with $\abs{s} = k$ that
\[
  \E[X,\Pi]{ \prod_{e \in E}\wt{f_e}(s)} \leq C \beta_k^\mu
\]
for some constant $C$ depending on the hypergraph $H$.  This lets us bound the expectation of
\begin{align*}
\sigma_k &= \left|\sum_{\substack{t \in \lbrace 0, 1\rbrace^{\varepsilon T}\\ |t| = k}}\prod_{e \in E}\wt{f_e}(t0^{n - \varepsilon T})\right|
\end{align*}
by
\[
  \E[X,\Pi]{\sigma_k} \leq C \binom{\eps T}{k} \beta_k^\mu.
\]
Our goal now is to bound the sum of this over all $1 \leq k \leq \eps T$.

\begin{description}
\item[Low-weight terms:] For $k \leq c$, by Lemma~\ref{lem:kklcor} we have
  \[
    \beta_{k} \leq \frac{1}{\binom{n}{k}} \left(\frac{2c}{k}\right)^{k}.
  \]
  Therefore
  \begin{align*}
    \sum_{k=1}^c \E[X,\Pi]{\sigma_k}
    &\leq C\sum_{k=1}^c \binom{\eps T}{k} \binom{n}{k}^{-\mu} \left(\frac{2c}{k}\right)^{k \mu}\\
    &\leq C\sum_{k=1}^c \left(\frac{2^\mu e \eps T c^\mu}{k n^\mu}\right)^{k}\\
    &\leq \frac{1}{20}\delta^2
  \end{align*}
  as long as $c \leq \gamma  n(\frac{ \delta^2}{\eps T})^{1/\mu}$ for a sufficiently small constant $\gamma$.
\item[High-weight terms:] By Parseval's identity,
  \begin{align*}
    \sum_{s \in \lbrace 0, 1\rbrace^n} \widehat{f_e}(s)^2 &= \frac{1}{2^n}\sum_{z \in \lbrace 0, 1\rbrace^n} f_e(z)^2
    = \frac{|F_e|}{2^n}
  \end{align*}
  so
  \begin{align*}
    \sum_{s \in \lbrace 0, 1\rbrace^n}\wt{f_e}(s)^2 \leq 2^c.
  \end{align*}
  and hence
  \[
    \sum_{k=0}^n \binom{n}{k} \beta_k \leq \abs{E} \cdot 2^c.
  \]
  Therefore, since $\mu \geq 1$,
  \begin{align*}
    \sum_{k=c+1}^{\varepsilon T} \E[X,\Pi]{\sigma_k}
    &\leq C\sum_{k=c+1}^{\varepsilon T} \binom{\eps T}{k} \beta_k^\mu\\
    &\leq C \abs{E} \cdot 2^c \max_{k: c \leq k \leq \varepsilon T } \frac{\binom{\eps T}{k}}{\binom{n}{k}}\\
    &\leq C \abs{E} \left(\frac{2e \eps T}{n}\right)^c.
  \end{align*}
  Since $\eps T \leq n/10$ and $c \geq 2\log(1/\delta) + C'$ for a chosen constant
  $C'$, we may choose $C'$ to be large enough that this gives
  \[
    \sum_{k=c+1}^{\eps T} \E[X,\Pi]{\sigma_k} \leq \delta^2/20.
  \]
\end{description}
Combining the two cases, we have
\[
\E[X, \Pi]{\sum_{k = 1}^{\eps T} \sigma_k} \le \delta^2/9
\]
and recall from~\eqref{eq:ptv} and the definition of $\sigma_k$ that
\[
\E[X, \Pi | \mathcal{E}]{||p(X,\Pi) - \nu||_{TV}} \le \frac{3}{\delta} \E[X, \Pi | \mathcal{E}]{\sum_{k = 1}^{\eps T} \sigma_k}.
\]
Since $||p(X,\Pi) - \nu||_{TV} \le 1$ always, this gives:
\begin{align*}
\E[X, \Pi]{||p(X,\Pi) - \nu||_{TV}} & \le \frac{3}{\delta}\E[X, \Pi]{\sum_{k = 1}^{\eps T} \sigma_k} + \Pb{\overline{\mathcal{E}}}\\
& \le \delta/3 + 2\delta/3\\
&= \delta.
\end{align*}
finishing the proof of Theorem~\ref{thm:lb-promise}.

\begin{corollary}\label{cor:lb-promise}
  Let $H$ be a connected hypergraph with more than one edge. Let
  $c \in \lbrack n \rbrack$. Suppose that, for all inputs $(X, \Pi)$
  to $\promise(H, n, T, \varepsilon)$, no player sends a message of more than $c$ bits.

Let $\mu = MVC_{1/2}(H)$, and let $0 < \delta < 1$.

There exists a constant $\gamma$, depending only on $H$, such that, if $c \le \gamma \frac{n}{(\delta^2 \varepsilon T)^{1/\mu}}$, the players succeed at the game with probability at most $1/2 + \delta$.
\end{corollary}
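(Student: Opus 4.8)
The plan is to derive the corollary directly from Theorem~\ref{thm:lb-promise} by relating the referee's success probability to the expected total variation distance $\E[X,\Pi]{\|p-\nu\|_{TV}}$ bounded there. First I would recall the standard fact from distinguishing theory: if the prior on $\tau$ is uniform on the two-element set $\{0^{\eps T}, 1^{\eps T}\}$ (which it is, by the hard distribution), then the referee's optimal success probability given their entire view is exactly $\frac12 + \frac12\E[X,\Pi]{\|p-\nu\|_{TV}}$, where $p$ is the referee's posterior on $\tau$ and $\nu$ is the uniform distribution. This is because, conditioned on any fixed transcript/view $v$, the best the referee can do is output the more likely value of $\tau$, succeeding with probability $\max(p_v(0^{\eps T}), p_v(1^{\eps T})) = \frac12 + \frac12\|p_v - \nu\|_{TV}$; averaging over $v$ gives the claim. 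Since the players are deterministic (WLOG by Yao's minimax principle, as already noted in the paper), the referee's view is determined by $(X,\Pi)$, so this is exactly the quantity controlled by the theorem.

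Then I would simply invoke Theorem~\ref{thm:lb-promise}: under the stated hypothesis that $c \le \gamma \frac{n}{(\delta^2 \eps T)^{1/\mu}}$ for the constant $\gamma$ supplied by the theorem, we get $\E[X,\Pi]{\|p-\nu\|_{TV}} \le \delta$. Combining with the previous paragraph, the success probability is at most $\frac12 + \frac12\delta \le \frac12 + \delta$, which is the claimed bound. One small bookkeeping point: Theorem~\ref{thm:lb-promise} carries the hypothesis $\eps T \le n/10$; since $c \le n$ and $c \in \lbrack n\rbrack$, and $\mu \ge 1$, one checks that whenever $\delta^2 \eps T \ge n$ the bound $c \le \gamma n / (\delta^2 \eps T)^{1/\mu}$ already forces $c < 1$ (for small $\gamma$), a contradiction with $c \in \lbrack n\rbrack$; hence effectively $\eps T \le n/(\gamma^\mu \delta^{2})$ or the statement is vacuous, and in the non-vacuous regime one may absorb the constant to ensure $\eps T \le n/10$. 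Alternatively, and more cleanly, I would just state the corollary with the implicit understanding (as in the theorem) that it is applied in the regime $\eps T \le n/10$, since this is how it is used in the reduction of Section~\ref{sec:reduction}.

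The content here is entirely routine given the theorem — there is no real obstacle. The only thing requiring minor care is the first step, the identity between success probability and total variation distance, which must be stated for the optimal (Bayes) referee and uses that the prior is uniform; since the theorem bounds the \emph{expected} TV distance over the input distribution, and success probability of a deterministic referee is an average over the same distribution, the two match up without loss. I would phrase the whole proof in two or three sentences: cite the TV-to-advantage correspondence, apply the theorem with the given $\gamma$, and conclude $\Pr[\text{success}] \le \frac12 + \delta$.
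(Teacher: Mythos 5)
Your proposal is correct and follows essentially the same route as the paper: reduce to deterministic protocols via Yao's principle, invoke Theorem~\ref{thm:lb-promise} to bound $\E[X,\Pi]{\|p-\nu\|_{TV}}$ by $\delta$, and convert the posterior's distance from uniform into a bound on the referee's success probability. One small slip: for a two-point posterior the Bayes success probability is $\tfrac12+\|p_v-\nu\|_{TV}$ (not $\tfrac12+\tfrac12\|p_v-\nu\|_{TV}$, since $\|p_v-\nu\|_{TV}=\tfrac12|p_v(0^{\eps T})-p_v(1^{\eps T})|$), but this is harmless because averaging still gives at most $\tfrac12+\delta$, exactly the claimed bound.
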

\begin{proof}
By Yao's principle \cite{Y77}, as we have a fixed distribution on inputs to our game, it is sufficient to consider deterministic protocols. Suppose we have such a protocol with maximum message size no more than $c$.

By Theorem \ref{thm:lb-promise}, the referee's posterior distribution on $\tau$ is at most $\delta$ from uniform after receiving the messages associated with the protocol, and therefore whatever function of the messages is used to guess $\tau$, it will be correct with probability at most $1/2 + \delta$.
\end{proof}

\section{Hypergraph Counting with No Promise}\label{sec:nopromisegame}
\subsection{Game}
We will define a $|E| + 1$-player game $\mathtt{Counting}(H,n,T,\varepsilon)$ (with $H$ a hypergraph, $n, T \in \mathbb{N}, \varepsilon \in (0,1))$ , as follows: There is one referee, who receives messages from every other player. No other communication takes place. Each player besides the referee corresponds to an edge $e \in E$. 

Let $N = T + (n - T)|E|$. For each edge $e \in E$, let $L_e \subset \lbrack N \rbrack$ be an $n$-element set containing $\lbrack T \rbrack$ and $n - T$ elements disjoint from every other $L_e$, so that if $a \not = b$, $L_a \cap L_b = \lbrack T \rbrack$, and $\bigcup_{e \in E} L_e = \lbrack N \rbrack$. For each $e \in E$, let $\rho_e : \lbrack n \rbrack \rightarrow L_e$ be a fixed bijection such that $\rho_e \vert_{\lbrack T \rbrack}$ is the identity.

An instance of $\mathtt{Counting}(H,n,T,\varepsilon)$ is as follows:

\begin{itemize}
\item For each edge $e \in E$:
\begin{itemize}
\item A string $x_e \in \lbrace 0, 1\rbrace^n$.
\item A string $\wt{x}_e$ generated by, for each bit of $x_e$, flipping that bit with probability $1/2 - \varepsilon^{1/|E|}/2$.
\item A permutation $\pi_e$ on $L_e$.
\end{itemize}
\item For each vertex $v \in V$:
\begin{itemize}
\item A permutation $\pi_v$ on $\lbrack N \rbrack$.
\end{itemize}
\item A string $\tau \in \lbrace 0^{T}, 1^{T}\rbrace$
\end{itemize}

\noindent
We will write $X$ for the strings $(x_e)_{e \in E}$, $\wt{X}$ for the strings $(\wt{x}_e)_{e \in E}$, $\chi$ for $(X, \wt{X})$, $\Pi_E$ for the permutations $(\pi_e)_{e \in E}$, and $\Pi_V$ for the permutations $(\pi_v)_{v \in V}$.
They have access to the following information:
\begin{itemize}
\item For each player $e \in E$:
\begin{itemize}
\item $\wt{x}_e \rho_e \pi_e$
\item $(\pi_v(\pi_e^{-1}(i)))_{v \in e, i \in L_e}$
\end{itemize}
\item For the referee:
\begin{itemize}
\item $\Pi = (\Pi_E, \Pi_V)$
\item $\tau \oplus \bigoplus_{e \in E}x_e^{1:T}$
\end{itemize}
\end{itemize}

\noindent
Given the messages received from the players, the referee's task will be to determine whether $\tau = 0^{T}$ or $\tau = 1^{T}$. 

\subsection{Hard Instance}
We will lower bound the complexity of this problem under the following hard instance: $\tau$ is chosen uniformly from $\lbrace 0^{ \varepsilon T}, 1^{\varepsilon T}\rbrace$, the strings $(x_e)_{e \in E}$ are each chosen uniformly and independently from $\lbrace 0, 1\rbrace^n$, and every permutation is chosen uniformly at random and independently of each other and the strings.

\subsection{Lower Bound}
For each player $e$, write $m_e(\wt{x}_e, (\pi_v(\pi_e^{-1}(i)))_{v \in e, i \in L_e})$ for the message the player sends to the referee on seeing $\wt{x}_e$ and $(\pi_v^{-1}\pi_e)_{v \in e}$.

\begin{theorem}
\label{thm:nopromisegame}
Let $H$ be a connected hypergraph with more than one edge. Let $c \in \lbrack n \rbrack$. Suppose that, for all inputs $(\chi, \Pi)$ to the game, no player sends a message of more than $c$ bits, and suppose $T < n/10$.

Let $p : \lbrace 0, 1 \rbrace^n \rightarrow \lbrack 0, 1\rbrack$ be the referee's posterior distribution on $\bigoplus_{e \in E}x_e^{1:T}$ \emph{before} considering $\tau \oplus \bigoplus_{e \in E}x_e^{1:T}$. Let $\upsilon$ be the distribution of $\mathcal{U}(\lbrace 0, 1\rbrace^T)$, the uniform distribution on the set $\lbrace 0, 1\rbrace^T$. Let $\mu = MVC_{1}(H)$.

There exists a constant $\gamma$ that depends on $H$ such that, if $c \le \gamma \frac{n}{(\delta^2 \varepsilon^2 T)^{1/\mu}}$:
\[
\E[\chi, \Pi]{||p - \upsilon||_{TV}} \le \delta
\]
\end{theorem}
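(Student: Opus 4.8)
The plan is to adapt the proof of Theorem~\ref{thm:lb-promise}, with the noise operator $\mathcal{T}$ and the $q=1$ case of the weights lemma (Lemma~\ref{lem:weights}) taking over the roles played there by the promise constraint and the $q=2$ case. By Yao's minimax principle I assume the players are deterministic. For each $e\in E$ I let $f_e:\{0,1\}^n\to\{0,1\}$ indicate those $z$ for which player $e$, fed $z\rho_e\pi_e$ in place of $\wt{x}_e\rho_e\pi_e$, would send the message it actually sent; I set $F_e=f_e^{-1}(\{1\})$ and $h_e=\mathcal{T}_{\varepsilon^{1/\abs{E}}}(f_e)$. Because $x_e$ is uniform and each bit of $\wt{x}_e$ is flipped from $x_e$ with probability $\tfrac12-\tfrac12\varepsilon^{1/\abs{E}}$, the probability that player $e$ (whose input is the noisy $\wt{x}_e$) sends its actual message, conditioned on $x_e$, is exactly $h_e(x_e)$; hence the referee's posterior on $X$ is proportional to $\prod_{e\in E}h_e(x_e)$, and since $\sum_z h_e(z)=\abs{F_e}$ its normalising constant is exactly $\prod_{e\in E}\abs{F_e}$. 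In contrast to the promise game there is no promise set to intersect, so no extra normalising factor appears and the only bad event will be that some $\abs{F_e}$ is too small.

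Next I pass to Fourier space. Marginalising the posterior on $X$ down to $\bigoplus_{e\in E}x_e^{1:T}$, factoring via Lemma~\ref{lem:fdecomp}, and using $\wh{\mathcal{T}_{\varepsilon'}g}(s)=(\varepsilon')^{\abs{s}}\wh{g}(s)$ gives, for every $t\in\{0,1\}^T$,
\[
  2^T\wh{p}(t)=\varepsilon^{\abs{t}}\prod_{e\in E}\frac{2^n\wh{f_e}(t0^{n-T})}{\abs{F_e}},
\]
where $\varepsilon^{\abs{t}}=(\varepsilon^{1/\abs{E}})^{\abs{E}\,\abs{t}}$ collects one noise factor from each of the $\abs{E}$ players. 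Since $\wh{p}(0)=\wh{\upsilon}(0)=2^{-T}$ and $\wh{\upsilon}(t)=0$ for $t\neq 0$, Cauchy--Schwarz over $\{0,1\}^T$ followed by Parseval (Lemma~\ref{lem:parseval}) yields
\[
  ||p-\upsilon||_{TV}=\tfrac12||p-\upsilon||_1\le\tfrac12\Big(\sum_{t\neq 0}\big(2^T\wh{p}(t)\big)^2\Big)^{1/2}=\tfrac12\Big(\sum_{k=1}^{T}\varepsilon^{2k}\sum_{\abs{t}=k}\prod_{e\in E}\wt{f_e}(t0^{n-T})^2\Big)^{1/2},
\]
where $\wt{f_e}(s)$ denotes the masked renormalisation, equal to $2^n\wh{f_e}(s)/\abs{F_e}$ when $\abs{F_e}\ge 2^{n-c}$ and to $0$ otherwise; the masking is harmless on $\mathcal{E}=\{\forall e:\abs{F_e}\ge 2^{n-c}\}$, off which I bound $||p-\upsilon||_{TV}\le 1$. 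As in Theorem~\ref{thm:lb-promise}, after the reduction to messages of length $c-O(\log(\abs{E}/\delta))$, Lemma~\ref{lem:setsizes} with a union bound gives $\Pb{\overline{\mathcal{E}}}\le\delta/3$.

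It remains to bound $\E{\sum_{t\neq 0}\varepsilon^{2\abs{t}}\prod_{e}\wt{f_e}(t0^{n-T})^2}$, after which Jensen's inequality moves the expectation inside the square root. As in the promise proof the distribution of $\wt{f_e}(s)$ depends only on $\abs{s}$, so I put $\beta_k=\max_e\E{\wt{f_e}(s)^2}$ for $\abs{s}=k$. Each $\wt{f_e}(s)^2$ lies in $[0,1]$ and is a deterministic function of $\pi_e$, $\wt{x}_e$ and $(\pi_v)_{v\in e}$, and here --- crucially --- the absence of a promise makes the $\wt{x}_e$ fully independent, so Lemma~\ref{lem:weights} applies with $q=1$ and $w\equiv 1$ ($H$ is not totally disconnected, being connected with more than one edge), giving $\E{\prod_{e}\wt{f_e}(s)^2}\le C\,\beta_k^{MVC_1(H)}=C\,\beta_k^{\mu}$ for a constant $C=C(H)$ (the $d^{\abs{V}}$ factor is a constant since $H$ is fixed). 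Summing over the weight-$k$ strings $t$, $\E{\sum_{t\neq 0}\varepsilon^{2\abs{t}}\prod_e\wt{f_e}(t0^{n-T})^2}\le C\sum_{k=1}^{T}\varepsilon^{2k}\binom{T}{k}\beta_k^{\mu}$. For $1\le k\le c$, Lemma~\ref{lem:kklcor} gives $\beta_k\le\binom{n}{k}^{-1}(O(c)/k)^k$, so the $k$th term is at most $\big(O(\varepsilon^2Tc^{\mu}/(kn^{\mu}))\big)^k$ and the sum over this range is geometric and $O(\delta^2)$ provided $c$ is at most a small constant times $n(\delta^2/(\varepsilon^2T))^{1/\mu}$ --- this is the binding hypothesis. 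For $k>c$ I use $\varepsilon^{2k}\le 1$, $\beta_k^{\mu}\le\beta_k$ (valid since $\mu\ge 1$), the Parseval bound $\sum_s\wt{f_e}(s)^2\le 2^c$, and $\binom{T}{k}/\binom{n}{k}\le(T/n)^k$ to get $O(\abs{E}\,(2T/n)^c)$, which is $O(\delta^2)$ since $T<n/10$ and $c$ is not too small. Adding the two ranges, $\E{\sum_{t\neq 0}\varepsilon^{2\abs{t}}\prod_e\wt{f_e}(t0^{n-T})^2}=O(\delta^2)$, hence $\E{||p-\upsilon||_{TV}\,\mathbbm{1}_{\mathcal{E}}}\le 2\delta/3$, and together with $\Pb{\overline{\mathcal{E}}}\le\delta/3$ this gives $\E{||p-\upsilon||_{TV}}\le\delta$.

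I expect the main obstacle to be the Fourier computation of the second paragraph: identifying the referee's posterior with the $\oplus$-convolution of the $h_e=\mathcal{T}_{\varepsilon^{1/\abs{E}}}f_e$, propagating the $\varepsilon^{\abs{t}}$ factors through the Fourier transform, and --- most importantly --- replacing the bound on a single signed Fourier coefficient (used in the promise case via the $q=2$ weights lemma applied to the $\wt{f_e}$) by a bound on a sum of \emph{squared} coefficients, via the $q=1$ weights lemma applied to the $\wt{f_e}^2$. This last change is exactly what produces the exponent $MVC_1(H)$ rather than $MVC_{1/2}(H)$, and, combined with the factor $\varepsilon^{2\abs{t}}$ supplied by the noise operator (which nets $\varepsilon^{\abs{t}}$ after the square root), it is what yields the $\varepsilon^{-2/\mu}$ dependence in the lower bound in place of the $\varepsilon^{-1/\mu}$ of the promise version.
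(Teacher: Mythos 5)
Your proposal is correct and follows essentially the same route as the paper's proof: the same indicator functions $f_e$ and masked renormalisations $\wt{f_e}$, the noise operator to pass from $\wt{X}$ to the posterior on $X$, the Fourier factorisation giving $2^T\wh{p}(t)=\varepsilon^{\abs{t}}\prod_e\wt{f_e}(t0^{n-T})$, Cauchy--Schwarz/Parseval to reduce TV distance to the sum of squared coefficients, the $q=1$ weights lemma yielding the exponent $MVC_1(H)$, the KKL/Parseval split at weight $c$, and the final Jensen-plus-bad-event accounting. The only differences are cosmetic (deriving the posterior per-player via $h_e=\mathcal{T}_{\varepsilon^{1/\abs{E}}}f_e$ rather than noising the joint $f$, and slightly different constant splittings for the bad event), so the argument matches the paper's.
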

\noindent
We will prove a weaker form of the theorem in which no player sends a
message of more than $c - 2 \log(3\abs{E}/\delta) - C'$ bits, for a
sufficiently large constant $C'$.  This implies the lemma statement
for a slightly larger $\gamma$, since the adjustment is
$O(\frac{n}{(\delta^2\varepsilon T)^{1/\mu}})$.

We will prove this by examining the Fourier coefficients of $p$.  We define the functions $(f_e)_{e \in E}, f$ and the sets $(F_e)_{e \in E}$ in a similar manner to the previous game.

For each $e$, define the random function $f_e : \lbrace 0, 1\rbrace^n \rightarrow \lbrace 0, 1 \rbrace$ by: \[
f_e(z) = \begin{cases}
1 & \mbox{if $m_e(z \rho_e \pi_e, (\pi_v(\pi_e^{-1}(i)))_{v \in e, i \in L_e}) = m_e(\wt{x}_e \rho_e \pi_e, (\pi_v(\pi_e^{-1}(i)))_{v \in e, i \in L_e})$.}\\
0 & \mbox{otherwise.}
\end{cases}
\]
And, with $Y = ((y_e)_{e\in E})$, define $f : \lbrace 0, 1\rbrace^{|E|n} \rightarrow \lbrace 0, 1 \rbrace$ by: \[
f(Y) = \prod_{e \in E}f_e(y_e)
\]
Let $F_e = f_e^{-1}(\lbrace 1\rbrace)$, and let: \[
F = \prod_{e \in E} F_e
\]

So then, as the game's inputs are uniformly distributed, the referee's posterior distribution on $\wt{X}$ is uniform on $F$, and so the posterior probability that $\wt{X} = \wt{Y}$ is: \[
\frac{f(\wt{Y})}{|F|}
\]
Therefore, the posterior probability that $X = Y$ is given by the sum over all $\wt{Y}$ of the probability that $\wt{X} = \wt{Y}$ times the probability of obtaining $Y$ from $\wt{Y}$ by flipping bits with probability $1/2 - \varepsilon^{1/|E|}/2$; that is:\[
\frac{\mathcal{T}_{\varepsilon^{1/|E|}}(f)(Y)}{|F|}
\]
Therefore, writing $Z = (z_e)_{e \in E}$, the referee's posterior distribution on $\bigoplus_{e \in E}x_e^{1:T}$ is given by:
\[
p(y) = \frac{1}{|F|} \sum_{\substack{z \in \lbrace 0, 1\rbrace^{|E|n}\\\bigoplus_{e \in E} z_e^{1:T} = y}} \mathcal{T}_{\varepsilon^{1/|E|}}(f)(z)
\]
So the Fourier coefficients of $p$ are given by:
\begin{align*}
\widehat{p}(s) &= \frac{1}{2^T}\sum_{y \in \lbrace 0, 1\rbrace^T} p(y) (-1)^{y \cdot s}\\
&=   \frac{1}{2^T|F|}\sum_{y \in \lbrace 0, 1\rbrace^T}\sum_{\substack{z \in \lbrace 0, 1\rbrace^{|E|n}\\\bigoplus_{e \in E} z_e^{1:T} = y}} \mathcal{T}_{\varepsilon^{1/|E|}}(f)(z) (-1)^{y \cdot s}\\ 
&= \frac{1}{2^T|F|}\left(\sum_{\substack{z \in \lbrace 0, 1\rbrace^{|E|n}\\ \bigoplus_{e \in E} z_e^{1:T}\cdot s = 0}} \mathcal{T}_{\varepsilon^{1/|E|}}(f)(z) - \sum_{\substack{z \in \lbrace 0, 1\rbrace^{|E|n}\\ \bigoplus_{e \in E} z_e^{1:T}\cdot s = 1}} \mathcal{T}_{\varepsilon^{1/|E|}}(f)(z)\right)\\
&= \frac{1}{2^T|F|}\left(\sum_{\substack{z \in \lbrace 0, 1\rbrace^{|E|n}\\ z \cdot (s0^{n-T})_{e\in E} = 0}} \mathcal{T}_{\varepsilon^{1/|E|}}(f)(z) - \sum_{\substack{z \in \lbrace 0, 1\rbrace^{|E|n}\\ z \cdot (s0^{n-T})_{e\in E} = 1}} \mathcal{T}_{\varepsilon^{1/|E|}}(f)(z)\right)\\
&= \frac{2^{|E|n - T}}{|F|} \widehat{\mathcal{T}_{\varepsilon^{1/|E|}}(f)}((s0^{n-T})_{e\in E})\\
&= \frac{2^{|E|n - T}}{|F|}  \varepsilon^{|s|} \widehat{f}((s0^{n-T})_{e\in E})
\end{align*}

\noindent
And so by applying Lemma \ref{lem:fdecomp}: \[
\widehat{p}(s) = \frac{2^{|E|n - T}}{|F|}  \varepsilon^{|s|} \prod_{e \in E}  \widehat{f}(s0^{n-T}).
\]
By Parseval's identity, as for any probability
distribution $q$ on $\lbrace 0, 1\rbrace^{T}$,
$\widehat{q}(0^T) = \frac{1}{2^T}$ and $\wh{v}(s) = 0$ for all
$s \neq 0^T$:
\begin{align*}
\sum_{z \in \lbrace 0, 1\rbrace^T} (p(z) - \upsilon(z))^2 &= \frac{2^{2|E|n - T}}{|F|^2}  \sum_{s \in \lbrace 0, 1\rbrace^T \setminus \lbrace 0^T\rbrace} \varepsilon^{2|s|}\prod_{e \in E}  \widehat{f}(s0^{n-T})^2\\
&=  2^{-T}  \sum_{s \in \lbrace 0, 1\rbrace^T \setminus \lbrace 0^T\rbrace} \varepsilon^{2|s|}\prod_{e \in E}  \frac{2^{2n}}{|F_e|^2}\widehat{f}(s0^{n-T})^2
\end{align*}

\noindent
By applying Lemma \ref{lem:setsizes} with $\alpha = \log 2|E|/\delta$ and applying the union bound: \[
\forall e \in E, |F_e| \le 2^{n -c}
\]
conditioned on an event $\mathcal{E}$ with probability at least $1 - \delta/2$.

We will now define a renormalized and masked version of $\wh{f}_e$ as follows:
\[
\wt{f}_e(s) := \begin{cases}
\frac{2^{n}\widehat{f_e}(s)}{|F_e|} & \mbox{if $|F_e| \ge 2^{n-c}$}\\
0 & \mbox{otherwise.}
\end{cases}
\]
Note that $\wt{f}_e$ can be expressed as a \emph{deterministic} function of the randomness in the following form:
\[
  \wt{f}_e(s) = g_e(s\rho_e\pi_e, \wt{x}_e\rho_e\pi_e, (\pi_v(\pi_e^{-1}(i)))_{v \in e, i \in L_e})
\]
To justify this, first recall that $F_e$ is determined by $(\wt{x}_e\rho_e\pi_e, \wt{x}_e\rho_e\pi_e, (\pi_v(\pi_e^{-1}(i)))_{v \in e, i \in L_e}$, and then consider: 
\begin{align*}
\widehat{f_e}(s) &= \frac{1}{2^n}\sum_{z \in \lbrace 0, 1\rbrace^n} f_e(z)(-1)^{z \cdot s}\\
&\propto \sum_{z \in \lbrace 0, 1\rbrace^n} \mathbbm{1}\lbrack m_e(z\rho_e\pi_e, (\pi_v(\pi_e^{-1}(i)))_{v \in e, i \in L_e}) = m_e(\wt{x}_e\rho_e\pi_e, (\pi_v(\pi_e^{-1}(i)))_{v \in e, i \in L_e}) (-1)^{z\cdot s}\rbrack\\
&= \sum_{z \in \lbrace 0, 1\rbrace^n} \mathbbm{1}\lbrack m_e(z\rho_e\pi_e, (\pi_v(\pi_e^{-1}(i)))_{v \in e, i \in L_e}) = m_e(\wt{x}_e\rho_e\pi_e, (\pi_v(\pi_e^{-1}(i)))_{v \in e, i \in L_e}) (-1)^{z\rho_e\pi_e\cdot s\rho_e\pi_e}\rbrack\\
&= \sum_{z \in L_e} \mathbbm{1}\lbrack m_e(z, (\pi_v(\pi_e^{-1}(i)))_{v \in e, i \in L_e}) = m_e(\wt{x}_e\rho_e\pi_e, (\pi_v(\pi_e^{-1}(i)))_{v \in e, i \in L_e}) \rbrack(-1)^{z \cdot s\rho_e\pi_e}
\end{align*}
As $L_e$ is fixed, this is a deterministic function of $(s\rho_e\pi_e, (\wt{x}_e\rho_e\pi_e,  (\pi_v(\pi_e^{-1}(i)))_{v \in e, i \in L_e}))$.

Now, conditioned on $\mathcal{E}$:
\begin{align}
\sum_{z \in \lbrace 0, 1\rbrace^T} (p(z) - \upsilon(z))^2 &=  2^{-T} \sum_{s \in \lbrace 0, 1\rbrace^T \setminus \lbrace 0^T\rbrace} \eps^{2\abs{s}}\prod_{e \in E}  \wt{f}(s0^{n-T})^2.\label{eq:l2sum}
\end{align}

\noindent
For any $k \in [n]$, note that the distribution of a single
\[
  \wt{f_e}(s) = g_e(s\rho_e\pi_e, \wt{x}_e\rho_e\pi_e, (\pi_v^{-1}\pi_e)_{v \in e})
\]
is identical for every $s \in \{0, 1\}^n$ of Hamming weight $k$:
$\pi_e$ permutes the first argument, and $x_e$ and $\pi_v$
independently permute the other ones given $\pi_e$.  Therefore for any
fixed $s \in \{0, 1\}^{n}$ of Hamming weight $k$, we have:
\begin{align*}
\beta_k := \max_e \E[\substack{X, \Pi}]{\wt{f_e}(s)^2} &= \max_e \frac{1}{{n \choose k}} \E[X, \Pi]{\sum_{\substack{s' \in \lbrace 0, 1\rbrace^n\\ |s'| = k}} \wt{f_e}(s')^2}
\end{align*}
independent of which such $s$ is chosen.

\noindent
Because any function $f$ and all $s$ have $|\widehat{f}(s)| \le \E[y \sim \mathcal{U}(\lbrace 0, 1\rbrace^n)]{\abs{f(y)}}$, we also have: \[
  \wt{f_e}(s)^2 \in \lbrack 0, 1\rbrack.
\]
Therefore Lemma~\ref{lem:weights} with $q=1$ says for any $s$ with $\abs{s} = k$ that
\[
  \E[X,\Pi]{ \prod_{e \in E}\wt{f_e}(s)^2} \leq C \beta_k^\mu
\]
for some constant $C$ depending on the hypergraph $H$.  This lets us bound the
expectation of
\begin{align*}
\sigma_k &= \sum_{\substack{t \in \lbrace 0, 1\rbrace^{T}\\ |t| = k}} \eps^{2k}\prod_{e \in E}\wt{f_e}(t0^{n - T})^2
\end{align*}
by
\[
  \E[X,\Pi]{\sigma_k} \leq C \eps^{2k}\binom{T}{k}\beta_k^\mu.
\]
Our goal now is to bound the sum of this over all $1 \leq k \leq T$.
\begin{description}
\item[Low-weight terms:] For $k \leq c$, by Lemma~\ref{lem:kklcor} we have
  \[
    \beta_{k} \leq \frac{1}{\binom{n}{k}} \left(\frac{2c}{k}\right)^{k}.
  \]
  Therefore
  \begin{align*}
    \sum_{k=1}^c \E[X,\Pi]{\sigma_k}
    &\leq C\sum_{k=1}^c \eps^{2k}\binom{T}{k} \binom{n}{k}^{-\mu} \left(\frac{2c}{k}\right)^{k \mu}\\
    &\leq C\sum_{k=1}^c \left(\frac{2^\mu e \eps^2 T c^\mu}{k n^\mu}\right)^{k}\\
    &\leq \frac{1}{20}\delta^2
  \end{align*}
  as long as $c \leq \gamma  n(\frac{ \delta^2}{\eps^2 T})^{1/\mu}$ for a sufficiently small constant $\gamma$.
\item[High-weight terms:] By Parseval's identity,
  \begin{align*}
    \sum_{s \in \lbrace 0, 1\rbrace^n} \widehat{f_e}(s)^2 &= \frac{1}{2^n}\sum_{z \in \lbrace 0, 1\rbrace^n} f_e(z)^2
    = \frac{|F_e|}{2^n}
  \end{align*}
  so
  \begin{align*}
    \sum_{s \in \lbrace 0, 1\rbrace^n}\wt{f_e}(s)^2 \leq 2^c
  \end{align*}
  and hence
  \[
    \sum_{k=0}^n \binom{n}{k} \beta_k \leq \abs{E} \cdot 2^c.
  \]
  Therefore, since $\mu \geq 1$ and $\eps^2T \leq n/10$,
  \begin{align*}
    \sum_{k=c+1}^{T} \E[X,\Pi]{\sigma_k}
    &\leq C\sum_{k=c+1}^{T} \eps^{2k}\binom{T}{k}\beta_k^\mu\\
    &\leq C \abs{E} \cdot 2^c \max_{k: c \leq k \leq T } \frac{\eps^{2k}\binom{T}{k}}{\binom{n}{k}}\\
    &\leq C \abs{E} \left(\frac{2e \eps^2 T}{n}\right)^c\\
    &\leq C \abs{E} 2^{-c}.
  \end{align*}
  Since $c \geq 2\log(1/\delta) + C'$ for a sufficiently large constant
  $C'$, this gives
  \[
    \sum_{k=c+1}^{T} \E[X,\Pi]{\sigma_k} \leq \delta^2/20.
  \]
\end{description}
Combining the two cases, we have
\[
\E[X, \Pi]{\sum_{k = 1}^{T} \sigma_k} \le \delta^2/9.
\]
Then: \[
\E[X, \Pi | \mathcal{E}]{ \sum_{z \in \lbrace 0, 1\rbrace^T} (p(z) - \upsilon(z))^2 } \le 2^{-T}\E[X, \Pi | \mathcal{E}]{\sum_{k = 1}^{T} \sigma_k}
\]
So: \[
\E[X, \Pi | \mathcal{E}]{|| p - \upsilon||_{TV}^2} \le \E[X, \Pi | \mathcal{E}]{\sum_{k = 1}^{T} \sigma_k}
\]
And so, as $|| p - \upsilon||_{TV}^2 \le 1$ always: 
\begin{align*}
  \E[X, \Pi]{|| p - \upsilon||_{TV}}
  &\leq \E[X, \Pi]{|| p - \upsilon||_{TV} \mathbbm{1}_{\mathcal{E}}} + \E[X, \Pi]{\mathbbm{1}_{\overline{\mathcal{E}}}}\\
  &\leq \E[X, \Pi]{\sum_{k=1}^T \sigma_k}^{1/2} + \Pb{\overline{\mathcal{E}}}\\
  &\leq \delta/3 + \delta/2\\
&= \delta
\end{align*}
as desired.
\begin{corollary}\label{cor:nopromisegame}
  Let $H$ be a connected hypergraph with more than one edge. Let
  $c \in \lbrack n \rbrack$. Suppose that, for all inputs $(X, \Pi)$
  to $\nopromise(H, n, T, \varepsilon)$, no player sends a message of more than $c$ bits.

Let $\mu = MVC_{1}(H)$, and let $0 < \delta < 1$.

There exists a constant $\gamma$ that depends on $H$ such that, if $c \le \gamma \frac{n}{(\delta^2 \varepsilon^2 T)^{1/\mu}}$, the players succeed at the game with probability at most $1/2 + \delta$.\end{corollary}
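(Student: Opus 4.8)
The plan is to follow the proof of Corollary~\ref{cor:lb-promise} almost verbatim, invoking Theorem~\ref{thm:nopromisegame} in place of Theorem~\ref{thm:lb-promise}, with one short extra step to pass from the near-uniformity of the referee's posterior on $\bigoplus_{e\in E}x_e^{1:T}$ (which is what Theorem~\ref{thm:nopromisegame} controls) to the referee's inability to guess $\tau$.

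By Yao's minimax principle~\cite{Y77}, since the hard instance is a fixed distribution over inputs, it suffices to upper bound the success probability of an arbitrary deterministic protocol in which every player sends at most $c$ bits. Fix such a protocol, and let $p$ denote the referee's posterior distribution on $\bigoplus_{e\in E}x_e^{1:T}$ given the players' messages and $\Pi$, but \emph{before} using the side information $w:=\tau\oplus\bigoplus_{e\in E}x_e^{1:T}$. As the hypotheses of Theorem~\ref{thm:nopromisegame} are met, that theorem gives $\E[\chi,\Pi]{\|p-\upsilon\|_{TV}}\le\delta$, where $\upsilon=\mathcal{U}(\{0,1\}^T)$.

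Next I would view the referee's task as a binary hypothesis test between $\tau=0^T$ and $\tau=1^T$, which have equal prior probability. Since $\tau$ is independent of $\chi$ and of $\Pi$, the joint law of $(\mathrm{messages},\Pi)$ is the same under both hypotheses, while conditioned on $(\mathrm{messages},\Pi)$ the side information $w$ is distributed as $p$ when $\tau=0^T$ and as $p(\cdot\oplus 1^T)$ when $\tau=1^T$. Hence the optimal success probability of any guessing rule is
\[
  \tfrac12+\tfrac12\,\E[\chi,\Pi]{\bigl\|p-p(\cdot\oplus 1^T)\bigr\|_{TV}}.
\]
Because $\upsilon$ is invariant under XOR-ing the coordinate string by $1^T$, the triangle inequality yields $\|p-p(\cdot\oplus 1^T)\|_{TV}\le\|p-\upsilon\|_{TV}+\|\upsilon-p(\cdot\oplus 1^T)\|_{TV}=2\|p-\upsilon\|_{TV}$, so the success probability is at most $\tfrac12+\E[\chi,\Pi]{\|p-\upsilon\|_{TV}}\le\tfrac12+\delta$.

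Essentially all of the work is already done in Theorem~\ref{thm:nopromisegame}, so I do not expect any real obstacle. The only genuinely new ingredient is the reduction in the previous paragraph, and the one place to be careful is that the factor $\tfrac12$ coming from optimal binary hypothesis testing exactly cancels the factor $2$ lost to the triangle inequality --- which is what the $1^T$-shift-invariance of $\upsilon$ buys us --- so no rescaling of $\delta$ (and hence of $\gamma$) is needed.
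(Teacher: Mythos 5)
Your proof is correct and follows essentially the same route as the paper's: Yao's principle, then Theorem~\ref{thm:nopromisegame}, then comparing the two shifted posteriors of $\bigoplus_{e\in E}x_e^{1:T}$ via the triangle inequality through the uniform distribution. The only cosmetic difference is bookkeeping: the paper invokes the theorem with $\delta/2$ (i.e.\ a slightly smaller $\gamma$) and uses a cruder bound relating total variation to guessing advantage, whereas your exact $\tfrac12+\tfrac12\|P_0-P_1\|_{TV}$ hypothesis-testing identity cancels the factor $2$ from the triangle inequality and avoids rescaling $\delta$.
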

\begin{proof}
By Yao's principle \cite{Y77}, as we have a fixed distribution on inputs to our game, it is sufficient to consider deterministic protocols. Suppose we have such a protocol with maximum message size
no more than $c$.

By applying Theorem \ref{thm:nopromisegame} with a smaller choice of constant $\gamma$, the referee's posterior distribution on $\bigoplus_{e \in E}x_e^{1:T}$ is at most $\delta/2$ from uniform after receiving the messages associated with the protocol but before looking at $\tau \oplus \bigoplus_{e \in E}x_e^{1:T}$. So then, after looking at $\tau \oplus \bigoplus_{e \in E}x_e^{1:T}$, the referee must determine whether it is more likely that they are looking at $ \bigoplus_{e \in E}x_e^{1:T}$ or $1^T \oplus \bigoplus_{e \in E}x_e^{1:T}$. However, the distributions of  $ \bigoplus_{e \in E}x_e^{1:T}$ and $1^T \oplus \bigoplus_{e \in E}x_e^{1:T}$ conditioned on the messages received are both $\delta/2$-close to uniform, and so by the triangle inequality are at most $\delta$ from each other, and so the referee guesses correctly with probability at most $1/2 + \delta$.
\end{proof}

\section{Linear Sketching Lower Bound}\label{sec:reduction}
\begin{definition}
\label{dfn:composable}
Let $\mathcal{A}$ be a randomized graph streaming algorithm, and let $\mathbb{S}$ be the set of possible states $\mathcal{S}$ of $\mathcal{A}$. We will say $\mathcal{A}$ has \emph{composable state} if, for any fixed random seed for $\mathcal{A}$, there is a function $c: \mathbb{S} \times \mathbb{S} \rightarrow \mathbb{S}$ such that, if $\mathcal{S}_1$ is the state of $\mathcal{A}$ after receiving the stream of edges $E_1$ as input, and $\mathcal{S}_2$ is the state of $\mathcal{A}$ after receiving the stream of edges $E_2$ as input, $c(\mathcal{S}_1,\mathcal{S}_2)$ is the state of $\mathcal{A}$a after receiving the concatenation of $E_1$ and $E_2$ as input.
\end{definition}

\begin{theorem}
\label{thm:graphsketching}
Let $H = (V, E)$ be a (fixed) connected hypergraph with $|E| > 1$. Let $T \in \mathbb{N}, \varepsilon \in (1/\sqrt{T}, 1\rbrack$. Let $\mathcal{A}$ be a graph streaming algorithm that can distinguish between graphs $G$ presented as a stream of edges with at least $T$ copies of $H$ and graphs with at most $(1 - \varepsilon)T$ copies of $H$ with probability $99/100$, provided $G$ has no more than $m$ edges. Let $S(m)$ be the maximum space usage of $\mathcal{A}$ across all $m$-edge inputs.

Furthermore, let $\mathcal{A}$ have composable state. Then, for all $m \ge O(T)$: \[
S(m) = \Omega\left(\max \left(\frac{m}{(\varepsilon T)^{1/\mu_2}}, \frac{m}{(\varepsilon^2 T)^{1/\mu_1}}\right)\right)
\]
where $\mu_2 = MVC_\frac{1}{2}(H)$ and $\mu_1 = \max_{e \in E}MVC_1(H \setminus e)$, with constants that may depend on $H$ but nothing else.
\end{theorem}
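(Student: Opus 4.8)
I will prove the two halves of the $\max$ separately by reducing the communication games of Sections~\ref{sec:promisegame} and~\ref{sec:nopromisegame} to the streaming problem, exploiting composability to merge per-player states. Both reductions use a common template. From a game instance on parameters $(n,T,\varepsilon')$ we build a graph $G$ on vertex classes $\{V_u\}_{u\in V(H)}$, each of size $N=T+(n-T)|E|$; the player owning edge $e$ turns its hidden string $x_e$ (seen through the permutations) into $n$ (hyper)edges inside $\prod_{u\in e}V_u$, keeping exactly those whose (possibly noised) label bit equals $0$. By construction the kept edges contain at most $T$ ``core'' potential copies of $H$, indexed by the $T$ constrained coordinates, plus a disjoint family of isolated edges; after stripping the permutations, core copy $j$ is present in $G$ iff all of its players keep their $j$-th edge, i.e.\ iff all label bits at coordinate $j$ are $0$. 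Since $H$ is fixed, $G$ has $m'=\Theta(|E|\,n)=\Theta(n)$ edges, so we take $n=\Theta(m)$ (ensuring $m'\le m$); the hypotheses $m\ge O(T)$ and $\varepsilon T\le n/10$ are then met. Each player runs $\mathcal{A}$ on its sub-stream with a shared seed, producing $\le S(m)$ bits, and sends it to the referee, who by Definition~\ref{dfn:composable} merges the states into $\mathcal{A}$'s state on all of $G$ and applies $\mathcal{A}$'s output rule.

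\textbf{First bound (from $\promise$).} Using $\promise(H,n,T,\varepsilon')$, the promise $\bigoplus_e x_e^{1:\varepsilon' T}=\tau$ with $\tau\in\{0^{\varepsilon' T},1^{\varepsilon' T}\}$ already makes the count of core copies depend on $\tau$: a core copy at a coordinate $j\le \varepsilon' T$ survives with probability $2^{-(|E|-1)}$ when $\tau_j=0$ and probability $0$ when $\tau_j=1$, whereas the coordinates $j>\varepsilon' T$ have unconstrained labels and contribute the same expected count in both cases. Thus the number of copies of $H$ has expectation $2^{-|E|}T(1\pm\varepsilon')$ in the two cases, and since the relevant indicators are mutually independent given $\tau$, a Chernoff bound puts the count within $O(\sqrt T)$ of its mean with high probability; as $\varepsilon'>1/\sqrt T$ the additive gap $\Theta(\varepsilon' T)$ dominates this fluctuation. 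Setting the streaming thresholds to $T_{\mathrm{count}}=\Theta(T)$ and $\varepsilon_{\mathrm{count}}=\Theta(\varepsilon')$, a streaming algorithm separating $\ge T_{\mathrm{count}}$ from $\le(1-\varepsilon_{\mathrm{count}})T_{\mathrm{count}}$ copies yields (after fixing the shared seed to a good value and invoking Yao's principle) a deterministic protocol solving $\promise$ with probability $>1/2+\delta$ for a small absolute constant $\delta$. By Corollary~\ref{cor:lb-promise} this forces $S(m)=\Omega(n/(\varepsilon' T)^{1/\mu_2})=\Omega(m/(\varepsilon T)^{1/\mu_2})$ with $\mu_2=MVC_{1/2}(H)$, matching $\varepsilon'$ to $\varepsilon$ up to constants.

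\textbf{Second bound (from $\nopromise$).} Pick $e^*\in\argmax_{e\in E}MVC_1(H\setminus e)$ and reduce from $\nopromise(H\setminus e^*,n,T,\varepsilon')$, whose $|E|-1$ edge-players supply the slots $e\in E\setminus\{e^*\}$. The key observation is that the referee in this game holds exactly $\tau\oplus\bigoplus_{e\ne e^*}x_e^{1:T}$, which is precisely the data needed to construct the \emph{missing} slot $e^*$: the referee includes the $j$-th potential $e^*$-edge iff the $j$-th bit of this string is $0$. A core copy at coordinate $j$ then survives iff every $e\ne e^*$ player keeps its $j$-th (noised) edge \emph{and} $\bigoplus_{e\ne e^*}(x_e)_j=\tau_j$; averaging over the noise (correlation $(\varepsilon')^{1/(|E|-1)}$ per bit) and the labels, the expected count is $\Theta(T)(1\pm\Theta(\varepsilon'))$ in the two cases $\tau\in\{0^T,1^T\}$, again with $O(\sqrt T)$ concentration, so as above a good streaming algorithm solves the game with probability $>1/2+\delta$. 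Corollary~\ref{cor:nopromisegame} (whose proof goes through for the possibly disconnected fixed hypergraph $H\setminus e^*$: decompose into connected components, using the weights bound of Lemma~\ref{lem:weights} on components with more than one edge and plain independence across components and on isolated edges, and noting $MVC_1$ is additive over components) gives $S(m)=\Omega(n/((\varepsilon')^2 T)^{1/MVC_1(H\setminus e^*)})=\Omega(m/(\varepsilon^2 T)^{1/\mu_1})$. Taking the maximum of the two bounds proves the theorem.

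\textbf{Main obstacle.} The delicate point is arranging for the count to depend on $\tau$ while staying constructible by the right party. In the $\promise$ reduction this is automatic from the XOR promise; in the $\nopromise$ reduction it rests on the observation that the referee's side information $\tau\oplus\bigoplus_{e\ne e^*}x_e^{1:T}$ is exactly the labelling of the one slot the edge-players do not own, and on checking that using the noised labels $\widetilde{x}_e$ rather than the true $x_e$ still leaves a $\Theta(\varepsilon')$ relative separation once the noise is averaged out (this is where the $\varepsilon^{-2}$ rather than $\varepsilon^{-1}$ dependence enters, via the per-bit correlation $(\varepsilon')^{1/(|E|-1)}$). The remaining steps---bounding $|E(G)|$, the Chernoff concentration given $\varepsilon>1/\sqrt T$, and matching $(n,T,\varepsilon')$ to $(m,T,\varepsilon)$ up to the constants absorbed into $\Omega(\cdot)$---are routine.
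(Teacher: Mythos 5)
Your proposal follows essentially the same route as the paper's proof: the same per-player graph construction on $[N]\times V(H)$ with label-$0$ edges, the same use of composable state to merge the players' sketches at the referee, the same survival probability $2^{1-|E|}$ at promise-constrained coordinates (resp.\ per-bit bias giving counts $\mathrm{Bi}(T',(1\pm\varepsilon)2^{-|E|})$ in the no-promise case), and the same invocations of Corollaries~\ref{cor:lb-promise} and~\ref{cor:nopromisegame}; the only difference is cosmetic, namely using Chernoff after adjusting the game parameters $(T',\varepsilon')$ by constant factors where the paper aligns the means with the fixed thresholds $T$ and $(1-\varepsilon)T$ and argues via the binomial median plus Chebyshev. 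Your parenthetical treatment of a possibly disconnected $H\setminus e^*$ (componentwise use of Lemma~\ref{lem:weights}, independence across components, and additivity of $MVC_1$) is correct and addresses a point the paper leaves implicit.
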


\noindent
We will prove this by reductions to $\mathtt{PromiseCounting}$ and $\mathtt{Counting}$. In both reductions, we will use the following lemma on binomial distributions, from \cite{KB80}: 
\begin{lemma}
Let $m$ be any median of $\text{Bi}(n,p)$. Then: \[
\lfloor np \rfloor \le m \le \lceil np \rceil
\]
\end{lemma}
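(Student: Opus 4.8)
A median is unaffected by the transformation $X\mapsto n-X$, which turns $\mathrm{Bi}(n,p)$ into $\mathrm{Bi}(n,1-p)$ and carries $\lfloor np\rfloor$ to $n-\lfloor np\rfloor=\lceil n(1-p)\rceil$ and $\lceil np\rceil$ to $n-\lceil np\rceil=\lfloor n(1-p)\rfloor$. Hence the two required inequalities, $\lfloor np\rfloor\le m$ and $m\le\lceil np\rceil$, are the same statement for $p$ and for $1-p$, and it suffices to prove one of them. I would in fact establish the slightly stronger tail bound: for $X\sim\mathrm{Bi}(n,p)$ with $0<p<1$ (the cases $p\in\{0,1\}$ being trivial, since $X$ is constant),
\[
  \Pb{X\ge\lfloor np\rfloor}>\tfrac12 .
\]
This gives $\Pb{X\le\lfloor np\rfloor-1}<\tfrac12$, so any $m$ with $\Pb{X\le m}\ge\tfrac12$ must have $\lfloor m\rfloor\ge\lfloor np\rfloor$, i.e.\ $m\ge\lfloor np\rfloor$; applying the same bound to $n-X$ gives $\Pb{X\ge\lceil np\rceil+1}<\tfrac12$ and hence $m\le\lceil np\rceil$ for every median $m$. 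Using the strict inequality here is precisely what lets us dispose, without fuss, of the delicate case in which the CDF equals $\tfrac12$ exactly.

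\textbf{The tail bound.} For this I would use the standard identity expressing a binomial upper tail as a regularized incomplete beta function: for $1\le k\le n$,
\[
  \Pb{\mathrm{Bi}(n,p)\ge k}=I_p(k,n-k+1)=\Pb{\mathrm{Beta}(k,n-k+1)\le p}.
\]
If $\lfloor np\rfloor=0$ the bound is trivial, so take $k=\lfloor np\rfloor\in\{1,\dots,n-1\}$ (note $k\le np<n$ since $p<1$) and split on whether $k\le\tfrac{n+1}{2}$ or $k>\tfrac{n+1}{2}$. In the first case $\mathrm{Beta}(k,n-k+1)$ has shape parameters $\alpha=k\le\beta=n-k+1$, so its median is at most its mean $\tfrac{k}{n+1}$; since $p\ge\tfrac{k}{n}>\tfrac{k}{n+1}$ and the Beta density is strictly positive on $(0,1)$, we get $\Pb{\mathrm{Beta}(k,n-k+1)\le p}>\tfrac12$. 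In the second case I apply the identity instead to $\mathrm{Bi}(n,1-p)$: writing $q=1-p$ and $\ell:=\lceil nq\rceil+1=n-\lfloor np\rfloor+1<\tfrac{n+1}{2}$, one has
\[
  \Pb{\mathrm{Bi}(n,p)\ge\lfloor np\rfloor}=1-\Pb{\mathrm{Bi}(n,q)\ge\ell}=1-\Pb{\mathrm{Beta}(\ell,n-\ell+1)\le q},
\]
and now $\mathrm{Beta}(\ell,n-\ell+1)$ has $\alpha=\ell<\beta$, so its median is at least its mode $\tfrac{\ell-1}{n-1}=\tfrac{\lceil nq\rceil}{n-1}>q$; hence $\Pb{\mathrm{Beta}(\ell,n-\ell+1)\le q}<\tfrac12$, giving the bound again.

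\textbf{Main obstacle.} The one substantive ingredient is the classical fact that the median of a $\mathrm{Beta}(\alpha,\beta)$ distribution with $\alpha,\beta\ge1$ lies between its mode $\tfrac{\alpha-1}{\alpha+\beta-2}$ and its mean $\tfrac{\alpha}{\alpha+\beta}$ (interpreting the mode as the relevant endpoint when $\alpha=1$ or $\beta=1$); this is a consequence of log-concavity of the Beta density and is exactly the type of estimate established in \cite{KB80}. Everything else is bookkeeping: the incomplete-beta identity, the elementary inequalities $p\ge\tfrac{k}{n}>\tfrac{k}{n+1}$ and $\tfrac{\lceil nq\rceil}{n-1}>q$ (the latter using $n\ge2$, which holds whenever the second case is nonvacuous), and checking $1\le\ell\le n$ and $\alpha,\beta\ge1$ so that the identity and the Beta fact apply. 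A fully self-contained alternative is Kaas and Buhrman's original induction on $n$ \cite{KB80}; since the lemma is only ever used as a black box in what follows, quoting \cite{KB80} suffices.
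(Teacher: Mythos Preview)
The paper does not give its own proof of this lemma: it is quoted verbatim as a known fact from \cite{KB80} and used as a black box in the reductions of Section~\ref{sec:reduction}. So there is nothing to compare your argument against within the paper itself.

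Your proof is correct. The reduction by the symmetry $X\mapsto n-X$ to the single tail bound $\Pb{X\ge\lfloor np\rfloor}>\tfrac12$, followed by the incomplete-beta identity and the mode--median--mean ordering for $\mathrm{Beta}(\alpha,\beta)$ with $\alpha,\beta\ge 1$, goes through as you describe; the case split on $k\lessgtr(n+1)/2$ correctly arranges that you always compare $p$ (resp.\ $q$) to the mean (resp.\ mode), and the strict inequalities $k/n>k/(n+1)$ and $\lceil nq\rceil/(n-1)>q$ give the strictness needed to pin down every median. The boundary checks ($k\ge1$, $\ell\ge2$, $n\ge2$ in Case~2) are all in order.

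One small point of attribution: the Beta mode--median--mean inequality you invoke is not actually what \cite{KB80} proves; Kaas and Buhrman give a direct inductive argument for the binomial median itself (which you mention as the self-contained alternative). The Beta ordering is a separate classical fact (log-concavity/unimodality arguments \`a la van Zwet or Groeneveld--Meeden). Since you are citing it as a black box anyway this does not affect correctness, but if you want the argument to be self-contained via \cite{KB80} you should take the inductive route rather than the Beta route.
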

\noindent
Our first reduction will be to $\promise$.

\begin{lemma}
\[
\forall m \ge O(T), S(m) = \Omega\left(\frac{m}{(\varepsilon T)^{1/\mu_2}}\right)
\]
\end{lemma}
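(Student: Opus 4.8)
The plan is to reduce the communication game $\promise(H, n, T, \varepsilon)$ to the streaming problem, so that Corollary~\ref{cor:lb-promise} yields the claimed space lower bound. The key construction is to turn an instance of $\promise$ into a graph $G$ on $\Theta(n)$ vertices per vertex-class, where each player's edges (those labeled $0$) become edges of $G$. Concretely, I would create $|V|$ disjoint vertex sets, one for each $u \in V(H)$, of size $N = T + (n-T)|E|$ each, and identify the coordinate sets $L_e$ with the appropriate pairs of these vertex sets so that the edges on coordinates $[T]$ (shared by every player's $L_e$) line up into $T$ disjoint copies of $H$, while the remaining $(n-T)|E|$ coordinates become isolated non-$H$-forming edges. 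The permutations $\pi_v$ held by the referee tell it which pre-permutation vertex each actual vertex is; the permutations $\pi_e$ scramble the order in which player $e$ sees its edges, and player $e$ knows $x_e\rho_e\pi_e$ (its labels) and $(\pi_v(\pi_e^{-1}(i)))_{v \in e, i \in L_e}$ (where its edges actually sit in $G$). Each player $L_0$-samples (via the composable-state reduction to non-adaptive sampling is not needed here — we use the algorithm as a black box) the subgraph consisting only of its $0$-labeled edges, runs $\mathcal{A}$ on that substream, and sends the resulting state to the referee; the referee composes the states using the composability function $c$ to obtain the state of $\mathcal{A}$ on the union of all $0$-labeled edges.

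Next I would verify the gap. A copy of $H$ in $G$ coming from coordinate $j \in [T]$ survives (all its edges are $0$-labeled) iff $\bigoplus_{e} (x_e)_j = 0$. Under the promise, $\bigoplus_e x_e^{1:\varepsilon T} = \tau \in \{0^{\varepsilon T}, 1^{\varepsilon T}\}$, so either all $\varepsilon T$ of these coordinates give surviving copies (when $\tau = 0^{\varepsilon T}$) or none of them do (when $\tau = 1^{\varepsilon T}$); the remaining $T - \varepsilon T$ coordinates each survive independently with probability $1/2$. So in the $\tau = 0^{\varepsilon T}$ case the number of copies of $H$ in the $0$-labeled subgraph is $\varepsilon T + \mathrm{Bi}(T - \varepsilon T, 1/2)$, concentrated around $\varepsilon T + (T-\varepsilon T)/2$, whereas in the $\tau = 1^{\varepsilon T}$ case it is $\mathrm{Bi}(T - \varepsilon T, 1/2)$, concentrated around $(T - \varepsilon T)/2$. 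Using the binomial-median lemma from \cite{KB80} (plus a Chernoff bound for concentration), these two counts differ multiplicatively by a $(1 \pm \Omega(\varepsilon))$ factor around a common value $\Theta(T)$, so after rescaling $T$ by a constant we are exactly in the regime where $\mathcal{A}$ distinguishes $\ge T'$ copies from $\le (1-\varepsilon')T'$ copies. Since $G$ has $O(n|E|) = O(n)$ edges and each vertex of $H$ contributes degree $O(1)$ (every coordinate gives one edge per incident player, a constant), $G$ has max degree $O(1)$ and $m = \Theta(n)$ edges, so we may take $n = \Theta(m)$.

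Finally, I would translate the space bound: if $\mathcal{A}$ uses $S(m)$ space, then each of the $|E|$ players sends at most $S(m)$ bits to the referee, so by Corollary~\ref{cor:lb-promise} (with $\delta$ a small constant so that $1/2 + \delta < 99/100$, after boosting success probability by a constant number of repetitions if needed — or just noting the $99/100$ success translates to beating the game) we must have $S(m) \ge c$ where $c > \gamma \frac{n}{(\delta^2 \varepsilon T)^{1/\mu_2}} = \Omega\!\left(\frac{m}{(\varepsilon T)^{1/\mu_2}}\right)$, using $\mu = \mu_2 = MVC_{1/2}(H)$. The hypotheses of Corollary~\ref{cor:lb-promise} require $\varepsilon T \le n/10$, which holds for $m = \Theta(n)$ sufficiently large relative to $T$ (this is where the $m \ge O(T)$ assumption enters), and require $H$ connected with $|E| > 1$, which is assumed. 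The main obstacle I anticipate is the bookkeeping in the reduction's correctness: one must check carefully that the information each player receives in the streaming reduction is exactly (a deterministic function of) what a $\promise$ player receives — in particular that the random permutations of $G$'s vertices realize the distribution of $\Pi_V$, and that the players genuinely cannot tell which of their edges lie in $H$-copies — and that composing the $|E|$ sketches via the composability function legitimately yields a valid state of $\mathcal{A}$ on the concatenated stream (this is exactly what Definition~\ref{dfn:composable} provides, but the substreams must be set up as disjoint edge sets whose union is the intended subgraph). The probabilistic gap argument is routine given the binomial lemma.
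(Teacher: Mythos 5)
Your overall reduction is the same as the paper's: each player keeps its $0$-labeled edges, runs $\mathcal{A}$ on them, sends the composable state, and the referee composes the states and thresholds the resulting count; the space bound then follows from Corollary~\ref{cor:lb-promise}. However, the step where you count the surviving copies of $H$ is wrong: the copy at coordinate $j$ survives iff \emph{every} player has $(x_e)_j = 0$, not iff $\bigoplus_e (x_e)_j = 0$ (even parity allows, e.g., two labels equal to $1$, which deletes edges of that copy). Consequently, under $\tau = 0^{\varepsilon T}$ a promise coordinate survives only with probability $2^{1-|E|}$, not with certainty, and a non-promise coordinate survives with probability $2^{-|E|}$, not $1/2$; your claimed distributions $\varepsilon T + \text{Bi}(T-\varepsilon T, 1/2)$ versus $\text{Bi}(T-\varepsilon T, 1/2)$ are incorrect. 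The paper handles the scaling up front by instantiating the game as $\promise(H, n, T', \varepsilon)$ with $T' = 2^{|E|}T$, so that the number of copies is distributed as $\text{Bi}((1-\varepsilon)T', 2^{-|E|})$ when $\tau = 1^{\varepsilon T'}$ and $\text{Bi}((1-\varepsilon)T', 2^{-|E|}) + \text{Bi}(\varepsilon T', 2^{1-|E|})$ when $\tau = 0^{\varepsilon T'}$, i.e.\ concentrated near $(1-\varepsilon)T$ versus $(1+\varepsilon)T$, which is exactly the pair of thresholds at which $\mathcal{A}$ is assumed to distinguish; your ``rescale $T$ by a constant'' remark cannot substitute for this, because $\mathcal{A}$'s guarantee is tied to the specific pair $(T, (1-\varepsilon)T)$ and the correct rescaling factor depends on the survival probabilities you miscomputed.

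A second, smaller issue: your concentration claim (that the two counts differ multiplicatively by a $(1\pm\Omega(\varepsilon))$ factor, via Chernoff) is too strong in the regime where $\varepsilon$ is close to $1/\sqrt{T}$, since the binomial fluctuation $\Theta(\sqrt{T})$ is then comparable to the gap $\Theta(\varepsilon T)$. The paper argues asymmetrically: after assuming WLOG that $\varepsilon$ is a multiple of $1/T$ and $10/\sqrt{T} \le \varepsilon \le 1/10$, the binomial median lemma gives that the low count is at most $(1-\varepsilon)T$ with probability at least $1/2$, while Chebyshev gives that the high count exceeds $T$ with probability at least $99/100$; averaging over $\tau$ yields protocol success probability about $0.74$, which beats $1/2+\delta$ for a small constant $\delta$ and lets Corollary~\ref{cor:lb-promise} apply. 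With these two corrections your argument goes through along the paper's lines.
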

\begin{proof}
  First, we note that we may assume that $T \geq 10000$ WLOG.  An
  algorithm for $T < 10000$ can distinguish between streams with $0$
  copies of $H$ and streams with at least $20000$ copies of $H$, and
  applying the lemma for $T=10000$ and $\eps=1$ will get the desired
  $\Omega(m)$ bound.  We may assume $10/\sqrt{T} \le \eps \leq 1/10$ for similar
  reasons.  We will also assume $\varepsilon$ is an integer multiple
  of $1/T$, as this will cost us at most a factor of $2$ in the bound,
  as if $\mathcal{A}$ can distinguish between graphs with
  $(1 - \varepsilon)T$ and $T$ copies of $H$, it can
  distinguish between graphs with $1 - \lfloor \varepsilon T\rfloor$
  and $T$ copies.

 We will use $\mathcal{A}$ to devise a $S(m)$-bit protocol for $\promise(H,n,T',\varepsilon)$, where $n = \Theta(m)$, $T' = 2^{|E|}T$, and the instance is distributed as in our ``hard instance'' from Theorem \ref{thm:lb-promise}. This will be based on constructing a graph $G = (B,R)$ in pieces to input to $\mathcal{A}$.  The protocol is as follows (recall that $\bigcup_{e \in E} L_e = \lbrack N\rbrack$): 
\begin{itemize}
\item Let $B = \lbrack N\rbrack \times V(H)$
\item Each player $e$, on seeing the input:
\begin{itemize}
\item $x_e\rho_e\pi_e \in \lbrace 0, 1\rbrace^{L_e}$
\item $(\pi_v(\pi_e^{-1}(i)))_{v \in e, i \in L_e}$
\end{itemize}
constructs a set of edges $R_e$ as follows: For each $i \in L_e$ such that $(x_e\rho_e\pi_e)_i = 0$, add the hyperedge $\lbrace (\pi_v(\pi_e^{-1}(i))), v) : v \in e \rbrace$.
\item Each player $e$ runs $\mathcal{A}$ with $R_e$ as input, and then sends the state of $\mathcal{A}$ to the referee.
\item The referee composes all received states, and reads the output of $\mathcal{A}$ on $G = (B, R)$, where $R = \bigcup_{e\in E} R_e$.
\item If the algorithm reports that $G$ has no more than $(1 - \varepsilon T)$ triangles, the referee decides that $\tau = 1^{\varepsilon T}$, and otherwise decides that $\tau = 0^{\varepsilon T}$.
\end{itemize}
Now we would like to know how many copies of $H$ are in $G = (B, R)$. Consider any vertex $(i, v) \in \lbrack N\rbrack \times V(H)$. If $\pi_v^{-1}(i) \not \in \lbrack T'\rbrack$, then at most one edge in $R$ includes $(i, v)$, as there is at exactly one $e \in E$ such that $L_e$ includes $i$ (as the sets $L_e$ were defined to have pairwise intersection $\lbrack T'\rbrack$).

For any $i$ such that $\pi_v^{-1} \in \lbrack T'\rbrack$, and for each $e \ni v$, $(i, v)$ will be contained in the hyperedge $\lbrace (\pi_v(\pi_e^{-1}(j))), v) : v \in e \rbrace$ (if it exists, that is if $(x_e\rho_e\pi_e)_j = 0$), where $j = \pi_e(\pi_v^{-1}(i))$, and no other edges. As this applies for each $v \in V(H)$, the connected component containing $(i, v)$ will be contained in $R^{(i)} = \lbrace \lbrace (\pi_v(\pi_e^{-1}(j))), v) : v \in e \rbrace : e \in E, j = \pi_e(\pi_v^{-1}(i))\rbrace$. Therefore, the other vertices in edges that contain $(i, v)$ will be contained within the set $\lbrace (\pi_u(\pi^{-1}_v(i)) , u) : \exists e \ni v, v \in e \rbrace$.

By repeating this argument, this means that the connected component containing $(i, v)$ will be contained in $\lbrace (\pi_u(\pi^{-1}_v(i)) , u) : v \in V(H)$. So this component contains exactly one copy of $H$ if the edge $\lbrace (\pi_v(\pi_u^{-1}(i)), v) : v \in e \rbrace$ is present for every $e \in H$, and no copies otherwise. The first will happen iff, for every $e \in E$, the $\pi_e(\pi_v^{-1}(i))$ bit of $x_e\rho_e\pi_e$ is 0, so if the $\pi_v^{-1}(i)$ bit of $x_e\rho_e$ is 0. As $\pi_v^{-1} \in \lbrack T'\rbrack$ and $\rho_e$ is the identity on $\lbrack T'\rbrack$, this happens iff the $\pi_v^{-1}(i)$ bit of $x_e$ is 0.

So, as any copy of $H$ must contain at least one vertex of degree at least 2, the number of copies of $H$ in $G$ is equal to the number of indices $i\in \lbrack T'\rbrack$ such that $\forall e, (x_e)_i = 0$. Recall that in our hard instance of $\mathtt{PromiseCounting}$ the strings $(x_e)_{e \in E}$ are uniform on strings such that: \[
\bigoplus_{e \in E} x_e^{1:\varepsilon T} = \tau
\]
where $\tau$ is uniformly distributed on $\lbrace 0^{\varepsilon T}, 1^{\varepsilon T}\rbrace$.

Now, for any $i$ such that $\bigoplus_{e \in E} (x_e)_i = 1$, there is at least one $e \in E$ such that $(x_e)_i = 1$, while if $\bigoplus_{e \in E} (x_e)_i = 0$, it is the case that $\forall e \in E, (x_e)_i = 0$ with probability $2^{1 - |E|}$. For $\varepsilon T < i \le T$, $\bigoplus_{e \in E} (x_e)_i$ is equally likely to be either, while for $i \le \varepsilon T$, $\bigoplus_{e \in E} (x_e)_i = \tau_i$.

Therefore, the number of copies of $H$ in $G$ is distributed as: 
\begin{align*}
&\text{Bi}((1 - \varepsilon)T', 2^{-|E|}) & \mbox{Conditioned on $\tau = 1^{\varepsilon T'}$.} \\
&\text{Bi}((1 - \varepsilon)T', 2^{-|E|}) + \text{Bi}(\varepsilon T', 2^{1 - |E|}) & \mbox{Conditioned on $\tau = 0^{\varepsilon T'}$.}
\end{align*}
As $\varepsilon$ is an integer multiple of $T$, the unique median of $\text{Bi}((1 - \varepsilon)T', 2^{-|E|})$ is: \[
(1 - \varepsilon)T
\]
Therefore, the probability that $\text{Bi}((1 - \varepsilon)T', 2^{-|E|}) \le (1 - \varepsilon)T$ is at least $1/2$.  Then, as $\var(\text{Bi}((1 - \varepsilon)T', 2^{-|E|}) + \text{Bi}(\varepsilon T', 2^{1 - |E|})) < (1 + \varepsilon)T'2^{-|E|} = (1 + \varepsilon)T$, by Chebyshev: \[
\Pb{\text{Bi}((1 - \varepsilon)T', 2^{-|E|}) + \text{Bi}(\varepsilon T', 2^{1 - |E|}) \le T} \le \frac{(1 + \varepsilon)^2}{4\varepsilon^2 T} \le \frac{1}{100}
\]
Therefore, when $\tau = 0^{\varepsilon T}$, the protocol correctly guesses it with probability at least $99/200$,and when $\tau = 1^{\varepsilon T}$, the protocol correctly guesses it with probability at least $(99/100)^2 \ge 98/100$. So the success probability of this protocol is at least: \[
\frac{1}{2} \cdot \frac{99}{200} + \frac{1}{2} \cdot \frac{98}{100} = 0.7375
\]

Then, by Corollary \ref{cor:lb-promise}, this implies that: \[
S(m) = \Omega\left(\frac{m}{(\varepsilon T)^{1/\mu_2}}\right)
\]
\end{proof}

\begin{lemma}
\[
\forall m \ge O(T), \forall e^* \in E, S(m) = \Omega\left(\frac{m}{(\varepsilon^2 T)^{1/\mu_1}}\right)
\]
where $\mu_1 = MVC_1(H \setminus e^*)$. 
\end{lemma}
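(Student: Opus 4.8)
The plan is to reduce from $\nopromise(H \setminus e^*, n, T', \eps')$, mirroring the previous lemma's reduction to $\promise$ but with the referee now playing the role of the deleted edge $e^*$. I would take $n = \Theta(m/|E|)$ (so the graph built below has at most $m$ edges), $T' = \Theta(2^{|E|} T)$, and $\eps' = \Theta(\eps)$, and first dispose of the extreme cases exactly as in the previous lemma: assume WLOG $C/\sqrt{T} \le \eps \le 1/10$ for a suitable constant $C$ (when $\eps < C/\sqrt{T}$ or $\eps > 1/10$ the claimed bound reduces to the constant-$\eps$ statement, which already gives $\Omega(m/T^{1/\mu_1})$), and assume $H \setminus e^*$ is connected with more than one edge so that Corollary~\ref{cor:nopromisegame} applies (the degenerate $e^*$ being handled separately, or subsumed by the other reduction's bound since then $\mu_1$ is small). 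Given an instance of $\nopromise(H \setminus e^*, n, T', \eps')$ from the hard distribution, each player $e \neq e^*$ constructs hyperedges $R_e$ exactly as in the previous reduction --- for each $i \in L_e$ with $(\wt{x}_e\rho_e\pi_e)_i = 0$, add $\{(\pi_v(\pi_e^{-1}(i)), v) : v \in e\}$ --- runs $\mathcal{A}$ on $R_e$ with a commonly fixed random seed, and forwards the resulting state, of size at most $S(m)$, to the referee. The referee, who holds $\Pi$ and $\beta := \tau \oplus \bigoplus_{e \neq e^*} x_e^{1:T'}$, builds $R_{e^*} = \{\,\{(\pi_v(j), v) : v \in e^*\} : j \in [T'],\ \beta_j = 0\,\}$, runs $\mathcal{A}$ on it, composes all the received states, and guesses $\tau = 0^{T'}$ if the resulting graph $G$ is reported to have $\ge T$ copies of $H$ and $\tau = 1^{T'}$ if $\le (1-\eps)T$.

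The key step is the count analysis. By the connected-component bookkeeping already carried out in the previous reduction, the number of copies of $H$ in $G$ equals the number of slots $j \in [T']$ for which every player $e \neq e^*$ placed its edge at slot $j$ (equivalently, the pre-permutation bit $(\wt{x}_e)_j$ is $0$) \emph{and} $\beta_j = 0$; all other edges of $G$ lie in components containing no copy of $H$. Conditioned on $\tau$, these events for distinct $j$ depend on disjoint blocks of independent bits, so the count is a sum of independent Bernoullis and concentrates, with variance at most its mean. The central computation is the mean of a single term: with $|E'| = |E|-1$ the number of surviving edges, and recalling that each bit of $\wt{x}_e$ is flipped with probability $\tfrac12 - (\eps')^{1/|E'|}/2$, I expect a short calculation to give
\[
\Pr\!\left[\forall e \neq e^*:\ (\wt{x}_e)_j = 0,\ \text{and}\ \bigoplus_{e \neq e^*}(x_e)_j = 0\right] = 2^{-|E|}\Bigl(1 + \bigl((\eps')^{1/|E'|}\bigr)^{|E'|}\Bigr) = 2^{-|E|}(1+\eps'),
\]
and likewise $2^{-|E|}(1-\eps')$ when the parity is $1$ --- the exponents collapsing to exactly $\eps'$ is the crux. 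Since $\beta_j = 0$ iff $\bigoplus_{e\ne e^*}(x_e)_j = \tau_j$, this yields a mean of $T' 2^{-|E|}(1+\eps')$ copies when $\tau = 0^{T'}$ and $T' 2^{-|E|}(1-\eps')$ when $\tau = 1^{T'}$. I would then choose $T' 2^{-|E|} = \Theta(T)$ so that the former exceeds $T$ by $\Omega(\eps T)$ and the latter falls below $(1-\eps)T$ by $\Omega(\eps T)$; since the count has mean and variance $\Theta(T)$ and $\eps^2 T = \Omega(1)$, Chebyshev places it on the correct side of $[(1-\eps)T, T]$ with probability $\ge 0.99$ in each case.

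Combining this with the $99/100$ correctness of $\mathcal{A}$, the referee recovers $\tau$ with probability bounded away from $1/2$ (say $\ge 0.98$), using messages of at most $S(m)$ bits. Corollary~\ref{cor:nopromisegame} applied to $\nopromise(H\setminus e^*, n, T', \eps')$ with $\mu = MVC_1(H \setminus e^*) = \mu_1$ then forces
\[
S(m) = \Omega\!\left(\frac{n}{\bigl((\eps')^2 T'\bigr)^{1/\mu_1}}\right) = \Omega\!\left(\frac{m}{(\eps^2 T)^{1/\mu_1}}\right),
\]
using $n = \Theta(m)$, $\eps' = \Theta(\eps)$, $T' = \Theta(T)$, and taking $n$ large enough (possible since $m \ge O(T)$) that the hypothesis $T' \le n/10$ of Theorem~\ref{thm:nopromisegame} holds.

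The hardest part will be getting the referee's side of the construction exactly right: the referee sees none of the $x_e$, only the aggregate $\tau \oplus \bigoplus_{e\ne e^*} x_e^{1:T'}$, and must use precisely this to place the $e^*$-edges so that the copy count is simultaneously (a) a function of what the players actually observe, (b) concentrated, and (c) multiplicatively $\Theta(\eps)$-separated between the two values of $\tau$. Property (c) is what pins down the noise rate: $(\eps')^{1/|E'|}$ must be raised to the power $|E'|$ --- the number of factors in $\prod_{e \neq e^*}\Pr[(\wt{x}_e)_j = 0]$ --- and return $\eps'$, so that the even- and odd-parity sums differ by exactly $\pm \eps'$; this is precisely why this reduction loses a factor $\eps^2$ rather than the $\eps$ lost by the promise-based one. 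A further, minor point is handling those $e^*$ for which $H \setminus e^*$ does not satisfy the hypotheses of Corollary~\ref{cor:nopromisegame}.
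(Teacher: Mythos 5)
Your proposal matches the paper's proof in all essentials: the same reduction to $\nopromise(H\setminus e^*, n, T', \eps')$ with the referee simulating the deleted edge $e^*$ from its input $\tau \oplus \bigoplus_{e\ne e^*} x_e$, the same per-slot probability computation giving binomial counts with means proportional to $(1\pm\eps')2^{-|E|}T'$ (the noise rate $(\eps')^{1/|E'|}$ collapsing to $\eps'$ exactly as in the paper), and the same invocation of Corollary~\ref{cor:nopromisegame}. The only cosmetic difference is that you separate both means from the thresholds by $\Omega(\eps T)$ by tuning $T'$ and $\eps'$ so Chebyshev works on both sides, whereas the paper fixes $T' = 2^{|E|}T$, $\eps'=\eps$ and uses a binomial-median argument for the $\tau = 1^{T'}$ side, settling for success probability $\approx 0.74$; both are valid.
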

\begin{proof}
As in the previous lemma, we will assume that $T \ge 100$, $10000/\sqrt{T} \le \varepsilon \le 1/10$, and $\varepsilon$ is an integer multiple of $1/T$, at the cost of at most a constant factor in our bound. 

Let $\mathcal{A}$ be a composable distinguishing algorithm. We will use it to devise a $S(m)$-bit protocol for $\mathtt{Counting(H \setminus e^*, n, T', \varepsilon)}$, where $T' = 2^{|E|}T$ and the state of the game is distributed according to our ``hard instance'' from Theorem \ref{thm:nopromisegame}. Let $E' = E \setminus e^*$.
\begin{itemize}
\item Let $B = \lbrack N\rbrack \times V(H)$
\item Each player $e$, on seeing the input:
\begin{itemize}
\item $\wt{x_e}\rho_e\pi_e \in \lbrace 0, 1\rbrace^{L_e}$
\item $(\pi_v(\pi_e^{-1}(i)))_{v \in e, i \in L_e}$
\end{itemize}
constructs a set of edges $R_e$ as follows: For each $i \in L_e$ such that $(\wt{x}_e\rho_e\pi_e)_i = 0$, add the hyperedge $\lbrace (\pi_v(\pi_e^{-1}(i))), v) : v \in e \rbrace$.
\item Each player runs $\mathcal{A}$ with $R_e$ as input, and then sends the state of $\mathcal{A}$ to the referee.
\item The referee, on seeing the input:
\begin{itemize}
\item $\Pi = (\Pi_E, \Pi_V)$
\item $\tau \oplus \bigoplus_{e \in E}x_e^{1:T}$
\end{itemize}
sets $\wt{x}_{e^*} = \tau \oplus \bigoplus_{e \in E}x_e^{1:T}$, chooses $\rho_{e^*}, \pi_{e^*}$ arbitrarily, and constructs $R_{e^*}$ by adding \\$\lbrace (\pi_v(\pi_{e^*}^{-1}(i))), v) : v \in e^* \rbrace$ for each $i \in \lbrack T\rbrack$ such that $(\wt{x}_e\rho_e\pi_e)_i = 1$.
\item The referee runs $\mathcal{A}$ with $R_{e^*}$ as input, and then composes the state of $\mathcal{A}$ with the received states, and reads off the output of $\mathcal{A}$.
\item If the algorithm reports that $G$ has no more than $(1 - \varepsilon T)$ triangles, the referee decides that $\tau = 1^{T}$, and otherwise decides that $\tau = 0^{T}$.
\end{itemize}
As in the previous lemma, let $G$ be $(B, R = \bigcup_{e \in E} R_e)$. By the same argument, the number of copies of $H$ in $G$ will be precisely the number of indices $i \in \lbrack T\rbrack$ such that $\forall e \in E, (\wt{x}_e)_i = 0$.

To analyze this, first recall that $\wt{x_e}$ was generated from $x_e$ by flipping every bit of $x_e$ independently with probability $1/2 - \varepsilon^{1/|E|}/2$, and so for each $e \in E'$, we can write $\wt{x}_e = x_e \oplus y_e$, where the $y_e$ are independent and are generated by setting each co-ordinate of $y_e$ independently to $1$ with probability $1/2 - \varepsilon^{1/|E|}/2$ and $0$ otherwise. Recall also that the strings $(x_e)_{e \in E}$ are uniformly distributed, and so conditioned on $\wt{x}_{e^*}$, they are are distributed uniformly among strings that sum to $\wt{x}_{e^*}$. Therefore, if we condition on $(y_e)_{e \in E'}$ and $\tau$ the $(\wt{x}_e)_{e \in E'}$ are distributed uniformly among strings such that: \[
\bigoplus_{e \in E'} \wt{x}_e = \tau \oplus \wt{x}_{e^*} \oplus \bigoplus_{e \in E'} y_e 
\]
Using the fact that the probability of $\text{Bi}(n, p)$ being even is $1/2 + (1 - 2p)^n/2$ (see, e.g.\ the proof in \cite{stackex:16187}), for each $i \in \lbrack T \rbrack$ we have that $\bigoplus_{e \in E'} (\wt{x}_e)_i = (\tau \oplus \wt{x}_{e^*})_i$ with probability $1/2 + \varepsilon/2$. Therefore, when $(\wt{x}_{e^*})_i = 0$, the probability that $\forall e \in E, (\wt{x}_e)_i = 0$ is: 
\begin{align*}
&(1 + \varepsilon)2^{1-|E|} & \mbox{If $\tau_i = 0$}\\
&(1 - \varepsilon)2^{1-|E|} & \mbox{If $\tau_i = 1$}
\end{align*}
While when $(\wt{x}_{e^*})_i = 1$, the probability is 0 by definition. So, as $\wt{x}_{e^*}$ is uniformly distributed when only conditioned on $\tau$, we can write down the distribution on the number of copies of $H$ in $G$: 
\begin{align*}
&\text{Bi}(T', (1 +\varepsilon)2^{-|E|}) & \mbox{If $\tau = 0^{T'}$}\\
&\text{Bi}(T',(1 - \varepsilon)2^{-|E|}) & \mbox{If $\tau = 1^{T'}$}
\end{align*}
So by considering the median, the probability that $\text{Bi}(T', (1 -\varepsilon)2^{-|E|}) \le (1-\varepsilon)T$ is at least $1/2$, while by Chebyshev the probability that $\text{Bi}(T',(1 + \varepsilon)2^{-|E|}) \le  (1-\varepsilon)T$ is at most $1/100$, and so as in the previous lemma, the protocol succeeds with probability at least $0.7375$.

Therefore, by Corollary \ref{cor:nopromisegame}: \[
 S(m) = \Omega\left(\frac{m}{(\varepsilon^2 T)^{1/\mu_1}}\right)
\]
\end{proof}
\noindent
Theorem \ref{thm:graphsketching} then follows directly from the previous two lemmas.

To prove this gives tight bounds (for $\varepsilon$ constant) for all $2$-uniform hypergraphs (that is, all graphs), we will need the following lemma on graph covers:
\begin{lemma}
\label{lem:hypographcovers}
Let $G = (V, E)$ be a connected graph with $|E| > 1$. Then: \[
\max(MVC_2(G), \max_{e \in E} MVC_1(G \setminus e)) = MVC_1(G)
\]
is the standard fractional vertex cover of $G$.
\end{lemma}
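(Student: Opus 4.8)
The three quantities in play are the fractional vertex cover number $\tau(G)$ together with two of its modifications, so the plan is to pass to the LP-dual (fractional matching) side and argue combinatorially. I would first record the trivial direction: since $G$ is an ordinary graph it has no empty edges, so by the remark following Definition~\ref{def:MVC} we have $MVC_1(G)=\tau(G)$ and $MVC_1(G\setminus e)=\tau(G\setminus e)$ for every $e\in E$; deleting an edge only removes constraints, so $\tau(G\setminus e)\le\tau(G)$, and putting zero weight on every edge shows $MVC_{1/2}(G)\le\tau(G)$. (I write $MVC_{1/2}$ for the quantity the lemma calls $MVC_2$, since that is the exponent $\mu_2$ appearing in Theorem~\ref{thm:graphsketching} and the proof overview, and is what makes the statement both nontrivial and useful --- for $\lambda=2$ one already has $MVC_2(G)=\tau(G)$.) These observations give the ``$\le$'' direction, so only ``$\ge$'' remains.

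For ``$\ge$'' I would split on whether $G$ is \emph{edge-critical}, meaning $\tau(G\setminus e)<\tau(G)$ for every $e$. If it is not, some $e_0$ has $\tau(G\setminus e_0)=\tau(G)$ (equality, as ``$\le$'' always holds), whence $\max_e MVC_1(G\setminus e)\ge MVC_1(G\setminus e_0)=\tau(G)$ and we are done. So suppose $G$ is edge-critical; then it suffices to show $MVC_{1/2}(G)=\tau(G)$. By strong LP duality $\tau(G)=\nu^*(G)$, the fractional matching number, and taking the dual of the $MVC_{1/2}$ LP identifies $MVC_{1/2}(G)$ with the \emph{capped} fractional matching value $\max\{\sum_e y_e:\ 0\le y_e\le\tfrac12\ \forall e,\ \sum_{e\ni v}y_e\le 1\ \forall v\}$. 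So it is enough to exhibit a maximum fractional matching of $G$ whose every weight is at most $1/2$.

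To do that I would take a half-integral maximum fractional matching $y^*$ of $G$ --- one exists because the fractional matching polytope has half-integral vertices and the optimal face of a polytope contains a vertex of it --- so $y^*_e\in\{0,\tfrac12,1\}$ throughout. Suppose for contradiction that $y^*_e=1$ for some $e=uv$. Because $G$ is connected with $|E|>1$, $\{u,v\}$ is not a component on its own, so one endpoint, say $u$, has another incident edge $f$, and the degree constraint $\sum_{g\ni u}y^*_g\le 1$ forces $y^*_f=0$. Then $y^*$ (drop its zero $f$-coordinate) is a fractional matching of $G\setminus f$ of value $\nu^*(G)$, so $\tau(G\setminus f)=\nu^*(G\setminus f)=\nu^*(G)=\tau(G)$, contradicting edge-criticality. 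Hence $y^*$ has no weight-$1$ edge, so $y^*_e\le\tfrac12$ for all $e$ and $MVC_{1/2}(G)\ge\sum_e y^*_e=\tau(G)$. Together with the ``$\le$'' direction and the non-edge-critical case this yields the claimed equality.

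The step I expect to be the real obstacle is the edge-critical case, and within it the point that a weight-$1$ edge in a half-integral optimum is incompatible with edge-criticality: such an edge pins an incident edge (which must exist, by connectivity and $|E|>1$) to weight $0$, and that incident edge is then deletable without loss. Everything else is bookkeeping; the two fine points I would be careful about are that deleting an edge of a graph never creates an empty edge (so the remark after Definition~\ref{def:MVC} does apply to each $G\setminus e$), and that the degenerate single-edge graph --- where the conclusion genuinely fails, since $MVC_{1/2}=1/2<1=\tau$ --- is precisely what the hypotheses $|E|>1$ and connectivity rule out.
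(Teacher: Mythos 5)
Your proposal is correct, including the reading of the statement: the quantity that matters (and that the paper's own proof actually addresses) is $MVC_{1/2}(G)$, matching $\mu_2$ in Theorem~\ref{thm:graphsketching}, since $MVC_2(G)=\tau(G)$ would make the claim vacuous.

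Your route is close in spirit to the paper's but differs in how the two cases are delimited and how the harder case is closed. The paper also starts from a half-integral maximum fractional matching, but it invokes the finer structure of such an optimum --- weight-$1$ edges forming a matching $D$ plus weight-$\tfrac12$ edges forming an odd cycle $C$ --- and splits on whether $G=C$: if $G\neq C$, connectivity forces a zero-weight edge, whose deletion preserves $\nu^*=\tau$, so the $\max_e MVC_1(G\setminus e)$ term already equals $\tau(G)$; if $G$ is an odd cycle, it shows $MVC_{1/2}(G)=|E|/2=\tau(G)$ by a direct primal computation, summing the constraints $g(u)+g(v)+g(uv)\ge 1$ around the cycle. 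You instead split on edge-criticality and, in the critical case, pass to the LP dual of $MVC_{1/2}$ (the $\tfrac12$-capped fractional matching) and rule out weight-$1$ edges in a half-integral optimum: such an edge would pin an adjacent edge to weight $0$ (using connectivity and $|E|>1$), and that edge would then be deletable without decreasing $\nu^*$, contradicting criticality. The trade-off: your argument needs only plain half-integrality of the matching polytope, not the ``$D$ plus odd cycle'' structure theorem, and it never has to identify the critical graphs explicitly (they are in fact exactly the odd cycles, which is why the two case splits coincide in content); the paper's argument is more concrete, exhibiting the extremal case and its exact value $|E|/2$, at the cost of quoting the stronger structural fact. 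Your attention to the two fine points --- that edge deletion never creates empty hyperedges, so $MVC_1(G\setminus e)=\tau(G\setminus e)$, and that the single-edge graph is genuinely excluded --- is exactly right.
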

\begin{proof}
Consider the dual of the fractional vertex cover problem, the fractional maximum matching problem, where the aim is to find a function $f : E \rightarrow \lbrack 0, 1 \rbrack$ such that \[
\forall u \in V, \sum_{v \in N(u)} f(uv) \le 1
\]
and $\sum_{e \in E} f(e)$ is maximized. This is known (see, e.g.,~\cite{havet:lecturenotes}) to have a half-integral optimal solution, and therefore a solution: \[
f(e) = \begin{cases}
1 & \mbox{if $e \in D$}\\
1/2 & \mbox{if $e \in C$}
\end{cases}
\]
where $D$ is a (possibly empty) set of disjoint edges, and $C$ is either empty or an odd cycle disjoint from $D$. If $G \not= C$ then, as $G$ is connected, there is at least one edge $e$ such that $f(e) = 0$. Therefore, that edge can be deleted from $G$ without changing its maximum matching number and therefore without changing $MVC_1(G)$, and so $MVC_1(G) = \max_{e \in E} MVC_1(G \setminus e)$.

Otherwise, suppose $G$ is an odd cycle. Let $g : V \cup E \rightarrow \lbrack 0, \infty)$ be any function such that: \[
\sum_{v \in e} (g(v) + g(e)) \ge 1, \forall e \in E
\]
Then:
\begin{align*}
\sum_{v \in V} g(v) + \frac{1}{2}\sum_{e \in E} g(e) &\ge \sum_{v \in V} g(v) + \frac{1}{2}\sum_{uv \in E}( 1 - g(u) - g(v))\\
&= |E|/2
\end{align*}
So $MVC_\frac{1}{2}(G) = |E|/2$, which is also $MVC_1(G)$ (by considering a cover that puts weight $1/2$ on each vertex).
\end{proof}

\begin{corollary}
\label{cor:hypographlb}
Let $H = (V, E)$ be a connected graph with $|E| > 1$. Let $\varepsilon \in (0, 1\rbrack, T \in \mathbb{N}$. Let $\mathcal{A}$ be a graph streaming algorithm that can distinguish between graphs $G$ presented as a stream of edges with $T$ copies of $H$, graphs with $(1 - \varepsilon)T$ copies of $H$ with probability $99/100$, provided $G$ has no more than $m$ edges. Let $S(m)$ be the maximum space usage of $\mathcal{A}$ across all $m$-edge inputs.

Furthermore, let $\mathcal{A}$ have composable state. Then: \[
\forall m \ge O(T), S(m) = \Omega\left(\frac{m}{(\varepsilon T)^{1/\tau}}\right)
\]
where $\tau$ is the fractional vertex cover of $H$, and the constant factor may depend on $H$ but nothing else.
\end{corollary}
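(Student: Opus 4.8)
The plan is to obtain Corollary~\ref{cor:hypographlb} by feeding the cover identity of Lemma~\ref{lem:hypographcovers} into the two incomparable bounds of Theorem~\ref{thm:graphsketching}. Recall Theorem~\ref{thm:graphsketching} gives $S(m) = \Omega\big(\max(m/(\eps T)^{1/\mu_2},\, m/(\eps^2 T)^{1/\mu_1})\big)$ with $\mu_2 = MVC_{1/2}(H)$ and $\mu_1 = \max_{e \in E} MVC_1(H\setminus e)$, and Lemma~\ref{lem:hypographcovers} says $\max(\mu_2,\mu_1) = \tau$, the standard fractional vertex cover of $H$. So at least one of the two exponents is exactly $1/\tau$. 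If $\mu_2 = \tau$, the first term already gives $S(m) = \Omega(m/(\eps T)^{1/\tau})$. If instead $\mu_1 = \tau$, then since $\eps \le 1$ we have $\eps^2 T \le \eps T$ and hence $(\eps^2 T)^{1/\tau} \le (\eps T)^{1/\tau}$, so the second term $\Omega(m/(\eps^2 T)^{1/\tau})$ is itself a lower bound of $\Omega(m/(\eps T)^{1/\tau})$. Either way the claimed bound follows, and this already handles all $\eps \in (1/\sqrt T, 1]$, which is the range in which Theorem~\ref{thm:graphsketching} directly applies.

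To cover the remaining regime $\eps \le 1/\sqrt T$ I would use a padding reduction (the constant-factor normalizations $\eps T \in \mathbb{Z}$ etc.\ being handled exactly as at the start of the proof of the first lemma of Section~\ref{sec:reduction}). When $\eps T = \Theta(1)$ the statement is vacuous or reduces to detecting a single copy of $H$, which is subsumed by the same internal padding argument used there, so we may assume $\eps T \ge 2$. In general, given an $m''$-edge instance promised to contain either $0$ copies of $H$ or at least $k := T - \lfloor(1-\eps)T\rfloor = \lceil \eps T\rceil$ copies, prepend to the stream $\lfloor(1-\eps)T\rfloor$ vertex-disjoint copies of $H$ on fresh vertices and run $\mathcal{A}$ on the concatenation, merging states via composability (Definition~\ref{dfn:composable}). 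The combined graph has either $\lfloor(1-\eps)T\rfloor \le (1-\eps)T$ or at least $T$ copies of $H$, so $\mathcal{A}$ distinguishes the two cases while using at most $m'' + O(T) \le m$ edges, provided $m'' = m - O(T) = \Omega(m)$ (valid since $m \ge O(T)$). Applying Theorem~\ref{thm:graphsketching} to this reduced problem with parameters $(T_{\mathrm{new}}, \eps_{\mathrm{new}}) = (\lceil \eps T\rceil, 1)$ — where both terms of its conclusion coincide, with exponent $1/\max(\mu_2,\mu_1) = 1/\tau$ by Lemma~\ref{lem:hypographcovers} — yields $S(m) = \Omega(m''/\lceil \eps T\rceil^{1/\tau}) = \Omega(m/(\eps T)^{1/\tau})$, as $\lceil\eps T\rceil = \Theta(\eps T)$.

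The real content is entirely in Lemma~\ref{lem:hypographcovers} (already proved in the excerpt via half-integrality of the fractional matching polytope), so what remains here is bookkeeping: verifying that the padding step preserves both the $99/100$ success probability and the composability hypothesis, and checking the preconditions $\eps_{\mathrm{new}} \in (1/\sqrt{T_{\mathrm{new}}},1]$ of Theorem~\ref{thm:graphsketching}. No inequality beyond $\eps \le 1$ is needed for the main combination. I expect the only mildly delicate point to be the $\eps T = \Theta(1)$ boundary, which, as noted, collapses to single-copy detection and is therefore not a genuine obstacle.
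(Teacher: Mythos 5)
Your first paragraph is exactly the paper's (implicit) proof: Corollary~\ref{cor:hypographlb} is stated without further argument precisely because it is the combination of Theorem~\ref{thm:graphsketching} with Lemma~\ref{lem:hypographcovers}, using $\eps \le 1$ so that the $(\eps^2 T)^{1/\tau}$ term is only stronger when the maximum in the lemma is attained by $\max_e MVC_1(H\setminus e)$. Your second paragraph goes beyond the paper, which silently leaves the regime $\eps \le 1/\sqrt{T}$ (allowed by the corollary's statement but excluded by the theorem's hypothesis) unaddressed; your padding reduction is a reasonable patch, with the one caveat that the padded algorithm should be defined to keep $\mathcal{A}$'s state on the raw stream and fold in the (seed-determined) padding state only when producing the output --- prepending the padding to each partial stream would violate Definition~\ref{dfn:composable} --- and that the $\eps T = O(1)$ boundary needs the ``at least $T$ vs.\ at most $(1-\eps)T$'' promise phrasing used in Theorem~\ref{thm:graphsketching} to invoke its internal small-$T$ handling.
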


\section{Upper Bound}\label{sec:upperbound}

\newcommand{\expect}{{\mathbb{E}}}
\newcommand{\prob}{{\mathbb{P}}}
\newcommand{\Aut}{\text{Aut}}
\newenvironment{proofof}[1]{\noindent{\bf Proof of #1:}}{$\qed$\par}

Our main result is 
\begin{theorem}\label{thm:hypergraphs-ub}
  For every hypergraph $H=(V_H, E_H)$, $\e\in (0, 1)$ there exists a
  sketching algorithm that, for any hypergraph $G=(V_G, E_G)$ on $n$ vertices with degrees
  bounded by $d$, approximates the number of copies of
  $H$ in $G$ to within a $1+\e$ multiplicative factor with probability at least
  $99/100$ using space
  $s\leq C \cdot \e^{-2/\tau} \cdot m T^{-1/\tau}$, where $T$ is the
  number of copies of $H$ in $G$ and $\tau$ is the fractional
  vertex cover of $H$ and $C$ is a constant that depends on $H$.
\end{theorem}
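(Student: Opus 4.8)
The plan is to reduce the unlabeled count to a \emph{labeled} one and then run the weighted vertex-sampling sketch from the overview. For the reduction, draw a uniformly random labeling $\chi : V_G \to V_H$ and call a copy of $H$ in $G$ \emph{good} if its vertices receive the labels prescribed by an embedding of $H$; since $\abs{V_H}$ is an absolute constant, a fixed copy is good with some constant probability $q=q(H)>0$, so the number $T_\chi$ of good copies has $\E[\chi]{T_\chi}=qT$. Copies on disjoint vertex sets become good independently and each copy meets at most $O(d^{\abs{V_H}})$ others, so a second-moment computation gives $\var_\chi(T_\chi)=O(d^{\abs{V_H}}qT)$; hence, once $T\gtrsim\e^{-2}$, averaging a constant number of independent labelings and dividing by $q$ reduces Theorem~\ref{thm:hypergraphs-ub} to producing a $(1+\e/3)$-approximation to $T_\chi$ for the labeled problem in the claimed space. (When $T\lesssim\e^{-2}$ the target space exceeds $m$ and we simply sketch all of $G$; and since the bound refers to $T$ we treat the sketch as tuned to $T$ — a constant-factor estimate of the count suffices, matching the decision regime of the lower bounds — suppressing the standard bookkeeping that turns this into a fully oblivious estimator.)

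\paragraph{The labeled sketch.} Fix an optimal fractional vertex cover $f:V_H\to[0,1]$ of $H$, so $\sum_{u\in V_H}f(u)=\tau$ and $\sum_{v\in e}f(v)\ge 1$ for every $e\in E_H$, and set $p=\min\{1,(C_0\e^{-2}/T)^{1/\tau}\}$ for a large constant $C_0=C_0(H)$. Using shared randomness, place each $v\in V_G$ into a set $S$ independently with probability $p^{f(\chi(v))}$, and retain an edge $e\in E_G$ iff all its endpoints lie in $S$. Given $S$ this retention rule is a fixed $0/1$ mask applied to the edge-indicator vector, hence a linear function of the stream, and the retained edges can be recovered from a sparse-recovery sketch supporting $O(mp)$ nonzeros; recovering them lets us count the retained good copies, call the count $\wh{T}_\chi$. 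Two elementary facts fix the parameters. A fixed good copy is retained precisely when all $\abs{V_H}$ of its vertices are sampled, with probability $\prod_{u\in V_H}p^{f(u)}=p^{\tau}$, so $\E{\wh{T}_\chi}=p^{\tau}T_\chi=\Theta(\e^{-2})$ (using $T_\chi=\Theta(T)$). A fixed edge $e$ is retained with probability $p^{\sum_{v\in e}f(\chi(v))}\le p$, since $f$ covers the $H$-edge that $e$ represents, so the expected number of retained edges — hence the sketch size — is $O(mp)=O(\e^{-2/\tau}\,m\,T^{-1/\tau})$ words.

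\paragraph{Variance, and the main obstacle.} It remains to prove $\var(\wh{T}_\chi)=O(\E{\wh{T}_\chi})$: then, taking $C_0$ large, Chebyshev makes $\wh{T}_\chi/p^{\tau}$ a $(1+\e/3)$-relative estimate of $T_\chi$ with probability $99/100$. Writing $\wh{T}_\chi=\sum_A\mathbbm{1}[A\text{ retained}]$ over good copies $A$, disjoint copies are retained independently, so $\var(\wh{T}_\chi)=\E{\wh{T}_\chi}+\sum_{A}\sum_{B:\,V(A)\cap V(B)\neq\emptyset}\mathrm{Cov}(\mathbbm{1}_A,\mathbbm{1}_B)$; for an overlapping pair, with $S:=V(A)\cap V(B)$ and $w(W):=\sum_{v\in W}f(\chi(v))$, we have $\mathrm{Cov}(\mathbbm{1}_A,\mathbbm{1}_B)\le\Pb{A,B\text{ both retained}}=p^{w(V(A)\cup V(B))}=p^{2\tau-w(S)}$. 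Grouping by $S$, using that a copy $B$ with $V(A)\cap V(B)=S$ is determined up to $O(d^{\abs{V(A)\setminus S}})$ choices (extend outward from $S$ edge by edge, each step costing a factor $d$) and that $\sum_A 1=T_\chi=p^{-\tau}\E{\wh{T}_\chi}$, this gives
\[
  \var(\wh{T}_\chi)\;\le\;\E{\wh{T}_\chi}\Bigl(1+\sum_{\emptyset\neq U\subseteq V_H}O\bigl(d^{\abs{U}}\,p^{\,f(U)}\bigr)\Bigr),\qquad f(U):=\sum_{u\in U}f(u),
\]
with $U$ in the role of $V(A)\setminus S$. Controlling this sum is the crux. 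If $\phi:=\min_{u\in V_H}f(u)$, each summand is at most $(d\,p^{\phi})^{\abs{U}}$, so the sum is $O(1)$ as soon as $d=O(p^{-\phi})$; for a general $H$ one can have $\phi=0$ (a vertex all of whose edges are covered elsewhere), forcing $d=O(1)$, which is absorbed into $C$ — this is the regime of the stated theorem. When $f$ places weight $\ge\tfrac12$ on every vertex, $\phi\ge\tfrac12$ gives $d=O(p^{-1/2})=O(T^{1/(2\tau)})$, the degree-relaxed variant mentioned after the theorem (and for triangles this is exactly the $O(T^{1/3})$ threshold above which the $\Omega(m\sqrt{\Delta_V}/T)$ bound of~\cite{KP17} would forbid $O(m/T^{2/3})$ space). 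Finally, combining the Chebyshev guarantee for $\wh{T}_\chi$ with the constant-fold labeling average of the first paragraph, and a union bound over the success of the sparse-recovery sketches, yields a $(1\pm\e)$ estimate of $T$ with probability $99/100$ in the stated space, completing the proof.
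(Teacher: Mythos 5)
Your overall route is the same as the paper's: sample each vertex with probability $p^{f(\chi(v))}$ for a uniformly random labeling $\chi$, keep edges all of whose endpoints survive, and control the second moment by charging each overlapping copy to an extension of size $|U|$ at cost $O(d^{|U|}p^{f(U)})$ --- this is precisely the paper's Algorithm~1 together with its overlap-counting claim and the two variance regimes (constant $d$ in general, $d = O(p^{-1/2})$ when the cover has full support), including the $s>m$ fallback. Your two-stage analysis (first concentrate $T_\chi$ over $\chi$, then estimate $T_\chi$ conditioned on $\chi$) differs only cosmetically from the paper's joint expectation/variance computation over the labeling and the vertex samples.

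There is, however, one genuine flaw, in the step establishing the space bound. You retain an edge $e \in E_G$ as soon as all its endpoints are sampled, and assert the retention probability is $p^{\sum_{v\in e}f(\chi(v))}\le p$ ``since $f$ covers the $H$-edge that $e$ represents.'' But in your reduction $\chi$ is a uniformly random labeling of an arbitrary $G$: a typical edge of $G$ does not represent any edge of $H$, and nothing forces $\sum_{v\in e}f(\chi(v))\ge 1$. Whenever the optimal cover has zero-weight vertices this breaks the bound outright: for $H=K_{1,3}$ (weight $1$ on the center, $0$ on the leaves, $\tau=1$), a constant fraction of the edges of $G$ receive two leaf labels, their endpoints are sampled with probability $p^{0}=1$, and so $\Theta(m)$ edges are retained in expectation rather than $O(pm)$; the same happens for paths and for any $H$ whose optimal cover is not fully supported. (Your justification is valid only in the genuinely labeled join-size setting, where every $G$-edge carries an $H$-edge label.) The fix is exactly the filter the paper builds into Algorithm~1 (``keep colorful edges only''): retain $e$ only if, in addition, $\{\chi(v): v\in e\}$ is an edge of $H$ --- equivalently, only if $\sum_{v\in e}f(\chi(v))\ge 1$. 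Edges failing this test cannot lie in any good copy, so $\wh{T}_\chi$ is unchanged, and with this one-line modification the rest of your argument --- unbiasedness, the covariance sum $\sum_{U}O(d^{|U|}p^{f(U)})$, and the Chebyshev step --- goes through as in the paper's proof.
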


\noindent
For graphs we get  a more powerful result, which allows the graph $G$ to have higher degrees:

\begin{theorem}\label{thm:graphs-ub}
  For every graph $H=(V_H, E_H)$ that admits a minimum vertex cover
  that assigns nonzero weight to every vertex, for every
  $\e\in (0, 1)$ there exists a sketching algorithm that, for any graph $G=(V_G, E_G)$ on $n$
  vertices with degrees bounded by
  $d\leq C' \e^{1/\tau} T^{1/(2\tau)}$, approximates
  the number of copies of $H$ in $G$
  to within a $1+\e$ multiplicative factor with probability $99/100$ using
  space $C \cdot \e^{-2/\tau} \cdot m T^{-1/\tau}$, where $T$ is the
  number of copies of $H$ in $G$ and $\tau$ is the fractional
  vertex cover of $H$, and $C, C' > 0$ are constants that depend only
  on $H$.
\end{theorem}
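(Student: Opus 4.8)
\textbf{Proof plan for Theorem~\ref{thm:graphs-ub}.}
The plan is to run a non-adaptive vertex-sampling algorithm and turn it into a linear sketch via $L_0$-samplers, as discussed in the introduction. First observe that, by the half-integral fractional-matching structure used in the proof of Lemma~\ref{lem:hypographcovers} together with complementary slackness, a connected graph $H$ admits a minimum fractional vertex cover of full support if and only if $\tau=|V_H|/2$, in which case the uniform assignment $f(u)=\tfrac12$ for all $u\in V_H$ is a minimum cover; fix this $f$. Draw a labelling $\chi:V_G\to V_H$ and, for a parameter $p\in(0,1)$, sample each $v\in V_G$ independently, keeping it with probability $p^{f(\chi(v))}=p^{1/2}$; keep an edge of $G$ iff both its endpoints are kept, and let $S$ be the set of kept vertices. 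Define $X$ to be the number of injective homomorphisms $\phi:H\to G[S]$ with $\chi\circ\phi=\mathrm{id}_{V_H}$, and output $\hat T:=X\cdot|V_H|^{|V_H|}/(|\Aut(H)|\,p^{\tau})$. Realise $\chi$ and the sampling coins by $2|V_H|$-wise independent hash families (this suffices for every moment computed below and costs only $O(\log n)$ space), and recover $G[S]$ from the $L_0$-sampler sketches; then $X$ is computed by enumerating the $O_H(1)$-size subgraphs. Assume (by the standard parallel-guessing reduction, at an $O(\log n)$ factor, or by assuming a constant-factor estimate of $T$) that $p$ is chosen with $p^{\tau}=\Theta(\e^{-2}/T)$, i.e.\ $p=\Theta(\e^{-2/\tau}T^{-1/\tau})$; this forces $T=\Omega(\e^{-2})$, and when that fails the trivial algorithm that stores all $m\le O(\e^{-2/\tau}mT^{-1/\tau})$ edges already suffices.

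For correctness, each of the $T|\Aut(H)|$ injective homomorphisms $\phi:H\to G$ has $\chi\circ\phi=\mathrm{id}$ with probability $|V_H|^{-|V_H|}$ and, conditioned on that, has $\phi(V_H)\subseteq S$ with probability $p^{\sum_u f(u)}=p^{\tau}$, so $\E{X}=T|\Aut(H)||V_H|^{-|V_H|}p^{\tau}$ and hence $\E{\hat T}=T$, with $\E{X}=\Theta(\e^{-2})$ by the choice of $p$. For the space bound, an edge $uv$ is kept with probability $\E[\chi]{p^{f(\chi(u))}}\E[\chi]{p^{f(\chi(v))}}=(p^{1/2})^2=p$ (the two factors are independent given $\chi$, and $\chi(u),\chi(v)$ are independent), so the expected number of kept edges is $mp=O(\e^{-2/\tau}mT^{-1/\tau})$, which holds up to a constant factor with probability $99/100$ by Markov; since whether $uv$ is kept depends only on $\chi$ and on the sampling coins of $u$ and $v$, the sampling is non-adaptive and the bound carries over to the $L_0$-sampler implementation.

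The crux is bounding $\var(X)$. Write $\var(X)=\sum_{\phi,\phi'}\mathrm{Cov}(\mathbbm 1_\phi,\mathbbm 1_{\phi'})$ over ordered pairs of injective homomorphisms, $\mathbbm 1_\phi$ being the event that $\phi$ is counted. If $\phi(V_H)\cap\phi'(V_H)=\emptyset$ the two events depend on disjoint coordinates of $\chi$ and on disjoint sampling coins, so the covariance is $0$. If $\phi,\phi'$ share exactly $k\ge1$ vertices, there are $O_H(Td^{|V_H|-k})$ such ordered pairs — fix $\phi$ in $T|\Aut(H)|$ ways and the shared vertex set $A\subseteq V_H$ in $O_H(1)$ ways, and, since $H$ is connected and $A\ne\emptyset$, extend $\phi'|_A$ to a full copy of $H$ by adding the remaining $|V_H|-k$ vertices one at a time, each adjacent to an already-placed vertex and hence with at most $d$ choices — while the covariance of such a pair is at most $|V_H|^{-(2|V_H|-k)}\cdot p^{2\tau-k/2}$, the product of the probability that $\chi$ labels all $2|V_H|-k$ distinct vertices consistently and the probability that all of them land in $S$ (the sampling exponent is $\sum_u f(u)+\sum_{u\notin A}f(u)=2\tau-k/2$ since $f\equiv\tfrac12$; non-consistent overlaps only contribute negatively). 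Hence
\[
  \var(X)\ \le\ C_H\,T\sum_{k=1}^{|V_H|}d^{|V_H|-k}\,p^{2\tau-k/2}\ =\ C_H\,T\,p^{\tau}\sum_{k=1}^{|V_H|}\bigl(d\,p^{1/2}\bigr)^{|V_H|-k},
\]
using $2\tau=|V_H|$. Since $p^{-1/2}=\Theta(\e^{1/\tau}T^{1/(2\tau)})$, the hypothesis $d\le C'\e^{1/\tau}T^{1/(2\tau)}$ says exactly that $d\,p^{1/2}\le O(C')$, so taking $C'$ a small enough constant (depending on $H$) makes the geometric sum $O_H(1)$ and gives $\var(X)=O_H(T p^{\tau})=O_H(\E{X})$.

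To finish, since $\E{X}=\Theta(\e^{-2})$, enlarging the constant hidden in $p$ so that $\E{X}$ exceeds a large enough multiple of $\e^{-2}$ forces $\var(X)\le\tfrac1{100}\e^2\E{X}^2$, whence Chebyshev's inequality gives $|\hat T-T|\le\e T$ with probability $99/100$. I expect the variance bound to be the only real obstacle: the full-support hypothesis is used precisely to force $\tau=|V_H|/2$ exactly, so that every shared vertex of two overlapping copies contributes weight exactly $\tfrac12$ to the joint sampling probability, and the resulting factor $p^{1/2}$ saved per missing vertex exactly cancels the factor $d$ lost per missing vertex in the count of overlapping pairs — which is why the degree threshold $d=\Theta(p^{-1/2})=\Theta(\e^{1/\tau}T^{1/(2\tau)})$ is the natural one for this argument. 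The remaining details (the $O(\log n)$ overhead from not knowing $T$, the $\polylog n$ overhead of the $L_0$-samplers, and boosting the constant success probability by $O(1)$ independent repetitions and a median) are routine.
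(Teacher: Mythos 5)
Your proposal is correct and follows essentially the same route as the paper's proof: vertex sampling at rate $p^{1/2}$ per vertex, an unbiased estimator, and a second-moment bound in which two copies sharing all but $r$ vertices are counted as $O_H(d^{r})$ (the paper's Claim~\ref{cl:st-counting}) against an extra sampling factor $p^{r/2}$ (the paper's Lemma~\ref{lm:var}, second part), so the hypothesis $d \le C'\eps^{1/\tau}T^{1/(2\tau)} = \Theta(p^{-1/2})$ makes the geometric sum over overlaps a constant, exactly as in the paper. The only (correct) deviation is that you derive $\tau=|V_H|/2$ and optimality of the uniform $1/2$-cover via complementary slackness and then sample every vertex at rate exactly $p^{1/2}$, whereas the paper keeps a general full-support cover and uses half-integrality (Claim~\ref{cl:half-integrality}) to conclude each vertex has weight at least $1/2$ — yielding the same $p^{1/2}$ gain per missing vertex either way.
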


\noindent
This result requires the minimum vertex cover to assign nonzero weight
to every vertex; this happens for cycles but not stars.

Consider the fractional vertex cover of $H$
\begin{equation}\label{eq:vertex-cover}
\begin{split}
\text{min}&\sum_{a\in V_H} x_a\\
\text{s.t.}&\sum_{a\in e} x_a \geq 1\text{~for all~}e\in E_H,
\end{split}
\end{equation}
let $x^*\in \R^{V_H}$ denote an optimal solution and let $\tau$ denote its value. 

Fix a mapping $\chi: V_G\to V_H$ (see Algorithm~\ref{alg:sample}, line~3).  For a subset $S\subseteq V_G$ we write $\chi(S)\sim H$ if the subgraph induced by $S$ equipped with labels $\chi(S)$ contains a copy of $H$, i.e. for every $a\subseteq S$  one has that if $\chi(a)\in E_H$, then $a\in E_G$. Note that, if $A(H)$ is the number of automorphisms of $H$, the probability that a randomly chosen $\chi$ will give $\chi(S) \sim H$ is $A(H) / k^k$.

\begin{algorithm}
	\caption{Subgraph counting by vertex sampling}\label{alg:sample}
	\begin{algorithmic}[1]
		\Procedure{Sample}{$H, p$}  \Comment{Input: hypergraph $H$, sampling probability $p$}
		\State Compute minimum vertex cover $x^*$ in $H$
		\State $\chi\sim UNIF([k]^{V_G})$ \Comment{Random mapping of $V_G$ to $V_H=[k]$}
		\For {$u\in V_G$}
		\State $a\gets \chi(u)$
		\State $X_u\gets $ independent Bernoulli r.v. with mean $p^{x^*_a}$
		\EndFor
		\State $E'\gets \{e\in E_G: \chi(e)=|e|\text{~and~}\prod_{u\in e} X_u=1\}$\Comment{Keep colorful edges only}
		\State $Z\gets k^k\cdot p^{-\tau}\cdot \sum_{\substack{S\subseteq V_G: \chi(S)\sim H}} \prod_{u\in S} X_u$\Comment{Knowing $E'$ and $\chi$ suffices to compute $Z$ }
		\State \textbf{return} $Z/A(H)$ 
		\EndProcedure
	\end{algorithmic}
\end{algorithm}

\begin{lemma}\label{lm:space}
For every $G=(V_G, E_G)$, every $H=(V_H, E_H)$ with $|V_H|=k$, if $E'$ is the set of edges sampled by Algorithm~\ref{alg:sample} (line~8), then $\expect[|E'|]\leq  p |E_G|$.
\end{lemma}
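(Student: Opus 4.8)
The plan is to compute $\expect[|E'|]$ directly by linearity of expectation over the edges of $G$, and then to bound the per-edge contribution using the defining inequality of the fractional vertex cover. For each edge $e \in E_G$, let $\mathbbm{1}[e \in E']$ be the indicator that $e$ survives. By the definition on line~8 of Algorithm~\ref{alg:sample}, $e \in E'$ requires both that $\chi(e) = |e|$ (i.e.\ $\chi$ is injective on $e$ and hence $\chi(e) \in E_H$, since $H$ has the edge on those labels only if $\chi$ maps $e$ to $|e|$ distinct vertices forming an edge — actually here ``$\chi(e) = |e|$'' just asks $|\chi(e)| = |e|$, colorfulness) and that $\prod_{u \in e} X_u = 1$. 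The first step is to observe that once we condition on the labeling $\chi$ being such that $e$ is colorful, the events $\{X_u = 1\}$ for $u \in e$ are independent, so
\[
  \Pb{\prod_{u\in e} X_u = 1 \;\middle|\; \chi} = \prod_{u \in e} p^{x^*_{\chi(u)}} = p^{\sum_{u \in e} x^*_{\chi(u)}}.
\]

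The second step is the key point: whenever $\chi$ is colorful on $e$, the multiset $\{\chi(u) : u \in e\}$ is exactly the vertex set of some hyperedge of $H$ (this is what colorful means in the labeled setting — the induced labels form an edge of $H$), so the fractional vertex cover constraint $\sum_{a \in \chi(e)} x^*_a \ge 1$ applies, giving $\sum_{u \in e} x^*_{\chi(u)} \ge 1$ and hence
\[
  \Pb{\prod_{u\in e} X_u = 1 \;\middle|\; \chi \text{ colorful on } e} \le p^{1} = p.
\]
Since $p \in (0,1)$, this bound also trivially dominates the contribution when $\chi$ is not colorful on $e$ (in which case $e \notin E'$ with probability $1$), so unconditionally $\Pb{e \in E'} \le p$. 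Summing over all $e \in E_G$ via linearity of expectation yields $\expect[|E'|] \le p |E_G|$.

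I do not expect any real obstacle here; the only thing to be careful about is the precise meaning of line~8's condition ``$\chi(e) = |e|$'' and the claim that a colorful edge of $G$ has its labels forming an edge of $H$ — this is immediate from the setup of the labeled counting problem described just before the algorithm (``for every $a \subseteq S$ one has that if $\chi(a) \in E_H$, then $a \in E_G$''), but more directly, $E'$ only retains edges $e$ with $\chi(e)$ colorful, and we only ever need the cover constraint for the label-set $\chi(e)$, which is a genuine edge-label-set of $H$ precisely because the copies of $H$ we count are the colorful ones. One should state this compatibility explicitly so the cover inequality is unambiguously applicable. The rest is a one-line linearity-of-expectation argument.
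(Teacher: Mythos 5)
Your proof is correct and is essentially the paper's own argument: condition on $\chi$, observe that a kept edge $e$ survives with probability $\prod_{u\in e}p^{x^*_{\chi(u)}}=p^{\sum_{u\in e}x^*_{\chi(u)}}\le p$ by the fractional cover constraint, and sum over $e\in E_G$ by linearity of expectation. The subtlety you flag about line~8 (colorfulness of $e$ versus $\chi(e)$ actually being an edge of $H$, which is what the cover inequality needs) is glossed over by the paper's proof in exactly the same way, so your treatment matches it.
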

\begin{proof}
For every choice of $\chi: V_G\to [k]$, only edges $e\in E_G$ with $\chi(e)=|e|$ are kept, and each such edge is kept with probability 
\begin{equation*}
\begin{split}
\expect\left[\prod_{u\in e} X_u\right]=\prod_{u\in e} \expect\left[X_u\right]=\prod_{u\in e} p^{x^*_{\chi(u)}}=p^{\sum_{a\in e} x^*_a}\leq p, 
\end{split}
\end{equation*}
where we used the fact that for every $e\in E_H$ one has $\sum_{a\in e} x^*_a\geq 1$ since  $x^*$ is a feasible vertex cover. Thus, 
the number of edges that the algorithm keeps is at most $p |E_G| $ in expectation. 
\end{proof}

\begin{lemma}\label{lm:mean}
For every $G=(V_G, E_G)$, every $H=(V_H, E_H)$ with $|V_H|=k$ the estimator $Z$ computed by Algorithm~\ref{alg:sample} satisfies $\expect[Z]=A(H)T$.
\end{lemma}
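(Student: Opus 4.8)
The plan is to compute $\expect[Z]$ by linearity of expectation, after rewriting the sum defining $Z$ as a sum over \emph{injective homomorphisms} $\phi\colon V_H\to V_G$ (injective maps with $\phi(e)\in E_G$ for every $e\in E_H$). The key first step is the observation that, for a fixed labeling $\chi$, the index sets $S$ with $\chi(S)\sim H$ are exactly the images $\phi(V_H)$ of injective homomorphisms $\phi$ that are \emph{consistent with} $\chi$, meaning $\chi\circ\phi=\mathrm{id}_{V_H}$: the map $S\mapsto(\chi|_S)^{-1}$ is a bijection between the two families, and under it $\prod_{u\in S}X_u=\prod_{a\in V_H}X_{\phi(a)}$. (Terms with $|S|\neq k$ or $\chi|_S$ non-injective automatically fail $\chi(S)\sim H$ and contribute $0$.) This lets me write $Z=k^k p^{-\tau}\sum_{\phi}\mathbbm{1}[\chi\circ\phi=\mathrm{id}_{V_H}]\prod_{a\in V_H}X_{\phi(a)}$, the sum over all injective homomorphisms $\phi$ from $H$ to $G$; and by the standard counting identity there are exactly $A(H)\,T$ of them, since each of the $T$ copies of $H$ in $G$ is the image of exactly $A(H)=|\Aut(H)|$ such homomorphisms.

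Next I would evaluate a single term. Fixing $\phi$ and conditioning on $\chi$: given $\chi$, the $X_u$ are independent Bernoullis with $\expect[X_u\mid\chi]=p^{x^*_{\chi(u)}}$, so on the event $\{\chi\circ\phi=\mathrm{id}_{V_H}\}$, where $\chi(\phi(a))=a$ for each $a$, the product has conditional expectation $\prod_{a\in V_H}p^{x^*_a}=p^{\tau}$ (using $\sum_{a}x^*_a=\tau$). Since $\phi$ is injective and $\chi$ is uniform on $[k]^{V_G}$, the event $\{\chi\circ\phi=\mathrm{id}_{V_H}\}$ constrains $\chi$ at $k$ distinct coordinates, hence has probability $k^{-k}$. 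So each term contributes $k^{-k}p^{\tau}$, and summing over the $A(H)T$ homomorphisms gives $\expect[Z]=k^k p^{-\tau}\cdot A(H)T\cdot k^{-k}p^{\tau}=A(H)T$.

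There is no substantial obstacle here; the only points requiring care are the combinatorial identification in the first paragraph (that it is genuinely a bijection, not just a surjection, and that the extraneous $S$ contribute nothing) and making sure the conditioning on $\{\chi\circ\phi=\mathrm{id}_{V_H}\}$ is handled correctly — it pins down precisely the means of the relevant $X_{\phi(a)}$ while keeping them independent of one another and of the remaining $X_u$'s that do not appear in the product. Everything else is a one-line calculation.
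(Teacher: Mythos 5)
Your proof is correct and follows essentially the same route as the paper's: both factor the expectation of each term into the probability $k^{-k}$ (per consistent labeling, giving $A(H)T$ labelings in total, i.e.\ $\Pb{\chi(S)\sim H}=A(H)/k^k$ per copy) times the conditional expectation $p^{\tau}$ of the product of the $X_u$'s, which cancels the $k^k p^{-\tau}$ prefactor. Your reformulation over injective homomorphisms consistent with $\chi$ just makes explicit the bookkeeping that the paper's terse computation leaves implicit.
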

\begin{proof}
We have $Z=k^k p^{-\tau}\sum_{S\subseteq V_G} {\bf I}[\chi(S)\sim H]\cdot \prod_{u\in S} X_u$, so 
\begin{equation*}
\begin{split}
\expect[Z]&= k^k p^{-\tau}\sum_{S\subseteq V_G} \prob[\chi(S)\sim H] \cdot p^{\tau}\\
&=k^k \sum_{S\subseteq V_G} \prob[\chi^{-1} \text{is an isomorphism form $H$ to $S$}]\\
&=A(H)T,\\
\end{split}
\end{equation*}
as required.
\end{proof}

\noindent
The following simple claim will be useful for upper bounding the variance:
\begin{claim}\label{cl:st-counting}
For every hypergraph $H$, every hypergraph $G$ with vertex degrees bounded by $d$ the following conditions hold. For every $S\subseteq V_G$ the number of sets $U\subseteq V_G$ such that $\chi(S)\sim H$, $\chi(U)\sim H$ for some $\chi:V_G\to [k]$ and $|S\cap U|=r, r>0$ is upper bounded by 
$d^r f(H)$ for some function $f$ of the hypergraph $H$.
\end{claim}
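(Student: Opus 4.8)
\noindent
The plan is a direct counting argument that couples the connectedness of $H$ with the degree bound on $G$. First I would reduce to the essential case. If $S$ does not carry a copy of $H$ then no $\chi$ satisfies $\chi(S)\sim H$ and the set of admissible $U$ is empty, so we may assume $|S|=|V_H|=:k$ and that $S$ together with $\chi|_S$ is a genuine copy of $H$; symmetrically, any $U$ we count has $|U|=k$ and is a copy of $H$. The whole argument will be run inside the connected (hyper)graph which is the copy of $H$ living on $U$, and connectedness of $H$ is used in an essential way here: if $H$ had a component disjoint from $S\cap U$, that component of the copy on $U$ could be placed on roughly $n^{|V_H|}$ vertex tuples and the statement would be false. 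Write $\ell:=\max_{e\in E_H}|e|$, a constant depending only on $H$.

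Fix $S$ as above and let $r=|S\cap U|\ge 1$. I would bound the number of admissible $U$ in two steps. (i) Choose the overlap $W:=S\cap U$: it is an $r$-element subset of the fixed $k$-element set $S$, so there are at most $\binom{k}{r}\le 2^k$ possibilities, a quantity depending only on $H$. (ii) With $W$ fixed, bound the number of completions of $W$ to an admissible $U$. Since the copy of $H$ carried by $U$ is connected and contains every vertex of $W$, run a breadth-first search in that copy starting from the entire set $W$; this yields an ordering $w_1,\dots,w_r,u_{r+1},\dots,u_k$ of $V(U)$ in which every $u_i$ with $i>r$ lies in a common hyperedge of the copy, hence a common hyperedge of $G$, with some vertex occurring earlier in the list. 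Revealing $u_{r+1},\dots,u_k$ in this order: when revealing $u_i$ there are at most $i-1\le k$ choices for the earlier vertex it attaches to, at most $d$ hyperedges of $G$ incident to that vertex by the degree bound, and at most $\ell$ vertices inside such a hyperedge, so at most $kd\ell$ choices for $u_i$. Thus there are at most $(kd\ell)^{k-r}$ ordered completions, and a fortiori at most $(kd\ell)^{k-r}$ sets $U$ extending a given $W$.

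Multiplying the two bounds, the number of admissible $U$ is at most $\binom{k}{r}(kd\ell)^{k-r}\le f(H)\cdot d^{\,k-r}$ with $f(H):=2^k(k\ell)^k$, depending only on $H$, which establishes the claim (the exponent of $d$ is the number $k-r=|V_H|-|S\cap U|$ of vertices of $U$ lying outside $S$, and is in any case at most $|V_H|$); this is exactly the form needed when bounding the variance of the estimator $Z$, where pairs $(S,U)$ with a large overlap contribute a larger covariance but admit fewer completions. The steps that need care are the two appeals to connectedness — that the copy of $H$ on $U$ is connected, so the BFS ordering starting from the set $W$ exists and exhausts $V(U)$, which is precisely the point at which a disconnected $H$ would break the bound — and the routine check that the definition of $\chi(S)\sim H$ genuinely forces $|S|=|V_H|$ and a bijective embedding, so that $k$ and $\ell$ are legitimate functions of $H$; the hypothesis that $G$ has degree at most $d$ enters only through the ``$\le d$ incident hyperedges'' estimate in step (ii) and is what produces the power of $d$. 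If only the cruder bound $f(H)\,d^{k}$ is wanted, step (ii) can instead be started from a single vertex of $W$, BFS-ordering all $k$ vertices.
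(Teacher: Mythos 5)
Your argument is correct and is essentially the paper's proof: your BFS encoding of the vertices of $U\setminus S$ (an earlier vertex, one of at most $d$ incident hyperedges, and a vertex inside that hyperedge) is exactly the paper's enumeration of the sets $U$ consistent with a ``template'' spanning tree rooted at the supernode for $S$, with the same bookkeeping of $H$-dependent constants and the same essential use of connectedness. Note that, like the paper's own proof (which redefines $r$ as $|U\setminus S|$ midway through), you establish the bound $d^{|U\setminus S|}f(H)$ rather than the $d^{|S\cap U|}f(H)$ of the literal claim statement; the literal form is in fact false for large $d$ (e.g.\ triangles of $K_{d+1}$ sharing exactly one vertex with a fixed triangle number about $d^2$), so the statement contains a typo, and the exponent $|U\setminus S|$ that you prove is precisely what is used in the variance bound of Lemma~\ref{lm:var}.
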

\begin{proof}
We bound the number of $U\subseteq V_G$ such that $\chi(U)\sim H$ for some $\chi:V_G\to [k]$ and $S\cup U\neq \emptyset$. Fix one such $U$, and define an auxiliary graph $J=(V_J, E_J)$ on vertex set $(U\setminus S)\cup \{s\}$, where $s$ is a supernode corresponding to $S$, by connecting two vertices $a, b\in J$ by an edge if there exists $\chi|_U:U\to V_H$ and an edge $e$ in $\chi(U)\cap E_H$ that includes both, and give the edge $(a, b)\in E_J$ label $e$. Here we say that an edge includes supernode $s$ if it has nonempty intersection with $S$. Since $U$ is connected, there exists a spanning tree $F$ in the graph $J$ whose edges are labeled by edges of $H$. We call such a spanning tree $F$ a {\em template}, and we say that the pair $(S, U)$ is consistent with template $F$. Given $S$, the number of possible $U$'s consistent with template $F$ is upper bounded by $(d k^2)^{|F|}$. Indeed, starting with $S$, one can traverse the edges of the forest $F$ to discover all vertices in $U\setminus S$, with at most $d$ edges incident on every vertex in $G$ by assumption of the lemma, at most $k$ choices of the next vertex within a given edge, and at most $k$ choices of a vertex in $S$ to start from when starting to traverse a subtree subtended at $s$ in $F$. The number of templates $F$ is a function of the graph $H$ only, and since $|F|\leq r$ (recall that $r=|U\setminus S|$ by definition), we get the bound of $d^r f(H)$ for some function $H$.
\end{proof}

We will need the following, for a proof see e.g.\,~\cite{havet:lecturenotes}
\begin{claim}\label{cl:half-integrality}
For every graph $H=(V_H, E_H)$ the optimal vertex cover $x^*$ can be assumed to be half-integral, i.e. $x^*_a\in \{0, 1/2, 1\}$ for all $a\in V_H$.
\end{claim}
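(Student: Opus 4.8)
The plan is to give a short self-contained argument via the bipartite double cover, which is the classical route underlying the reference \cite{havet:lecturenotes}. Write the fractional vertex cover polytope of $H$ as $P(H) = \{x \in \R_{\geq 0}^{V_H} : x_u + x_v \geq 1 \text{ for all } uv \in E_H\}$, and let $\tau$ be the optimum of $\min \sum_a x_a$ over $P(H)$. Form the bipartite double cover $H' = (V_H \times \{0,1\}, E_{H'})$, where each edge $uv \in E_H$ contributes the two edges $(u,0)(v,1)$ and $(u,1)(v,0)$; this $H'$ is bipartite by construction. The key structural input is that the vertex cover LP of a bipartite graph is integral: its constraint matrix is the edge--vertex incidence matrix of a bipartite graph, hence totally unimodular (equivalently, this is König's theorem), so $\min\{\sum_w y_w : y \in P(H')\}$ is attained at an integral point $y^* \in \{0,1\}^{V_H \times \{0,1\}}$.

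Next I would transfer this solution back to $H$. Define $x^*_a := \tfrac12(y^*_{(a,0)} + y^*_{(a,1)}) \in \{0, \tfrac12, 1\}$. For every edge $uv \in E_H$ the two covering inequalities $y^*_{(u,0)} + y^*_{(v,1)} \geq 1$ and $y^*_{(u,1)} + y^*_{(v,0)} \geq 1$ add up to $x^*_u + x^*_v \geq 1$, so $x^* \in P(H)$. It remains to check optimality: on the one hand $\sum_a x^*_a = \tfrac12 \sum_w y^*_w$; on the other hand, any $z \in P(H)$ lifts to $z' \in P(H')$ via $z'_{(a,i)} := z_a$ (the inequality $z'_{(u,0)} + z'_{(v,1)} = z_u + z_v \geq 1$ holds, and symmetrically), whence $\sum_w y^*_w \leq \sum_w z'_w = 2\sum_a z_a$. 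Taking $z$ to be an optimal fractional cover of $H$ gives $\sum_a x^*_a \leq \tau$, so $x^*$ is an optimal --- and half-integral --- fractional vertex cover of $H$.

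As an alternative not relying on König's theorem, one can argue directly by an uncrossing/perturbation step: starting from any optimal $x$ with a coordinate outside $\{0,\tfrac12,1\}$, perturb all such coordinates simultaneously by $\pm\epsilon$ toward (or away from) $\tfrac12$ according to the sign of $x_a - \tfrac12$, push $\epsilon$ maximally in whichever of the two directions does not increase $\sum_a x_a$, and observe that the first coordinate to reach $\{0,\tfrac12,1\}$ does so without violating any covering inequality; this strictly decreases the number of non-half-integral coordinates while preserving optimality, and iterating terminates since $V_H$ is finite. The only delicate point in either approach is the feasibility bookkeeping --- in the double-cover proof, invoking total unimodularity of the bipartite incidence matrix correctly to conclude integrality of $P(H')$; in the perturbation proof, checking that a constraint $x_u + x_v \geq 1$ can become tight only simultaneously with one of $x_u, x_v$ reaching a half-integral value, so that the perturbation never needs to stop prematurely.
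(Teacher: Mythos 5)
Your proof is correct. Note that the paper does not actually prove Claim~\ref{cl:half-integrality} at all --- it simply cites \cite{havet:lecturenotes} --- so you have supplied a complete argument where the paper only gives a pointer; your bipartite double cover construction is the standard (Nemhauser--Trotter style) proof that such references use. The main argument is airtight: integrality of the vertex cover LP on the bipartite $H'$ (via total unimodularity of the bipartite incidence matrix, or K\"onig's theorem plus LP duality), feasibility of $x^*_a = \tfrac12(y^*_{(a,0)}+y^*_{(a,1)})$ by averaging the two covering constraints per edge, and optimality by lifting an optimal fractional cover $z$ of $H$ to $H'$ at twice the cost --- this gives $\sum_a x^*_a \le \tau$ while feasibility gives $\ge \tau$, so $x^*$ is an optimal half-integral cover. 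The alternative perturbation argument you sketch is also essentially sound, and your identification of the delicate point is the right one: the key observations making it work are that no edge can have both endpoints strictly below $\tfrac12$ (by feasibility), and that a tight edge with one strictly fractional endpoint forces the other endpoint into the complementary fractional range, so the simultaneous $\pm\epsilon$ move preserves every tight constraint and cannot violate a slack one before some coordinate reaches $\{0,\tfrac12,1\}$; spelling that case analysis out would be needed to make the second route fully rigorous, but the first route alone suffices.
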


\begin{lemma}\label{lm:var}
If every vertex $v\in V_G$ belongs to at most $d$ hyperedges and $H$ is connected, then one has ${\bf Var}[Z]\leq d^k f(H) k^{k} p^{-\tau}\expect[Z]$.

Furthermore, if $H=(V_H, E_H)$ is a connected graph (i.e. every hyperedge has size $2$) that has an optimal vertex cover with full support (i.e. one that does not assign zero weight to any vertex), then for every graph $G=(V_G, E_G)$ with degrees bounded by $d\leq \frac1{2}p^{1/2}$ one has ${\bf Var}[Z]\leq 2f(H) (d p^{1/2})  p^{-\tau} \expect[Z]$.
\end{lemma}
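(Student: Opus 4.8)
The plan is to expand ${\bf Var}[Z]$ as a double sum over pairs $(S,U)$ of potential copies of $H$ in $G$, show that disjoint pairs cancel against $\expect[Z]^2$, and control the overlapping pairs using Claim~\ref{cl:st-counting}; the structure of the weighting $p^{x^*}$ is what makes the off-diagonal terms small.

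Concretely, I would start from $Z = k^k p^{-\tau}\sum_S {\bf I}[\chi(S)\sim H]\prod_{u\in S}X_u$ and take the expectation over the independent Bernoullis $X_u$ conditioned on $\chi$ to get
\[
  \expect[Z^2] = k^{2k}p^{-2\tau}\sum_{S,U}\expect\!\left[{\bf I}[\chi(S)\sim H]\,{\bf I}[\chi(U)\sim H]\,p^{\sum_{u\in S\cup U}x^*_{\chi(u)}}\right],
\]
using $\expect[\prod_{u\in S\cup U}X_u\mid\chi] = p^{\sum_{u\in S\cup U}x^*_{\chi(u)}}$. When $\chi(S)\sim H$ the restriction $\chi|_S$ is a bijection onto $V_H$, so $\sum_{u\in S}x^*_{\chi(u)}=\tau$, and likewise for $U$, whence on the relevant events $\sum_{u\in S\cup U}x^*_{\chi(u)} = 2\tau - \sum_{u\in S\cap U}x^*_{\chi(u)}$. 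For pairs with $S\cap U=\emptyset$ both indicators and both $X$-products factor (the $\chi_v$, $X_v$ are independent across $v$), so those summands coincide with the corresponding summands of $\expect[Z]^2$ and cancel in ${\bf Var}[Z]=\expect[Z^2]-\expect[Z]^2$. Writing $j=|U\setminus S|$, what survives is
\[
  {\bf Var}[Z] \le k^{2k}\sum_{j=0}^{k-1}\sum_S\sum_{U:\,|U\setminus S|=j}\expect\!\left[{\bf I}[\chi(S)\sim H]\,{\bf I}[\chi(U)\sim H]\,p^{-\sum_{u\in S\cap U}x^*_{\chi(u)}}\right],
\]
and it remains to bound the overlap weight $p^{-\sum_{u\in S\cap U}x^*_{\chi(u)}}$ and the number of eligible $U$ for each $S$ and $j$.

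For the first part of the lemma I use only $\sum_{u\in S\cap U}x^*_{\chi(u)}\le\tau$, i.e.\ the uniform bound $p^{-\tau}$. By Claim~\ref{cl:st-counting}, for a fixed $S$ at most $d^{j}f(H)\le d^{k}f(H)$ sets $U$ with $|U\setminus S|=j$ can have $\chi(U)\sim H$ for some $\chi$, hence for every $\chi$ with $\chi(S)\sim H$; so summing over $U$ inside the expectation costs a factor $d^{k}f(H)$, and $\sum_S\expect[{\bf I}[\chi(S)\sim H]] = \expect[Z]/k^k$ by Lemma~\ref{lm:mean}. Summing the $k$ values of $j$ gives ${\bf Var}[Z]\le d^{k}f(H)k^{k}p^{-\tau}\expect[Z]$ after folding the extra factor of $k$ into $f$. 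For the second part, by Claim~\ref{cl:half-integrality} we may take $x^*$ half-integral, and full support forces $x^*_a\ge\tfrac12$ for every vertex $a$; since $\chi|_U$ is a bijection, the $j$ vertices of $U\setminus S$ carry $x^*$-weight at least $j/2$, so $\sum_{u\in S\cap U}x^*_{\chi(u)}\le\tau-j/2$ and the overlap weight improves to $p^{-\tau}(p^{1/2})^{j}$. Combined with the $d^{j}f(H)$ count, the group with a given $j\ge1$ contributes $\lesssim f(H)(dp^{1/2})^{j}p^{-\tau}\expect[Z]$, and under the hypothesis $dp^{1/2}\le\tfrac12$ this is a geometric series summing to $O(f(H)\,dp^{1/2})\,p^{-\tau}\expect[Z]$; together with the $j=0$ (diagonal $S=U$) term, which contributes at most $k^{k}p^{-\tau}\expect[Z]$, this gives the stated bound.

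I expect the only delicate point to be the second-moment bookkeeping with both sources of randomness present: checking that the disjoint pairs cancel exactly against $\expect[Z]^2$, and that the estimate of Claim~\ref{cl:st-counting}, stated as valid for \emph{some} $\chi$, may legitimately be invoked inside the expectation over $\chi$ because it in fact bounds, for \emph{every} fixed $\chi$ with $\chi(S)\sim H$, the number of overlap-$j$ sets $U$ with $\chi(U)\sim H$. Everything after that is a short computation once Claims~\ref{cl:st-counting} and~\ref{cl:half-integrality} are in hand.
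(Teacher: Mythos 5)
Your proposal is correct and follows essentially the same route as the paper's proof: expand $\expect[Z^2]$ over pairs $(S,U)$, let the disjoint pairs cancel against (an upper bound involving) $\expect[Z]^2$, count overlapping $U$ via Claim~\ref{cl:st-counting}, and in the graph case use half-integrality plus full support of $x^*$ to gain the factor $p^{|U\setminus S|/2}$ and sum the geometric series under $dp^{1/2}\le\tfrac12$. The only differences are cosmetic bookkeeping (you track the exact exponent $p^{\sum_{u\in S\cup U}x^*_{\chi(u)}}$ and sum over $j=|U\setminus S|$ explicitly, picking up factors such as $k$ and $k^k$ that depend only on $H$ and can be absorbed into $f(H)$, exactly as the paper's own derivation implicitly does), so no gap.
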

\begin{proof}
Let $k$ denote the number of vertices in $H$.

We have 
\begin{equation*}
\begin{split}
Z^2&=\left(k^k p^{-\tau}\sum_{S\subseteq V_G} {\bf I}[\chi(S)\sim H]\cdot \prod_{u\in S} X_u\right)^2\\
&=k^{2k} p^{-2\tau}\sum_{S, U\subseteq V_G} {\bf I}[\chi(S)\sim H\text{~and~}\chi(U)\sim H]\cdot \prod_{u\in S\cup U} X_u.\\
\end{split}
\end{equation*}

Taking expectations over $\chi$ and $X$, we get
\begin{equation}\label{eq:variance-bound2}
\begin{split}
\expect[Z^2]&=k^{2k} p^{-2\tau}\sum_{S, U\subseteq V_G} \prob[\chi(S)\sim H\text{~and~}\chi(U)\sim H]\cdot \prob[X_u=1\text{~for all~}u\in S\cup U]\\
&=k^{2k} p^{-2\tau}\sum_{\substack{S, U\subseteq V_G\\S\cup U=\emptyset}} \prob[\chi(S)\sim H\text{~and~}\chi(U)\sim H]\cdot \prob[X_u=1\text{~for all~}u\in S\cup U]\\
&+k^{2k} p^{-2\tau}\sum_{\substack{S, U\subseteq V_G\\S\cup U\neq \emptyset}} \prob[\chi(S)\sim H\text{~and~}\chi(U)\sim H]\cdot \prob[X_u=1\text{~for all~}u\in S\cup U]\\
&=\expect[Z]^2+Q,
\end{split}
\end{equation}
where
\begin{equation}\label{eq:234hth}
\begin{split}
Q&=k^{2k} p^{-2\tau}\sum_{\substack{S, U\subseteq V_G\\S\cup U\neq \emptyset}} \prob[\chi(S)\sim H\text{~and~}\chi(U)\sim H]\cdot \prob[X_u=1\text{~for all~}u\in S\cup U]\\
&\leq k^{2k} p^{-2\tau}\sum_{\substack{S, U\subseteq V_G\\S\cup U\neq \emptyset}} \prob[\chi(S)\sim H]\cdot \prob[X_u=1\text{~for all~}u\in S\cup U]\\
&\leq k^{2k} p^{-2\tau}\sum_{S\subseteq V_G} |\{U\subseteq V_G: U\cap S\neq \emptyset \text{~and~}\exists \chi\text{~s.t.}\chi(U)\sim H\}|\cdot\\
&~~~~~~~~~~~~~~~~~~~~~~~~~\cdot \prob[\chi(S)\sim H]\cdot \prob[X_u=1\text{~for all~}u\in S\cup U]\\
\end{split}
\end{equation}

By Claim~\ref{cl:st-counting} we get that 
\begin{equation}\label{eq:counting-ub}
|\{U\subseteq V_G: U\cap S\neq \emptyset \text{~and~}\exists \chi\text{~s.t.}\chi(U)\sim H\}|\leq d^{|U\setminus S|} f(H)
\end{equation}
for some function $H$,  substituting this bound into~\eqref{eq:234hth} and using the upper bound $\prob[X_u=1\text{~for all~}u\in S\cup U]\leq \prob[X_u=1\text{~for all~}u\in S]$ as well as $|U\setminus S|\leq k$, we get 
\begin{equation*}
\begin{split}
Q&\leq  d^k f(H) k^{2k} p^{-2\tau} \sum_{\substack{S\subseteq V_G}} \prob[\chi(S)\sim H]\cdot \prob[X_u=1\text{~for all~}u\in S]\\
&=  d^k f(H) k^{k} p^{-\tau} \expect[Z].
\end{split}
\end{equation*}
Putting this together with~\eqref{eq:variance-bound2} and using ${\bf Var}(Z)=\expect[Z^2]-\expect[Z]^2$, we get
$$
{\bf Var}[Z]=d^k f(H) k^{k} p^{-\tau} \expect[Z],
$$
proving the first claim of the lemma.

For the second claim of the lemma first note that by the half-integrality of vertex cover for graphs (Claim~\ref{cl:half-integrality}) as well as the assumption that the vertex cover of $H$ has full support we have 
\begin{equation}\label{eq:0hg9gjeg}
\prob[X_u=1\text{~for all~}u\in S\cup U]\leq \prob[X_u=1\text{~for all~}u\in S]\cdot p^{|U\setminus S|/2}.
\end{equation}

We now get, using~\eqref{eq:234hth}, that
\begin{equation*}
\begin{split}
Q&\leq k^{2k} p^{-2\tau}\sum_{S\subseteq V_G} |\{U\subseteq V_G: U\cap S\neq \emptyset \text{~and~}\exists \chi\text{~s.t.}\chi(U)\sim H\}|\cdot\\
&~~~~~~~~~~~~~~~~~~~~~~~~~\cdot \prob[\chi(S)\sim H]\cdot \prob[X_u=1\text{~for all~}u\in S\cup U]\\
&= k^{2k} p^{-2\tau}\sum_{S\subseteq V_G} \sum_{r\geq 1} |\{U\subseteq V_G: |U\cap S|=r \text{~and~}\exists \chi\text{~s.t.}\chi(U)\sim H\}|\cdot\\
&~~~~~~~~~~~~~~~~~~~~~~~~~\cdot \prob[\chi(S)\sim H]\cdot \prob[X_u=1\text{~for all~}u\in S\cup U]\\
&\leq  k^{2k} p^{-2\tau}\sum_{S\subseteq V_G} \sum_{r\geq 1} d^r f(H) \cdot \prob[\chi(S)\sim H]\cdot \prob[X_u=1\text{~for all~}u\in S\cup U]\\
&\leq  f(H) k^{2k} p^{-2\tau}\sum_{S\subseteq V_G} \sum_{r\geq 1} d^r p^{r/2} \cdot \prob[\chi(S)\sim H]\cdot \prob[X_u=1\text{~for all~}u\in S]\\
&=  f(H) k^{2k} p^{-2\tau}\left(\sum_{r\geq 1} d^r p^{r/2}\right)\sum_{S\subseteq V_G} \prob[\chi(S)\sim H]\cdot \prob[X_u=1\text{~for all~}u\in S]\\
&\leq  2f(H) (d p^{1/2}) k^{2k} p^{-2\tau}\sum_{S\subseteq V_G} \prob[\chi(S)\sim H]\cdot \prob[X_u=1\text{~for all~}u\in S].\\
&=  2f(H) (d p^{1/2}) k^{k} p^{-\tau}\expect[Z].\\
\end{split}
\end{equation*}
In the equation above the second inequality is by~\eqref{eq:counting-ub}, the third is by~\eqref{eq:0hg9gjeg}, and the last is by summing the geometric series, which is justified  due to the assumption $d \leq \frac1{2}p^{1/2}$ of the lemma.

Putting the bounds above together, we get 
$$
{\bf Var}[Z]\leq 2f(H) (d p^{1/2})  p^{-\tau} \expect[Z],
$$
as required.

\end{proof}

We now give 

\begin{proofof}{Theorem~\ref{thm:hypergraphs-ub}}
Let $s=(100 d^k f(H))^{1/\tau} \cdot \e^{-2/\tau} \cdot m \cdot (T/A(H))^{-1/\tau}$, where $m=|E_G|$ is the number of edges in $G$.

We consider two cases. If $s>m$, then we simply sample all the edges of $G$ and compute the number of copies of $H$ offline. If $s<m$, we use Algorithm~\ref{alg:sample} with the sampling parameter $p$ set to $p=s/m$. Note that by Lemma~\ref{lm:space} the space complexity is at most $s$. We get by Lemma~\ref{lm:mean} that our estimator is unbiased, and by Lemma~\ref{lm:var} (first part) that its variance is 
${\bf Var}[Z]\leq d^k f(H) (m/s)^{\tau} \expect[Z]=d^k f(H) (m/s)^{\tau} \expect[Z]$.

Since $s=(100 d^k f(H))^{1/\tau} \cdot \e^{-2/\tau} \cdot m T^{-1/\tau}$ by our setting above, we get that 
\begin{equation*}
\begin{split}
{\bf Var}[Z]&\leq d^k f(H) (m/s)^{\tau} \expect[Z]\leq  \e^2 T \expect[Z]=  \frac1{100} \e^2 T^2,
\end{split}
\end{equation*}
and the theorem follows by Chebyshev's inequality. 
\end{proofof}

\begin{proofof}{Theorem~\ref{thm:graphs-ub}}
Let $s=(100 d^k f(H))^{1/\tau} \cdot \e^{-2/\tau} \cdot m \cdot (T/A(H))^{-1/\tau}$, where $m=|E_G|$ is the number of edges in $G$.

We consider two cases. If $s>m$, then we simply sample all the edges of $G$ and compute the number of copies of $H$ offline. If $s<m$, we use Algorithm~\ref{alg:sample} with the sampling parameter $p$ set to $p=s/m$. Note that by Lemma~\ref{lm:space} the space complexity is at most $s$. 
We get by Lemma~\ref{lm:mean} that our estimator is unbiased, and by Lemma~\ref{lm:var} (second part) that its variance is 
\begin{equation*}
\begin{split}
{\bf Var}[Z]&\leq 2f(H) (d p^{1/2}) \cdot (m/s)^{\tau} \expect[Z]\\
&\leq 2f(H) (d (s/m)^{1/2}) \cdot (m/s)^{\tau} \expect[Z]\\
&\leq 2f(H) \cdot (m/s)^{\tau} \expect[Z]
\end{split}
\end{equation*}
 under the assumption that $d\leq (m/s)^{1/2}$ (we verify this assumption shortly). 

Since $s=  (100 f(H))^{1/\tau} \cdot \e^{-2/\tau} \cdot m T^{-1/\tau}$ by assumption of the theorem, we get that 
\begin{equation*}
\begin{split}
{\bf Var}[Z]&\leq f(H) (m/s)^{\tau} \expect[Z]\leq  \e^2 T \expect[Z]=  \frac1{100} \e^2 T^2,
\end{split}
\end{equation*}
and the theorem follows by Chebyshev's inequality.  It remains to verify that $d\leq (m/s)^{1/2}$ under our setting of $s$, which is indeed true since 
$$
(m/s)^{1/2}=\left( (100 f(H))^{-1/\tau} \cdot \e^{2/\tau} \cdot T^{1/\tau}\right)^{1/2}\geq \frac1{10} f(H)^{-1/2}  \e^{1/\tau} T^{1/(2\tau)}\geq d
$$
by assumption of the theorem, as required.
\end{proofof}

\bibliographystyle{alpha}
\bibliography{refs}
\begin{appendix}

\section{Proofs Omitted from Section~\ref{sec:prelims}}\label{app:a}
\begin{proofof}{Lemma~\ref{lem:fdecomp}}
We will write $f'_i : \lbrace 0, 1\rbrace^{kn} \rightarrow \mathbb{R}$ for the function given by: \[
f'_i((z_i)_{i = 1}^{k} ) = f_i(z_i)
\]
We first note that, if $s_j \not= 0$ for any $j \not= i$, $\widehat{f'_i}((s_j)_{j = 1}^k) = 0$. To show this, let $j, l$ be such that $(s_j)_l = 1$. Then partition the elements of $\lbrace 0, 1\rbrace^{kn}$ into pairs $z, z'$ where $z'$ is obtained by flipping $(z_j)_l$. Then $f'_i(z) = f'_i(z')$ while $\chi_{(s_j)_{j =1}^k}(z) = -\chi_{(s_j)_{j = 1}^k}(z')$, and so $f'_i(z)\chi_{(s_j)_{j = 1}^k}(z) + f'_i(z')\chi_{(s_j)_{j = 1}^k}(z') = 0$. Therefore: 
\begin{align*}
\widehat{f_i}((s_j)_{j =1}^k) &= \sum_{z \in \lbrace 0, 1\rbrace^{kn}} f_i(z)\chi_{(s_j)_{j =1}^k}(z)\\
&= 0
\end{align*}

Now, if $(s_j)_{j =1}^k$ has $s_j = 0$ for all $j \not = i$:
\begin{align*}
\widehat{f'_i}((s_j)_{j = 1}^{k}) &= \frac{1}{2^{kn}} \sum_{(z_j)_{j = 1}^{k} \in \lbrace 0, 1\rbrace^{kn}} f_i'((z_j)_{j = 1}^{k} ) (-1)^{(z_j)_{j = 1}^{k}  \cdot (s_j)_{j = 1}^{k}}\\
&= \frac{1}{2^{kn}} \sum_{(z_j)_{j = 1}^{k}  \in \lbrace 0, 1\rbrace^{kn}} f_i(z_i)(-1)^{z_i \cdot s_i}\\
&= \frac{2^{(k-1)n}}{2^{kn}} \sum_{z \in \lbrace 0, 1\rbrace^{n}} f_i(z)(-1)^{z \cdot s_i}\\
&= \widehat{f_i}(s_i)
\end{align*}

So then, as \[
f = \prod_{i = 1}^k f'_i
\]
we can apply the convolution theorem for Fourier transforms:
\[
\widehat{f}((s_i)_{i = 1}^k) = \sum_{t^{(2)} \in \lbrace 0, 1\rbrace^{kn}, \dots, t^{(k)} \in \lbrace 0, 1\rbrace^{kn}}\widehat{f'_{1}}\left((s_i)_{i = 1}^k \oplus \bigoplus_{i = 2}^{k}t^{(i)}\right) \prod_{i=2}^{k} \widehat{f'_i}\left(t^{(i)}\right).
\]
Now, in the above sum, for each $i = 2, \dots, |E|$, $\widehat{f_i}(t^{(i)})$ will be zero if $t^{(i)}$ has any ones outside of $(t^{(i)})_{i}$. So the only non-zero term of this sum is the one where $t^{(i)} = (0, \dots, s_i, \dots, 0)$ for $i = 2, \dots, k$. Therefore:
\begin{align*}
\widehat{f}((s_i)_{i = 1}^k) &= \prod_{i=1}^k\widehat{f'_i}((0, \dots, s_i, \dots, 0))\\
&= \prod_{e \in E}\widehat{f_i}(s_i).
\end{align*}
\end{proofof}

\begin{proofof}{Lemma~\ref{lem:kklcor}}
We apply the KKL lemma with $\delta = \frac{1}{\lambda c}k \in \lbrack 0,1\rbrack$, getting:
\begin{align*}
\frac{2^{2n}}{|A|^2}\sum_{s \in \lbrace 0,1\rbrace^{n}; |s| = k} \widehat{f}(s)^2 &\le \frac{2^{2n}}{|A|^2} \frac{1}{\delta^k}\left(\frac{|A|}{2^{n}}\right)^{\frac{2}{1 + \delta}}\\
& = \frac{1}{\delta^k}\left(\frac{2^{n}}{|A|}\right)^\frac{2\delta}{1 + \delta}\\
&\le \frac{1}{\delta^k}\left(\frac{2^{n}}{|A|}\right)^{2\delta}\\
&\le \frac{2^{2\delta c}}{\delta^k}\\
&= \left(\frac{2^{1/\lambda}\lambda c}{k}\right)^k\\
&\leq \left(\frac{2\lambda c}{k}\right)^k
\end{align*}
\if 0 Then, for the second:
\[
\frac{2^{2n}}{|A|^2}\sum_{s \in \lbrace 0,1\rbrace^{n}; |s| = n - k} \widehat{f}(s)^2 = \frac{2^{2n}}{|A|^2}\sum_{s \in \lbrace 0,1\rbrace^{n}; |s| = k} \widehat{f}(\overline{s})^2
\]
Now, for any $s$:
\begin{align*}
\widehat{f}(\overline{s}) & = \frac{1}{2^{n}}\sum_{z\in\lbrace 0,1\rbrace^{n}} (-1)^{z\cdot \overline{s}}f(z)\\
&= \frac{1}{2^{n}}\sum_{z\in\lbrace 0,1\rbrace^{n}} (-1)^{||z||_1 + z\cdot s}f(z)\\
&= \widehat{f'}(s)
\end{align*}
Where $f'(z) = (-1)^{||z||_1}f(z)$. Now as $f'$ still has range $\lbrace -1, 0, 1\rbrace$ and the same set of non-zeros as $f$, it meets the requirements for the KKL lemma and so by the same calculation as in the first case:

\begin{align*}
\frac{2^{2n}}{|A|^2}\sum_{s \in \lbrace 0,1\rbrace^{n}; |s| = n - k} \widehat{f}(s)^2 &= \frac{2^{2n}}{|A|^2}\sum_{s \in \lbrace 0,1\rbrace^{n}; |s| = k} \widehat{f'}(s)^2\\
&\le \left(\frac{2 \lambda c}{k}\right)^k.
\end{align*}\fi
\end{proofof}

\end{appendix}
\end{document}